\newtheorem{theorem}{Theorem}
\newtheorem{lemma}{Lemma}
\begin{document}
\title{\sf  Bollinger Bands Thirty Years Later }
\author{Mark Leeds}
\date{{\today}} 

\maketitle


\vspace{-5ex}
\begin{abstract}
\noindent The goal of this study is to explain and examine the statistical underpinnings of the Bollinger Band methodology. We start off by
elucidating the rolling regression time series model and deriving its explicit relationship to Bollinger Bands. Next we illustrate the
use of Bollinger Bands in pairs trading and prove the existence of a specific return duration relationship in Bollinger
Band pairs trading\cite{INV2007}. Then by viewing the Bollinger Band moving average as an approximation to the random walk plus noise (RWPN) 
time series model, we develop a pairs trading variant that we call ``Fixed Forecast Maximum Duration' Bands'' (FFMDPT). Lastly, we conduct 
pairs trading simulations using SAP and Nikkei index data in order to compare the performance of the variant with Bollinger Bands.

\noindent\textbf{Keywords:} Bollinger Bands, pairs trading, time series models
\end{abstract}


\clearpage
\section{Introduction}\label{S:bbdesc}
Developed by John Bollinger in the early 1980's, the Bollinger Band methodology is a frequently used tool in the analysis of financial markets. 
Traders frequently use the outputs of Bollinger Bands in conjunction with other technical indicators in order to choose the position to take in the asset being monitored. Although Bollinger Bands are a common tool for analyzing asset behavior, the
Bollinger Band components have generally been viewed as outputs of an algorithm rather than as estimates of the parameters of a statistical model. 
More details about the history and development of Bollinger Bands can be found in \cite{JB01} and \cite{JB02}. 

\noindent A basic explanation of the Bollinger Band construction follows. Given a time series $y_{t}$ at $t=t^{*}$, define the n day rolling moving average of the series as $mave_{t^{*}}$:
\begin{equation}\label{E:bbmave}
mave_{t^{*}} = \sum_{t= t^{*} - n+1}^{t=t^{*}} y_{t}/n  ~~~~~ t^{*} = n,\ldots, T
\end{equation}

\noindent Note that, because the moving average uses $n$ data points, the first time $t^{*}$ at which the $mave_{t^{*}}$ can be calculated is at 
$t^{*} = n$.
\noindent Similarly, the $n$ day rolling variance at time $t = t^{*},~\sigma^2_{t^{*}}$, is defined as:
\begin{equation}\label{E:bbvar}
\hat{\sigma}^2_{t^{*}} \footnote{Note that this is the formula used to obtain an unbiased estimate of the unknown variance, $\sigma^2_{t^{*}}$. Another common way of defining the 
variance, $\hat{\sigma^2}_{t^{*}}$, is to use $n$ in the denominator rather than $(n-1)$. The results that follow are dependent on the denominator in equation~(\ref{E:bbvar}) being defined  
as (n-1).} = \sum_{t= t^{*} - n+1}^{t=t^{*}}(y_{t} - mave_{t^{*}})^{2}/(n-1)  ~~~~~ t^{*} = n,\ldots, T
\end{equation}
\noindent 
Then, given the relations above, the Bollinger Band components are constructed using a center line and an upper and lower band defined 
respectively as:
\begin{equation*}
CL_{t^{*}} = mave_{t^{*}}  
\end{equation*}
\begin{equation*}
BBupper_{t^{*}} = mave_{t^{*}} + k * \hat{\sigma}_{t^{*}}
\end{equation*}
\noindent and
\begin{equation*}
BBlower_{t^{*}} = mave_{t^{*}} - k * \hat{\sigma}_{t^{*}}
\end{equation*}
\noindent where $k$ is referred to as the width multiplier and represents the distance in standard deviation units from the center line to 
each band. An example of the Bollinger Band construction is shown in Figure~\ref{FIG:boll_pic1} in Appendix A on page \pageref{FIG:boll_pic1}.

\newpage

\noindent The use of the moving average for the center line has generally been viewed by market technicians as a low pass filter for the time series being monitored. By calculating 
the moving average of the actual series and plotting the resulting series, the high frequency component is eliminated from the original series and only the trend remains. The upper and lower band 
calculations use this trend as an input and they are useful for developing indicator rules such as ``when the price crosses $BBUpper$ and the RSI is above X, this indicates that
the price is expected to \ldots ``. As far as the the origin of the dispersion component is concerned, many different types of bands were experimented with before John Bollinger 
came up with the idea of using the sample standard deviation, $\hat\sigma$, as the measure of the current dispersion of the time series. The details of his discovery are captured quite 
vividly in \cite{JB01} and are left for the reader to explore. 

\noindent The goal of this study is to make connections between Bollinger Bands and time series models and show how these connections 
can lead to useful statistical insights. The first connection shows that although Bollinger Bands are generally viewed as a somewhat
ad-hoc algorithm that generate outputs used as trading indicators, they actually  have strong statistical foundations. The second
connection provides an alternative way of viewing the Bollinger Band pairs trading algorithm and leads to an interesting Bollinger 
Band variant. 

\section{Bollinger Band Literature}

\noindent The literature with respect to Bollinger Bands simulations is quite vast. Butler and Kazakov \cite{BK10} apply swarm optimization techniques to search for 
optimal Bollinger Band Bollinger parameters. The optimizations are done with respect to the profit and loss of Bollinger Band pairs trading strategies.\footnote{The 
application of Bollinger Bands to pairs trading will be discussed in detail in Section~\ref{S:BBPT}.} Similarly, Ni and Zhang \cite{NZ2007} use genetic algorithms 
to find the optimal Bollinger Band window length and band width jointly.  The research regarding variations on Bollinger Bands is less 
plentiful. Oleksiv  \cite{OL10} uses different 
algorithms for the construction of the bands including kriging, a method more common in geostatistics. Chande \cite{TSC92} uses an exponentially weighted moving 
average as a low pass filter for prices and adjusts the smoothing parameter dynamically based on the volatility of prices. Finally, Tilley \cite{DLT98} combines the 
moving average with the concept of support and resistance in order to switch between emerging markets funds and small cap funds to and from the SAP 500.

\noindent The rest of this article is organized as follows. In Section \ref{S:RR} we demonstrate an equivalence between Bollinger Bands and the rolling regression time series model. 
In Section \ref{S:BBPT} we describe how Bollinger Bands can be used in pairs trading as a mechanism for capturing the mean reversion behavior expected in the asset 
pair being traded. In Section \ref{S:FFMDPT} , we make a connection between Bollinger Bands and a state space model called the random walk plus noise model. This 
connection provides another approximate statistical framework for Bollinger Bands and leads to a variant of Bollinger Bands called Fixed Forecast Maximum Duration 
Bands. We then construct a pairs trading simulation in order to compare the out of sample performance of the Bollinger Bands pairs trading strategy (BBPT) and the 
Fixed Forecast Maximum Duration pairs trading strategy (FFMDPT). Finally, in Section \ref{S:FUTRES}, we summarize our findings and provide suggestions for future 
research areas.


\section{Bollinger Bands as a Rolling Regression Time Series Model}\label{S:RR}
 In order to develop a connection between Bollinger Bands and the rolling regression time series model, we first need to describe the latter in precise 
detail.\footnote{The originator of the rolling regression model is not known by the author but its popularity is most likely due to Fama and MacBeth \cite{FM73}.}

\subsection{The Rolling Time Series Regression Model}\label{SS:RR}
\noindent  The rolling regression time series model is commonly used when model coefficients are expected to change over time. 
Following the notation of Zivot and Wang \cite{ZW03}, the rolling regression time series model using an n day moving window is shown below:
\begin{equation} \label{E:rollingreg}
\textbf{y}_{t^{*}}(n) = \textbf{X}_{t^{*}}(n) \boldsymbol{\beta}_{t^{*}}(n)  + \boldsymbol{\epsilon}_{t^{*}}(n)  ~~ t^{*} = n,\cdots T
\end{equation}
\noindent Here $\textbf{y}_{t^{*}}(n)$ is an ($n \times 1$) vector of independent observations on the response, $\textbf{X}_{t^{*}}(n)$ is an ($n \times k$) matrix of explanatory 
variables and  finally $\boldsymbol{\epsilon}_{t^{*}}(n)$ is an ($n \times 1$) vector of error terms each being $\sim N(0,{\sigma^{2}}_{t^{*}})$ . Note that $(n)$ indicates that the the $n$ 
observations in $\textbf{y}_{t^{*}}(n)$ and $\textbf{X}_{t^{*}}(n)$  are the n most recent values from time ($t^{*} - n + 1$) to $t^{*}$. Clearly we need to assume that $n > k$. 

\noindent  It is important to understand what is being assumed by the use of the $(n)$ notation. First of all, although the new observation at time $t^{*}$ is univariate, 
at time $t^{*}$, the vector $\textbf{y}_{t^{*}}(n)$ of observations from $t^{*}-n+1$ to $t^{*}$ is used to estimate $\boldsymbol{\beta}_{t^{*}}(n)$. Therefore, we need to differentiate between the new univariate 
observation at $t^{*}$ and the n-dimensional vector of observations at $t^{*}$, $y_{t^{*}}(n)$. In what follows, we always refer to the  
$n \times 1$ vector at some $t = t^{*}$ 
as $vecobs_{t^{*}}$ and the new univariate observation seen at t = $t^{*}$  as $uniobs_{t^{*}}$. 

\noindent The rolling regression estimation algorithm proceeds in the following manner: Initially, we start out at $t^{*} = n$ because that is first point at 
which we can construct an estimate of $\boldsymbol{\beta}$. We observe $vecobs_{t^{*} = n}$ which is the n-dimensional vector of the first n observations in the series. Note that $vecobs_{t^{*}}$ has a 
regression model associated with it, namely, $vecobs_{t^{*}} = \textbf{X}_{t^{*}}\boldsymbol{\beta}_{t^{*}} + \boldsymbol{\epsilon}_{t^{*}}$ with the  error term $\boldsymbol{\epsilon}_{t^{*}}$ 
assumed to be independent (i.e. zeros off the diagonal of its covariance matrix). So, $vecobs_{t^{*}}$ is observed and the coefficients, $\boldsymbol{\beta}_{t^{*}}$, in the model are
then estimated. Next, time proceeds from $t = t^{*}$ to $t = t^{*} + 1 = n + 1$ and a new observation $vecobs_{t^{*}+1}$ is observed. But this supposedly new observation is constructed in the 
following manner: $uniobs_{t^{*}-(n-1)}$ is removed from $vecobs_{t^{*}}$ and the new $uniobs_{t^{*}+1}$ is observed and added to the front of the $vecobs_{t^{*}}$ observation. This modified 
$vecobs_{t^{*}}$ vector is now $vecobs_{t^{*}+1}$ and is the ``new'' n-dimensional observation at $t=t^{*}+1$. Again, $vecobs_{t^{*}+1}$ has a regression model associated with it namely, 
$y_{t^{*}+1} = \textbf{X}_{t^{*}+1}\boldsymbol{\beta}_{t^{*}+1} + \boldsymbol{\epsilon}_{t^{*}+1}$. The error term $\boldsymbol{\epsilon}_{t^{*}+1}$ is again assumed to be independent. So, 
once $vecobs_{t^{*}+1}$ is observed, the coefficients in the associated regression model are estimated thereby obtaining a new set of $\boldsymbol{\beta}_{t}$ coefficients at time $t = t^{*}+1$. 
This process repeats itself again at $t^{*} = n + 2$ and, $n + 3,\cdots$ and so on and so forth until $t^{*} = T$.

\noindent Note that there is a serious statistical problem with the model represented in equation (\ref{E:rollingreg}). 
Clearly the response $vecobs_{t^{*}}$  is highly correlated  with the response $vecobs_{t^{*}+1}$ because of how these observations are constructed. In fact, any two n-dimensional observations 
$vecobs_{t^{\prime}}$ and  $vecobs_{t^{\prime\prime}}$ constructed less than n periods apart will be correlated because they will contain common observations due to the rolling window construction. 
Now, even though this correlation exists, the rolling regression methodology still assumes that each regression model has independent error terms and therefore independent 
$vecobs_{t^{*}} ~~\forall t^{*}=n.\cdots, T$. We should define the assumption more rigorously. Formally, let us assume that 
the probability at time t of $vecobs_{t}$ ~\text(i.e. $\textbf{Y}_{t}$) possesses the following property:
\begin{equation}
\mbox{Prob}( \textbf{Y}_{t^{*}} = \textbf{y}_{t^{*}}) | \, \textbf{B}_{t^{*}}) = \mbox{Prob}( \textbf{Y}_{t^{*}} = \textbf{y}_{t^{*}}) 
\end{equation}
\noindent where 
$$
\textbf{B}_{t^{*}} = \{\, \textbf{Y}_{t} \,\,\,t = 1,\cdots, t^{*}-1 \,\} 
$$
\noindent This assumption implies that the likelihood of any $vecobs_{t^{*}}$ is independent of the previous $vecobs_{t}$ observations even though this is clearly not true.
Why is this assumption required ? Often it is believed that, 
due to structural changes or simply noise , the $\boldsymbol{\beta}_{t^{*}}$ parameter is expected to change over time. Yet, at the same time, one also knows with certainty that  
estimates of $\boldsymbol{\beta}_{t^{*}}$ that are close to each other in time are highly positively correlated. Therefore, the only way to generate correlation in the estimates, 
allow them to change over time and yet keep the model analytically tractable without resorting to more complex techniques is to make this independence assumption. Rather
than imposing a model for $\boldsymbol{\beta}_{t^{*}}$ and allowing the data to speak for the new estimate of $\boldsymbol{\beta}_{t^{*}}$, each time there is a new data point,
the assumption is that, at each time t, a totally new n-dimensional data point is observed. In essence, from a time series modelling standpoint, the rolling window construction together with the 
independence assumption is an ad-hoc way of dealing with the fact that dynamics are not being specified for $\boldsymbol{\beta}_{t^{*}}$. The expected correlation of the $\boldsymbol{\beta}_{t^{*}}$ 
estimates is achieved through the use of the constant overlap in the adjacent n-dimensional observations. Intuitively, a larger window will generate more highly correlated estimates than a shorter 
window. The statistical flaw of the rolling regression time series model is that the independence assumption clearly does not hold so the estimates are biased with respect to the true underlying DGP. 

\noindent Conversely, the well known time varying regression-Kalman filter type model, also quite popular in econometrics, is more complex than the rolling regression model mathematically 
but has the advantage that only one model is assumed from the start and the dynamics for the beta coefficients are specified directly. Consequently there is no need for the ad-hoc 
construction of a rolling window. In the Kalman kilter framework, when a new $uniobs_{t^{*}}$  is observed at a new time $t=t^{*}$, the current model estimate, $\beta_{t^{*}-1}$, is updated and 
becomes the new estimate at $t^{*}$. This is probably why the rolling regression time series model is often referred to as the  ``poor man's time varying coefficient regression model''. 
More details on the Kalman filtering approach can be found in \cite{AJ70} and \cite{AH92} and it will also be discussed in more detail in Section \ref{SS:KF}.

\noindent Below, Figure~\ref{FIG:windowreg1} displays the relationship between adjacent windows in the rolling time series regression model at $t = t^{*}$ (red line segment) and $t = t^{*} + 1$.
(green line segment). Figure~\ref{FIG:windowreg2} displays what is assumed to be happening with the same adjacent windows.


\vspace{0.3in}

\begin{figure}[htb]
\begin{center}
\caption{The rolling regression windows at ${t = t^{*}}$ and ${t = t^{*} + 1}$ contain common observations.}
\label{FIG:windowreg1}

$t = t^{*}$

\begin{tikzpicture}[xscale=8] 
\draw[-][draw=red, very thick] (0,0) -- (1.5,0);
\draw [thick] (0,-.1) node[below]{$y_{t^{*} - n + 1}$} -- (0,0.1);
\draw [thick] (0.25,-.1) node[below]{ . . . . . . . . . }  -- (0.25,0.1);
\draw [thick] (0.50,-.1) node[below]{ . . . . . . . . .} -- (0.50,0.1);
\draw [thick] (0.75,-.1) node[below]{$y_{t^{*}- n + 10}$} -- (0.75,0.1);
\draw [thick] (1.0,-.1) node[below]{. . . . . . . . . }  -- (1.0,0.1);
\draw [thick] (1.25,-.1) node[below]{$y_{t^{*}- n + (n-1)}$} -- (1.25,0.1);
\draw [thick] (1.5,-.1) node[below]{$y_{t^{*}}$} -- (1.50,0.1);
\end{tikzpicture}

\vspace{0.25in}

$t = t^{*} + 1$

\begin{tikzpicture}[xscale=8]
\draw[-][draw=green, very thick] (0,0) -- (1.5,0);
\draw [thick] (0,-.1) node[below]{$y_{t^{*}- n + 2}$} -- (0,0.1);
\draw [thick] (0.25,-.1) node[below]{. . . . . . . . . . . . }  -- (0.25,0.1);
\draw [thick] (0.50,-.1) node[below]{. . . . . . . . . . . . } -- (0.50,0.1);
\draw [thick] (0.75,-.1) node[below]{$y_{t^{*}- n + 11}$} -- (0.75,0.1);
\draw [thick] (1.0,-.1) node[below]{. . . . . . . . . . . }  -- (1.0,0.1);
\draw [thick] (1.25,-.1) node[below]{$y_{t^{*}}$}  -- (1.25,0.1);
\draw [thick] (1.5,-.1) node[below]{$y_{t^{*}+1}$} -- (1.50,0.1);
\end{tikzpicture}

\end{center}
\end{figure}
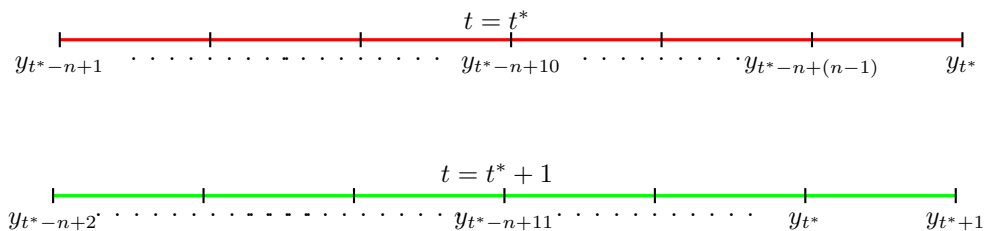


\vspace{0.2in}

\begin{figure}[htb]
\begin{center}
\caption{Although Figure~\ref{FIG:windowreg1} on page \pageref{FIG:windowreg1} clearly shows the contrary, the assumption in the rolling regression model is that adjacent windows at ${t = t^{*}}$ and ${t = t^{*}+1}$ do not 
contain common  observations.}

$t = t^{*}$

\begin{tikzpicture}[xscale=8] 
\draw[-][draw=red, very thick] (0,0) -- (1.5,0);
\draw [thick] (0,-.1) node[below]{$y^{\prime}_{t^{*}- n + 1}$} -- (0,0.1);
\draw [thick] (0.25,-.1) node[below]{  . . . . . }  -- (0.25,0.1);
\draw [thick] (0.50,-.1) node[below]{. . . . . . . . . .} -- (0.50,0.1);
\draw [thick] (0.75,-.1) node[below]{$y^{\prime}_{t^{*} -n + 10}$} -- (0.75,0.1);
\draw [thick] (1.0,-.1) node[below]{. . . . . . . . . }  -- (1.0,0.1);
\draw [thick] (1.25,-.1) node[below]{$y^{\prime}_{t^{*}- n + (n-1)}$} -- (1.25,0.1);
\draw [thick] (1.5,-.1) node[below]{$y^{\prime}_{t^{*}}$} -- (1.50,0.1);
\end{tikzpicture}

\vspace{0.25in}

$t = t^{*} + 1$

\begin{tikzpicture}[xscale=8]
\draw[-][draw=green, very thick] (0,0) -- (1.5,0);
\draw [thick] (0,-.1) node[below]{$y^{\prime\prime}_{t^{*}- n + 2}$} -- (0,0.1);
\draw [thick] (0.25,-.1) node[below]{  . . . . . }  -- (0.25,0.1);
\draw [thick] (0.50,-.1) node[below]{. . . . . . . . . .} -- (0.50,0.1);
\draw [thick] (0.75,-.1) node[below]{$y^{\prime\prime}_{t^{*}- n + 11}$} -- (0.75,0.1);
\draw [thick] (1.0,-.1) node[below]{. . . . . . . . . }  -- (1.0,0.1);
\draw [thick] (1.25,-.1) node[below]{$y^{\prime\prime}_{t^{*}}$} -- (1.25,0.1);
\draw [thick] (1.5,-.1) node[below]{$y^{\prime\prime}_{t^{*}+1}$} -- (1.50,0.1);
\end{tikzpicture}

\label{FIG:windowreg2}
\end{center}
\end{figure}


\newpage

\subsection{The Equivalence of Bollinger Bands and the Rolling Regression Time Series Model} \label{SS:RRBB}
\noindent Let us consider the intercept only version of the  rolling regression time series model represented by (\ref{E:rollingreg}) so that $\textbf{X}_{t^{*}}(n)$ is a vector of 
ones and  $\boldsymbol{\beta}_{t^{*}}(n)$ is a scalar. The rolling regression model in (\ref{E:rollingreg}) then becomes:
\begin{equation} \label{E:rollingreg2}
\textbf{y}_{t^{*}}(n) =  \beta_{t^{*}}(n) \textbf{X}_{t^{*}}(n) + \boldsymbol{\epsilon}_{t^{*}}(n)  ~~ t^{*} = n,\cdots T
\end{equation}

\noindent Using classical least squares results, it is straightforward to show that the estimates of this model at each time $t^{*}$ are:
\begin{equation}\label{E:rollingbetaest}
\hat{\beta}_{t^{*}} = \sum_{t=t^{*}-n+1}^{t=t^{*}}y_{t}/n  ~~~~~ t^{*} = n,\ldots, T
\end{equation}
\noindent and
\begin{equation}\label{E:rollingvarest}
\hat{\sigma}^{2}_{t^{*}}  = \sum_{t=t^{*}-n+1}^{t=t^{*}}(y_{t} - \hat\beta_{t^{*}}^{2})/(n-1) ~~~~~ t^{*} = n,\ldots, T
\end{equation}

\noindent Note that if we equate $\hat{\beta}_{t^{*}}$ and $mave_{t^{*}}$, then the expressions for the estimates in equations (\ref{E:rollingbetaest}) and (\ref{E:rollingvarest}) 
are exactly the same as the Bollinger Band components in equations~(\ref{E:bbmave}) and (\ref{E:bbvar})\footnote {In fact, the R \cite{R11} command:  rollapply(inseries, width = 
ndays, FUN = function(y) summary(lm( y $\sim$ 1))\$sigma, align = ``right'', fill = TRUE) and the R command: sqrt(rollapply(inseries, ndays, var, align = ``right'', fill = TRUE)) 
will give identical results given the same series  ``inseries''.}. Therefore, the Bollinger Band algorithm results in estimates that are identical to those of an intercept only 
rolling regression model where the intercept is the center line and the residual standard deviation is the Bollinger Band standard deviation.

\noindent It is also straightforward to show that, for the model in (\ref{E:rollingreg2}), a $100(1-\alpha)\%$ confidence interval for the future one step ahead response 
, also referred to in the statistics literature as a prediction interval \cite{JF2005}, is the following:
\begin{equation*}
\hat\beta_{t^{*}} ~~\pm~~t_{n-1}^{\alpha/2} \times \hat{\sigma}_{t^{*}}\sqrt{(1 + (1/n))}
\end{equation*}
\noindent Note that it is possible to approximate this $100(1-\alpha)\%$ prediction interval in the following way. For window values of n between 10 and 50, 
$\sqrt{(1 + 1/n)}$ ranges between 1.05 and 1.01. Therefore, for values of n between 10 and 50, the approximate $100(1-\alpha)\%$ prediction interval becomes: 
\begin{equation} \label{EQ:CI}
\hat\beta_{t^{*}} ~~\pm~~t_{n-1}^{\alpha/2} \times \hat{\sigma}_{t^{*}}
\end{equation}

\noindent But notice that, if $\hat\beta_{t^{*}}$ is replaced by $mave_{t^{*}}$ and $t_{n-1}^{\alpha/2}$ is replaced by $k$ , then (\ref{EQ:CI}) reduces to
\begin{equation*}
BBupper_{t^{*}} = mave_{t^{*}} + k * \hat{\sigma}_{t^{*}}
\end{equation*}
\noindent and
\begin{equation*}
BBlower_{t^{*}} = mave_{t^{*}} - k * \hat{\sigma}_{t^{*}}
\end{equation*}
Therefore, assuming that k is chosen appropriately, the upper and lower bands in Bollinger Bands are approximately equivalent to the prediction intervals 
constructed from an intercept only rolling regression model. We should point out that the approximation does depend on the order of magnitude of $\sigma$ and will get worse as $\sigma$ 
increases. At the same time,  it is always possible to avoid the approximation by including the 1/n factor in the construction of BBupper and BBlower and obtain the exact prediction interval 
associated with the intercept only rolling regression time series model. 

\noindent In summary, the Bollinger Band methodology can be viewed as an intercept only rolling regression time series model with the center line and the standard deviation being the mean and 
residual standard deviation from the rolling regression model respectively. The $BBUpper$ and $BBLower$ components of the Bollinger Bands will be approximately the same 
as the prediction intervals of the intercept only rolling regression model as long as k is chosen appropriately. 
\vspace {0.1in}

\newpage

\section{Bollinger Bands and Pairs Trading} \label{S:BBPT}
Bollinger Band components are usually used with various other indicators in order to decide whether an asset is declining or trending.  The one 
quantitative strategy where the Bollinger Bands are often used solely on their own is that of pairs trading. In pairs trading \cite{DH2011}, where asset Z and asset 
X are the asset pair being traded,  
the quantity $y_{t} = ln(P_{z}/P_{x})_{t}$ is tracked over time as a time series.  It is assumed that $y_{t}$ is weakly stationary and therefore mean 
reverting.\footnote{The weak stationarity assumption is equivalent to assuming that the mean, $u_{t}$, of the process is constant. For this to be the case, mean 
reversion has to exist.} Therefore, when the quantity $y_{t}$ gets too high or too low, the expectation is that it will eventually return to its unconditional 
mean $u_{t}$.
Bollinger Bands are commonly used as a tool for exploiting this reversion behavior. Recall that $BBupper$ and $BBlower$ can be constructed from the Bollinger 
Band algorithm using the series ${y}_{t}$. Therefore, Bollinger Bands exploits reversion in the following manner: If at any time $t$, $y_{t}$ touches or crosses 
$BBupper$ ($BBlower$) at say $t^{*}$, then this is viewed a signal that, after $t^{*}$, $y_{t}$ is expected to decrease (increase) sometime in the relatively 
near future so a short (long) position is taken in the pair at $t^{*} + 1$.\footnote{A long position is defined as being long the asset that is in the numerator 
of the ratio of log prices and short the asset that is in the denominator. A short position is defined analogously.} Since $mave_{t}$ is the rolling mean 
estimate of the $y_{t}$  series, the crossing of $y_{t}$ back through $mave_{t^{**}}$ at some later time $t^{**}$ is used to indicate that the series 
$y_{t}$  has completely reverted and the position is closed out. 

\noindent For example, suppose that the series  $y_{t} = (ln(P_{z}/P_{x})_{t}$ is tracked over time and that it crosses  $BBlower_{t}$ at time $t=t^{*}$. Then, at time  $t^{*} + 1$, a long position 
is taken in asset Z and a short position, equal in  dollars to that amount taken in asset Z, is taken in asset X. This position is held until $y_{t}$ eventually crosses through the rolling mean 
$mave_{t^{**}}$ at some time $t^{**}$ in the future. The overall position entered into at $t^{*} + 1$ is then closed out by selling the long position in asset Z 
and buying back the short position in asset X. Conversely, suppose that over the course of time $y_{t}$ crosses through $BBupper$ rather than $BBLower$. Then, since we expect $y_{t}$ to 
decrease in the near future, we would go  short asset Z and long asset X. Then, when the $y_{t}$ process crosses back through the moving average, the overall position is closed out. 
For convenience going forward, we create an acronym for the Bollinger Band pairs trading strategy by referring to it as the BBPT strategy. 

\newpage

\subsection {The SAP 500 and the Nikkei: A Pairs Trading Example} \label{SS:pairtrade}
In order to illustrate the actual BBPT strategy using real data, we take the Standard and Poors 500 Index prices as asset Z and the Nikkei Index prices as asset X
and construct $y_{t}$, the difference between their log prices over the year 2004.\footnote{It is important to realize that in an actual trading scenario one would want to test that $y_{t}$ 
was cointegrated over some historical period immediately preceding the trading period 2004.} The Bollinger Bands generated by $y_{t}$ are based on a rolling window size of n = 20
and a bandwidth multiplier of $k = 2$. The pair trades generated during 2004 using these parameters are shown in Figure~\ref{FIG:2004_pair} in Appendix B. Each line segment in the figure, 
whether it red or green, represents the entry and exit of one trade which is initially generated by the touching or crossing of $BBUpper$ or $BBLower$ by $y_{t}$.. We will discuss the first, 
fifth and sixth trades in detail because these particular trades are representative of the typical behavior of the BBPT strategy.

\noindent Consider the first green line segment which represents the first trade. $BBUpper$ was crossed so the action 
taken was  to go short the paired asset by shorting the Standard and Poors Index and going long the Nikkei index. The time of the entry of short position is indicated by the arrow, 
(i.e. arrow always denotes the entry point) which is red to denote that the position was a short position. Clearly $y_{t}$ reverts to the 
moving average  very quickly and the position is then closed out. Since the moving average was crossed from above, this indicates that there was a profit from this short trade so the 
line segment is green. Finally, the diamond denotes the exit point. The diamond is red only because it is always the same color as its associated arrow which was red because a short position
was taken. 

\noindent 
The use of line segments, arrows and diamonds along with their respective colors allows for a large amount of information to be conveyed in
Figure~\ref{FIG:2004_pair}.  Also, because $y_{t}$ is defined as $y_{t} = (ln(P_{z}/P_{x})_{t}$, the return from any long pair trade\footnote{The return to a short trade is -1.0*
$(ln(P_{z}/P_{x})_{t^{**}} - (ln(P_{z}/P_{x})_{t^{*}}$.} entered into at $t^{*}$ and exited at $t^{**}$ trade is equal to  
$(ln(P_{z}/P_{x})_{t^{**}} - (ln(P_{z}/P_{x})_{t^{*}}$ which is conveniently equal to the vertical distance between the arrow and the diamond of the line segment.  
This relation is useful because one can then easily identify where there was a large positive return or large negative return. Green line segments with 
large vertical distances between their endpoints are signs of large  positive returns. Red line segments with large vertical distances between their endpoints 
represent large negative returns. 

\noindent Next, we consider the fifth line segment which represents the fifth trade. Just as was the case with the first trade, this is a short where the trade is 
short the SAP and long the Nikkei.  But unlike the first trade, the moving average is not crossed by $y_{t}$ until it is above the original entry point of  the trade. Therefore, the trade 
results in a loss and consequently the color of the line segment is red rather than green. The sixth trade in the figure is a long trade because BBlower was crossed but this trade also results 
in a loss because the $y_{t}$ series  did not cross through the moving average quickly enough. It crossed the moving average at a point lower vertically than where $y_{t}$ was at 
the entry point. $y_{t}$ was expected to increase after entry and cross the moving average from below rather than above but this did not happen. The vertical distance between the entry point 
and exit point represents the negative return of the trade. Notice that, for this trade, the endpoints of the line segment are green indicating that the trade was long the SAP and short the 
Nikkei.\footnote{Note that the weighted  sum of the the returns of all the trades in the figure is referred to as the return of the strategy over 2004 where the weights are proportional to 
the dollars allocated to each trade. For our purposes, we assume that all trades are given the same portfolio weight so that each trade weight =  1/number of trades.}

\noindent Since the horizontal axis of the plot  in Figure~\ref{FIG:2004_pair} represents time, the duration of any trade is simply the horizontal distance 
between when the trade opens (i.e. the arrow) and closes (i.e. the diamond).  One interesting aspect of the plot in Figure~\ref{FIG:2004_pair}
 that may not be obvious due to the scale of the axes is that the durations of the winning trades 
(i.e. green line segments) are consistently shorter than the durations of the losing trades (i.e. red line segments). In fact the average duration of the winning trades in  
is 8.6 and the average duration of the losing trades is 20.5. This duration behavior is not just specific to the use of the parameter values, n = 20 and k = 2. 
Consider Figure~\ref{FIG:all_pairs} in Appendix C.  Each of the eight plots represents the same pairs trading strategy simulated over different time periods using various 
combinations of the values of the window  size and the multiplier. The rolling window size parameter $n$ takes on the values of 20 and 30 while the width multiplier $k$ while
the width multiplier is either 1 or 2. The plots in Figure~\ref{FIG:all_pairs} clearly indicate that, in a BBPT strategy, the average duration of winning trades is 
a consistently shorter than the average duration of losing trades. Later on a more fundamental result will be proven concerning the return-duration behavior in the BBPT 
strategy.  This result is a key component  of the the Fixed Forecast Maximum Duration Bands pairs trading (FFDBPT) strategy which is 
discussed in the following section.

\newpage

\section{Fixed Forecast Maximum Duration Bands}\label{S:FFMDPT}
The use of Bollinger Bands in pairs trading goes back to the middle of the 1980's. The algorithm's popularity alone suggests that it has been at least reasonably successful 
in capturing mean reversion in paired assets with cointegrated price behavior.\footnote{Identifying a mean reverting pair of tradeable assets is a separate issue and will not be 
discussed here.} In this section, we create a series of successive links between various well known time series models which eventually lead back to Bollinger Bands. This successive linking will lead
to the development of a variant of Bollinger Bands called Fixed Forecast Maximum Duration Bands. But, in order to develop this variant, it is necessary to introduce the various time series models and  show how they are related. First we introduce the concept of exponential smoothing and its various properties. Next, we describe the 
Kalman filtering approach in some detail. Finally, we introduce a particularly simple Kalman filter called the random walk plus noise and make a connection 
between it and Bollinger Bands.

\subsection{Introduction to Simple Exponential Smoothing}
A well known forecasting method originally developed by Brown in the 1950's \cite{RGB59,RGB63} is that of simple exponential smoothing (SES). The method is
appropriate when it believed that the mean of the series might be changing over time but there is no trend or seasonality evident in the series. The method of SES 
smoothing takes the forecast for the previous period and adjusts it using the empirical forecast error. That is, the forecast for the next period is 
\begin{equation}\label{E:expsmth1}
\hat{y}_{t^{*} + 1} = \hat{y}_{t^{*}} + \lambda(y_t^{*} - \hat{y}_{t^{*}})
\end{equation}
The value of parameter $\lambda$ is restricted to be between 0 and 1 and is either determined empirically or known apriori based on the forecaster's previous experience. Of course, monitoring of
the parameter $\lambda$ is critical because the behavior of the series can change over time. We can re-write the forecast in the following manner in order to gain insight into what 
exponential smoothing is really doing:
\begin{equation}\label{E:expsmth2}
\hat{y}_{t^{*} + 1} =  \lambda{y_t^{*}} + (1-\lambda)\hat{y}_{t^{*}}
\end{equation}
By examining (\ref{E:expsmth2}), we can see that exponential smoothing is a model in which the forecast $\hat{y}_{t^{*}+1}$ is based on weighting the most recent observation $y_{t^{*}}$ with a weight
equal to $\lambda$ and the previous forecast with a weight equal to $(1-\lambda)$. Following Hyndman et al \cite{HKOS08}, the implications of exponential smoothing can be seen more easily 
if $\hat{y}_{t^{*}+1}$ is expanded by replacing $\hat{y}_t$ with its components as follows: 
\begin{eqnarray*}\label{E:expans1}
\hat{y}_{t^{*}+1} & = & \lambda y_{t^{*}}  + (1-\lambda)[\lambda y_{t^{*}-1} + (1-\lambda)\hat{y}_{t^{*}-1}] \\
& = & \lambda y_{t^{*}} + \lambda(1-\lambda)y_{t^{*}-1} + (1-\lambda^2)\hat{y}_{t^{*}-1}
\end{eqnarray*}
\noindent If this substitution is repeated by replacing $\hat{y}_{t^{*}-1}$ with its components, $\hat{y}_{t^{*}-2}$ with its components, and so on, the relation becomes:
\begin{eqnarray}\label {E:expans2}
\hat{y}_{t^{*}+1} & = & \lambda y_{t^{*}} + \lambda(1-\lambda)y_{t^{*}-1} + \lambda(1-\lambda)^{2}y_{t^{*}-2} + \lambda(1-\lambda)^{3}y_{t^{*}-3} \nonumber \\
& + & \lambda(1-\lambda)^{4}y_{t^{*}-4} + \cdots + \lambda(1-\lambda)^{t^{*}-1}y_{1} + (1-\lambda)^{t^{*}}\hat{y}_{1}
\end{eqnarray}
Therefore, $\hat{y}_{t^{*}+1}$ represents a weighted moving average of all past observations with the weights decreasing exponentially giving rise to the term ``exponential''
smoothing. Note that there is an initialization issue in that we need an initial value for $\hat{y}_{1}$. Usually, this value is taken to be the first observation or
some proportion of it and, if the series is long enough, the choice of this value should have a negligible effect on the predictions. For more elaborate methods for choosing the 
initial value, one should refer to \cite{HKOS08}.

\subsection{Simple Exponential Smoothing and the ARIMA(0,1,1)}\label{SS:sesarima}
The following discussion assumes that the reader has some familiarity with the ARIMA time-series modelling approach of Box and Jenkins. If this is not the case, 
then one is referred to \cite{BJR94} for a detailed description. First of all, it is well known that a forecasting equivalence exists between  particular exponential 
smoothing models and the mapped ARIMA model \cite{HKOS08,ML2000}. In fact, Muth  \cite{JM60} was the first of many authors to prove that SES is optimal for 
the ARIMA(0,1,1) process:
\begin{equation*}\label{E:arima011A}
(1-B)y_{t} = (1- \theta B)\epsilon_{t}
\end{equation*}
which can be re-written as
\begin{equation}\label{E:arima011B}
y_{t} =  y_{t-1} - \theta \epsilon_{t-1} + \epsilon_{t} 
\end{equation}  
\noindent Note that since the sign of $\theta$ is arbitrary, (\ref{E:arima011B}) can be re-written as 
\begin{equation}\label{E:arima011C}
y_{t} =  y_{t-1} + \theta \epsilon_{t-1} + \epsilon_{t} 
\end{equation}  
\noindent Also, in order for the ARIMA(0,1,1) model to be invertible, it is necessary to restrict $\theta$ so that $\theta \in (-1,1)$.
By SES being optimal, what is meant is that, if the parameter $\theta$ in ARIMA(0,1,1) process is known, then the SES method with parameter $\lambda = (1 - \theta)$
will give the same forecasts as the ARIMA(0,1,1) model. Unfortunately Muth's proof \cite{JM60} is not particularly transparent so we provide a simpler proof here.  
First, we write the one-step forecast for the 
ARIMA(0,1,1) model below:
\begin{equation}\label{EQN:arima}
y_{t^{*}+1} = y_{t} + \theta \epsilon_{t^{*}} + \epsilon_{t^{*}+1}
\end{equation}
Now, assuming that $\theta$ is known, if one wanted to calculate the one step ahead forecast using the ARIMA(0,1,1), the expectation of $\epsilon_{t+1}$ is zero  
so the forecasting equation becomes
\begin{equation*}
\hat{y}_{t^{*}+1} = y_{t^{*}} + \theta \epsilon_{t^{*}}
\end{equation*}
Therefore, generating the forecast, $\hat{y}_{t^{*}+1}$ requires estimating $\epsilon_{t^{*}}$ using $\hat{\epsilon}_{t^{*}}$. The estimate of $\hat{\epsilon}_{t^{*}} = (\hat{y}_{t^{*}} - y_{t^{*}})$ 
so the forecast becomes
\begin{eqnarray*}
\hat{y}_{t^{*}+1} & = & y_{t^{*}} + \theta (\hat{y}_{t^{*}} - y_{t^{*}}) \\
& = & (1-\theta)y_{t^{*}} + \theta \hat{y}_{t^{*}}
\end{eqnarray*}
But this is equation (\ref{E:expsmth2}) for simple exponential smoothing with parameter $\lambda = (1-\theta)$. Therefore, we have shown that the forecast of the ARIMA(0,1,1) model with 
parameter $\theta$ is identical to the forecast for SES with parameter $(1-\theta)$.
\subsection{The Weighted Age In Simple Exponential Smoothing}
It should be emphasized that SES is not a time series model per se but rather a forecasting method
because there is no data generating process (DGP) underlying SES. Also, because of the invertibility  condition in the ARIMA(0,1,1), $\theta$ is  
restricted to lie between -1 and + 1 which  implies that $\lambda$ in SES is restricted to be between 0 and 2. 
In practical applications of SES, the $\lambda$ parameter is generally chosen to be between 0 and 1 in order to ensure that the weight given
to past observations decreases as the observations go further back in time. In fact, given $\lambda$, we can easily calculate the weighted average age of the observations used in the 
current forecast of SES. Notice that, in (\ref{E:expans2}), the weight given to an observation k periods ago, $y_{t-k}$, is $\lambda(1-\lambda)^{k}$.  Therefore, the 
weighted average age of the observations going into the current SES forecast at any time t is: \begin{eqnarray*}
\bar{k} & = & 0 \lambda + 1 \lambda (1-\lambda) + 2 \lambda (1-\lambda)^2 + \cdots \\
& = & \lambda \sum_{k=0}^{\infty} k (1-\lambda)^{k} \\
& = & \frac{(1-\lambda)}{\lambda}
\end{eqnarray*}

\noindent  A similar ``older data gets less weight' concept exists for the moving average forecast used in Bollinger Bands except that the decrease is more abrupt.  In the case of the moving 
average, the past observations that are of age n-1 periods or less are weighted equally with weight = 1/n. Any observations older than n-1 periods get a weight of zero. Therefore, for the 
moving average, we have:
\begin{eqnarray*}
\bar{k} & = & \frac{0 + 1 + 2 + \cdots + n - 1}{n} \\
& = & \frac{n-1}{2}
\end{eqnarray*}
\noindent An interesting question is whether there exists a parameter $\lambda$ in SES that will gives forecasts similar to that of the n period  moving average 
in Bollinger Bands. Brown \cite{RGB63} reasoned that, if the average age of the observations used in the SES forecast and the moving average forecast are the same,
then one would expect those models to give somewhat similar forecasts. Therefore, one can set the weighted age of the observations used in the current forecast 
of SES equal to the weighted age of the observations used in the moving average and solve for $\lambda$:
\begin{equation}\label{E:lambdatheta}
\frac{1-\lambda}{\lambda} = \frac{n-1}{2} \rightarrow \lambda = \frac{2}{n + 1}
\end{equation}

\noindent 
Figure~\ref{FIG:ewma_boll} in Appendix D shows the Bollinger Bands plotted along with the exponentially weighted moving average and its prediction intervals\footnote{The details pertaining to
the construction of the prediction intervals for exponential smoothing will not be discussed here. For details on the computation of the prediction intervals for the ARIMA(0,1,1) one is referred 
to \cite{CC00}.} when the weighted age relation, $\lambda = \frac{2}{n + 1}$, is used. The figure shows that the approximation is quite reasonable, particularly for the center line. This 
means that by using the relation $\lambda = 2/(n+1)$, the moving average associated with Bollinger Bands will provide a satisfactory approximation to the exponential smoothing model with 
parameter $\lambda$. 
Below summarizes the connections made so far:
\begin{enumerate}
\item Exponential Smoothing and the ARIMA(0,1,1) are equivalent for $\lambda=1-\theta$
\item  The moving average is well approximated by exponential smoothing for $\lambda = 2/(n+1)$
\item 1 and 2 imply that the moving average is well approximated by an ARIMA(0,1,1) for $1-\theta = 2/(n+1)$
\end{enumerate}
\noindent This means that if we have an ARIMA(0,1,1) model with parameter $\theta$, then we can set $n = \frac{2}{1-\theta} -1$ in  the Bollinger Band moving average and this will provide a 
reasonable approximation to that ARIMA(0,1,1) model. These model connections are illustrated in Figure \ref{FIG:tslink} on page \pageref{FIG:tslink}.
\clearpage

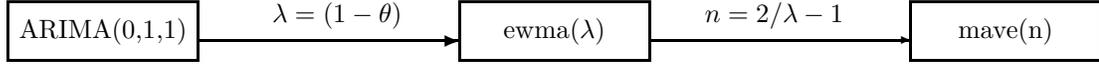
\begin{figure}[thb]
\setlength{\unitlength}{0.5cm}
\begin{picture}(1,5)
\thicklines
\put(2,1.5){\framebox(5.0,1.5){ARIMA(0,1,1)}}
\put(9.0,2.5){$\lambda = (1 - \theta$)}
\put(7,2.0){\vector(1,0){7.0}}
\put(14,1.5){\framebox(5.0,1.5){ewma($\lambda$)}}
\thinlines
\put(19,2.0){\vector(1,0){7.0}}
\put(20.5,2.5){$n = 2/\lambda - 1$}
\thicklines
\put(26,1.5){\framebox(5.0,1.5){mave(n)}}
\end{picture}
\vskip 0.1in
\caption{ 
The lines represent the connections between the models and
the transformations required to map one to the other. The thinner
line indicates that the relationship is approximate.}
\label{FIG:tslink}
\end{figure}

\noindent In order to make the final connection that leads to the Fixed Forecast Maximum Duration Bands pairs trading strategy, in what follows we briefly introduce state 
space models. We should point out that most of the introduction is taken from \cite{ML2000}. 

\subsection{Introduction To State Space Models}\label{SS:KF}
From the 1950's on, electrical engineers were particularly interested in the following problem in linear 
systems theory which is shown in Figure~\ref{FIG:filpic} on page \pageref{FIG:filpic}. Suppose we have an unobserved input signal at time t, $\boldsymbol{\theta}_{t-1}$ which is known as 
the system state. The state process evolves in accordance with a linear transform  of $\boldsymbol{\theta}_{t-1}$ to which is added a noise process $\boldsymbol{\omega}_{t}$. This process 
is described by the left hand side box in Figure~\ref{FIG:filpic}. The arrow from $\boldsymbol{\theta}_{t-1}$ to the box containing $\mathbf{F}_{t}^{\prime}$ represents the linear 
transformation of the system state producing the system output $z_{t} = \mathbf{F}^{\prime}_{t}\boldsymbol{\theta}_{t-1}$. Finally, added to $z_{t}$ is a noise process $\epsilon_{t}$ which 
results in the measurement process $y_{t}$. Only the sequence $\{y_{t}\}$ is observed. The linear system is described by equations (\ref{EQN:fileqn1}) and (\ref{EQN:fileqn2}). 
\begin{align}
{y}_{t} & = \mathbf{F}_{t}^{\prime}\boldsymbol{\theta}_{t-1} + \epsilon_{t}, \qquad
~~~~\epsilon_{t} \sim \mathrm{N}[\mathbf{0},V_{t}]
\label{EQN:fileqn1}
\\
\boldsymbol{\theta}_{t} & = \mathbf{G}_{t}\boldsymbol{\theta}_{t-1} +
\boldsymbol{\omega}_{t}, \qquad  ~ \boldsymbol{\omega}_{t}
\sim \mathrm{N}[\mathbf{0},\mathbf{W}_{t}]
\label{EQN:fileqn2}
\end{align}
\noindent with initial conditions
\begin{align*}
(\boldsymbol{\theta}_{0}~~|~~D_{0}) \sim \mathrm{N}[\boldsymbol{m}_{0},\boldsymbol{C}_{0}]
\end{align*}

\noindent where $D_{0}$ denotes the information available at time zero. The normality assumptions on the error terms are not absolutely essential but they greatly simplify the 
inferential framework so they are usually imposed. The noise processes $\epsilon_{t}$ and $\boldsymbol{\omega}_{t}$ are assumed to be independent and the goal of the engineers was to 
produce an estimate of the unobserved system state $\boldsymbol{\theta}_{t}$ at time t, using the measurements, $y_{1},\ldots,y_{t}$. Then, when a new observation, $y_{t+1}$ is realized, 
a new estimate of the unobserved system state $\boldsymbol{\theta}_{t+1}$ should be obtained. This came to be known as the filtering  problem and was studied by electrical engineers for 
many years.

\newpage

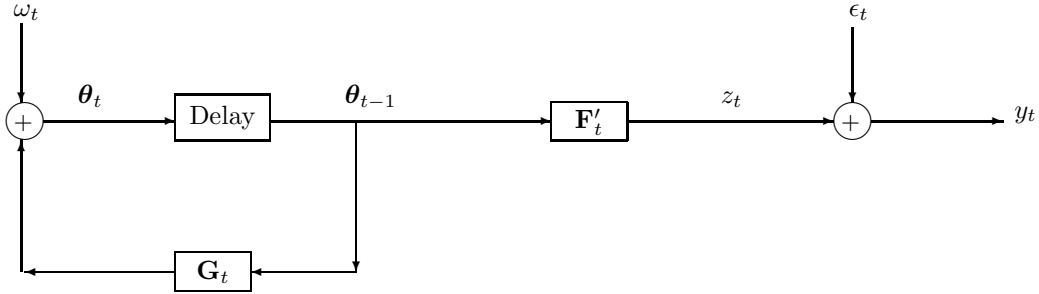
\begin{figure}[thb]
\setlength{\unitlength}{0.5cm}

\begin{picture}(15,10)
\put(2,2){\circle{1.0}}
\put(6,1.5){\framebox(2.5,1.2){Delay}}
\put(16,1.5){\framebox(2.0,1.0){$\mathbf{F}_{t}^{\prime}$}}
\put(24,2){\circle{1.0}}
\put(2.5,2){\vector(1,0){3.5}}
\put(8.5,2.0){\vector(1,0){7.5}}
\put(18.0,2.0){\vector(1,0){5.5}}
\put(24.5,2.0){\vector(1,0){3.5}}
\put(23.7,1.80){\small {+}}
\put(3.4,2.5){$\boldsymbol{\theta}_{t}$}
\put(10.5,2.5){$\boldsymbol{\theta}_{t-1}$}
\put(1.7,1.80){\small {+}}
\put(20.5,2.5){$z_{t}$}
\put(28.3,2.1){$y_{t}$}
\put(10.8,2.0){\vector(0,-1){4.0}}
\put(10.8,-2.0){\vector(-1,0){2.8}}
\put(6,-2.5){\framebox(2.0,1.0){$\mathbf{G}_{t}$}}
\put(1.9,-2.0){\vector(0,1){3.5}}
\put(6.0,-2.0){\vector(-1,0){4.0}}
\put(24,4.5){\vector(0,-1){2.0}}
\put(23.9,4.8){${\epsilon}_{t}$}
\put(1.7,4.8){${\omega}_{t}$}
\put(1.9,4.6){\vector(0,-1){2.1}}
\end{picture}

\vskip 0.5in

\caption{ 
A schematic diagram representing the
filtering problem in electrical engineering.
}\label{FIG:filpic}

\end{figure}


\noindent In an extremely important contribution to the engineering literature,  Kalman \cite{KF60} developed a recursive scheme for updating $\boldsymbol{\theta}_{t}$ optimally
each time a new observation $y_{t}$ is realized. These recursions became known as the Kalman filter recursions\footnote{The recursions are somewhat complicated so they are not given here but they can be found in ~\cite{AJ70}, ~\cite{AH92} and ~\cite{ML2000}.} and the system described by (\ref{EQN:fileqn1}) and (\ref{EQN:fileqn2}) became known as the state space 
formulation. 

\noindent Then, in the early 1970's, Harrison and Stevens \cite{HS76} bridged the gap between the statistical community and the engineering community by showing that the state space
formulation could be used by statisticians to build and estimate models that were already very popular in the statistical  literature. For example, they showed that
if one took the state space model and let the matrix $\mathbf{G}_{t}$ be the identity matrix, then the model was equivalent to a time varying coefficient regression model.
This brought state space models into the statistical community and led to various specific state space models one of which is described in Section \ref{SS:rwpn}.



\newpage
\subsection{A Simple State Space Formulation: The Random Walk Plus Noise Model }\label{SS:rwpn}
\noindent Suppose that we have the state space formulation represented by (\ref{EQN:fileqn1}) and (\ref{EQN:fileqn2}) where $\mathbf{F}_{t} = 1$, $\mathbf{G}_{t} = 1$, and 
$\boldsymbol{\theta_{t}}$ is is a scalar equal to $\mu_{t}$.  Then, the state space formulation reduces to what is termed the random walk plus noise state space model (RWPN) shown below: 
\begin{align}
y_{t} & = \mu_{t-1} + \epsilon_{t} \label{EQN:HWrwpn1} \\
\mu_{t} &  = \mu_{t-1} + \eta_{t}  \label{EQN:HWrwpn2}
\end{align}
\noindent where
\begin{align*}
\epsilon_{t} & \sim \mathrm{N}(0,\sigma^2_{\epsilon}) \\
\eta_{t} & \sim \mathrm{N}(0,\sigma^2_{\eta})
\end{align*}
\noindent As noted previously, $\epsilon_{t}$ and $\eta_{t}$ are assumed to be independent.
\noindent Eliminating the system variable $\mu_{t}$ and creating a stationary model by differencing $y_{t}$ gives:
\begin{equation}\label{EQN:tri}
{\triangle}y_{t} = \eta_{t-1} + (\epsilon_{t} - \epsilon_{t-1})
\end{equation}

\noindent Now, we can easily calculate the model autocorrelations , $\gamma_k$, for each $k$:
\begin{align*}
\gamma_{1} & = \frac{-\sigma^2_{\epsilon}}
{\sigma^2_{\eta} + 2\sigma^2_{\epsilon}} \\
\gamma_{2} & = \gamma_{3} = \cdots = 0
\end{align*}
\noindent Since $\mathrm{E}(\triangle{y_{t}})  = 0 $, and the autocorrelations are zero after lag 1, the RWPN model is statistically equivalent to an ARIMA(0,1,1) model.
\noindent Note that since all the variances are required to be greater than zero, we can see by inspection that for this random walk plus noise model, $ -0.5 < \gamma_1 < 0$.
This implies that the parameter space for $\theta$ in the equivalent ARIMA(0,1,1) is restricted. In fact, if we equate $\gamma_{1}$ in the random walk plus noise model with 
$\gamma_{1}$ in the ARIMA(0,1,1) model, then the parameter ${\theta}$ is forced into the range $-1 < {\theta} < 0$.\footnote{There is another state space model called the 
single source of error state space model which does not impose this restriction. See \cite{HKOS08,ML2000} for details.} 

\noindent  We need to derive the exact relation that maps the random walk plus noise model to the ARIMA(0,1,1). First of all, it is easy to show
that the autocorrelation at lag one of the ARIMA(0,1,1) defined in equation (\ref{EQN:arima}) is $\theta/(1+\theta^{2})$. If we define
the signal to noise ratio in the random walk plus noise model as $ q = \sigma_{\epsilon}^2/\sigma^2_{\eta}$ and equate the lag one autcorrelations of each model,
we obtain the following mapping between the two models:
\begin{equation}\label{EQN:rwpnmap}
\theta = (\sqrt{q^2 + 4q} - 2 - q)/2
\end{equation}
\noindent Therefore, given the signal to noise ratio $q$ , we can find the equivalent ARIMA(0,1,1) using the mapping in equation (\ref{EQN:rwpnmap}).
This connection between the RWPN and the equivalent ARIMA(0,1,1) allows us to modify Figure \ref{FIG:tslink} on page \pageref{FIG:tslink} by including the random 
walk plus noise model. The resulting Figure is shown below and illustrates that a rolling window moving average using $n$ as the window size can be viewed as an 
approximation to the random walk noise model with a signal to noise ratio equal to $q$. What we have shown is that the rolling moving average approach used in Bollinger 
Bands, initially thought to be a ``poor man's time varying regression coefficient model'', may not  be as poor as originally thought. More importantly, in the next section 
we will see how this connection between the RWPN and the moving average leads to an interesting modification of Bollinger Bands. 
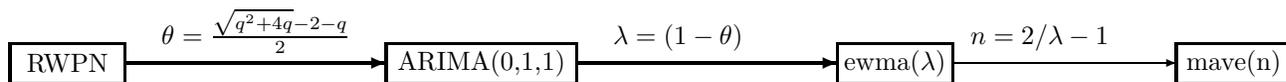
\begin{figure}[thb]
\setlength{\unitlength}{0.5cm}
\begin{picture}(0.5,5)
\thicklines

\put(-1,1.0){\framebox(3.0,1.0){RWPN}}
\put(3,2.0){$\theta = \frac{\sqrt{q^2 + 4q} - 2 - q}{2}$}
\put(2,1.5){\vector(1,0){7.0}}

\put(9,1.0){\framebox(5.0,1.0){ARIMA(0,1,1)}}
\put(15,2.0){$\lambda = (1 - \theta$)}
\put(14,1.5){\vector(1,0){7.0}}

\put(21,1.0){\framebox(3.0,1.0){ewma($\lambda$)}}
\thinlines
\put(24,1.5){\vector(1,0){6.0}}
\put(24.5,2.0){$n = 2/\lambda - 1$}

\thicklines
\put(30,1.0){\framebox(3.0,1.0){mave(n)}}
\end{picture}
\vskip 0.1in
\caption{ 
Given the various mappings, the moving 
average construction can be viewed as a approximation to the 
random walk plus noise model.}
\label{FIG:tslinkb}
\end{figure}

\newpage

\subsection{FFMDPT: A Bollinger Band Variant}
In the previous section,  we showed that the moving average in Bollinger Bands can viewed as an  approximation to the random walk plus noise state space model 
with a signal to noise ratio equal to $q$. First, in order to make the state space notation consistent with the Bollinger Band notation previously used 
where $\beta_{t}$  has been viewed as being synonymous with $mave_{t}$, we modify equations (\ref{EQN:HWrwpn1}) and (\ref{EQN:HWrwpn2}) representing the RWPN  
by replacing $\mu_{t}$ with $\beta_{t}$ everywhere. This results in the RWPN model equations below:
\begin{align} 
y_{t} & = \beta_{t-1} + \epsilon_{t} \label{EQN:HWrwpn1b} \\
\beta_{t} &  = \beta_{t-1} + \eta_{t}  \label{EQN:HWrwpn2b}
\end{align}
\noindent where
\begin{align*}
\epsilon_{t} & \sim \mathrm{N}(0,\sigma^2_{\epsilon}) \\
\eta_{t} & \sim \mathrm{N}(0,\sigma^2_{\eta})
\end{align*}

\noindent Equations (\ref{EQN:HWrwpn1b}) and (\ref{EQN:HWrwpn2b}) shed light into what the BBPT strategy is doing from a state space viewpoint. Again, we consider 
Figure~\ref{FIG:2004_pair} on page  \pageref{FIG:2004_pair}.  From the figure, we can see that after a trade entry in the BBPT strategy, the moving  average, $\hat{\beta_{t}}$, is the dynamic  forecast for $y_{t}$ and in this sense, its forecasted equilibrium price. Therefore, if we view the moving average in 
Bollinger Bands as an approximation to the random walk plus noise model in (\ref{EQN:HWrwpn1b}) and  (\ref{EQN:HWrwpn2b}), then, after a trade entry, 
the Bollinger Band algorithm  continues to receive the new data, $y_{t}$, and the estimate of $\beta_{t}$ in (\ref{EQN:HWrwpn2b}) is updated 
accordingly. Note that a statistically consistent update of $\beta_{t}$  requires that the estimated signal to  noise ratio, $q$, is remaining constant.

\noindent Now, rather than continuing to update the estimate $\hat\beta_{t}$ as if the signal to noise ratio was still $q$ after trade entry, we can make an 
alternative assumption. Suppose that, immediately after trade entry, we view the future $y_{t}$ as missing and then continually forecast as if there was no 
longer any future 
$y_{t}$ available. This assumption could be quite reasonable because any trade entry implies that some kind of unusual observation or outlier with respect
to the current state of the system has been observed. 
Once an outlier is observed during the evolution of a state space model, there is little reason to assume that the state space model is in the same state with respect 
to $q$ as it was before trade entry. Once the  assumption is made that future $y_{t}$ are missing after trade entry, the observation equation (\ref{EQN:HWrwpn1b}) 
no longer exists and the original random walk plus noise model represented by (\ref{EQN:HWrwpn1b}) and (\ref{EQN:HWrwpn2b}) reduces to a pure random walk model for 
$\beta_{t}$ :
\begin{equation} \label{EQN:RW}
\beta_{t} =  \beta_{t-1} + \eta_{t} 
\end{equation}

\noindent  Note that if $\beta_{t}$ is evolving as a random walk, then this implies that the optimal forecast to make after trade entry is $\hat{\beta}_{tradeentry}$ itself, the estimated value of $\beta_{t}$ at 
the time of trade entry. This constant forecast is the first modification we make to the Bollinger Band algorithm. Rather than using the moving average, $mave_{t}$ as the forecast at each time 
$t$ after trade entry, we use the  $mave_{tradeentry}$, namely the known moving average estimate at time t,  as the future forecast at all times t. 
In this way, the forecast for where the  process will revert is a horizontal line segment (i.e. a constant ) starting at $mave_{tradeentry}$ and we call 
this forecast the ``fixed forecast''.  

\noindent An advantage to using the ``fixed forecast'' variant of BBPT (e.g. FFPT) is that if $y_{t}$ does cross through the fixed forecast, the return 
generated by the trade  will always be greater than the return generated by the identical trade in the BBPT strategy. This is because, in a long FFPT trade,
the fixed forecast at time t will always greater than the $mave_{t}$ (i.e: the forecast in the BBPT strategy ) and, in a short FFPT trade, the fixed forecast will 
always less than the $mave_{t}$. Unfortunately, the FFPT algorithm also introduces a serious problem. Clearly, since we are assuming that the $\beta_{t}$ process is
a random walk, the variance of the fixed forecast k periods out is $\sigma^2_{\eta}*k$ so as one goes further and further out, the forecast variance increases 
linearly with k. More problematic is the fact that there is a non-zero probability that the $y_{t}$ process may never revert and cross through the horizontal 
forecast. Conversely, in the case of Bollinger Bands, the possibility of the $y_{t}$ not crossing the forecast is extremely unlikely because the moving average 
forecast is a function of the $y_{t}$ process and essentially tracks it. In fact, in the BBPT strategy, the only scenario in which the $y_{t}$ 
process will not revert to the moving average is one in which the $y_{t}$ process trends permanently in either direction. Clearly this permanent trending 
scenario is extremely unlikely in practice. In the next section, we develop a FFPT trade exit mechanism which remedies its ``trade may never exit'' problem. 

\newpage

\subsection{Restricting the Trade Duration in The Fixed Forecast Variant}
\label{SS:resdur}
Recall that, in Section~\ref{SS:pairtrade}, we saw that the average duration of losing trades in BBPT strategies was greater than the average duration of 
winning trades. Although this relationship was only shown empirically, we formalize the duration-return relationship in the following theorem:

\begin{theorem}  \label{THM:theorem1b}
Assume that the rolling window size in the BBPT strategy  = n,  the band width multiplier = k and that $y_{t}$ touches $BBlower$ at $t = t^{*} - 1$ 
so that a long trade is generated at $t = t^{*}$. Then the total return of this trade is non-negative if and only if the duration of the trade is less 
than or equal to n; i.e. the trade is exited at a time less than or equal to $t^{**} = t^{*} + n - 1$. This result is independent of the bandwidth multiplier 
parameter k.
\end{theorem}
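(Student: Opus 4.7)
The argument reduces to a single algebraic identity extracted from the exit condition, followed by a case split on whether the entry time $t^*$ still lies inside the rolling window at exit. Since the trade closes the first time $y$ crosses the moving average from below, I take $y_{t^{**}} = mave_{t^{**}}$; substituting the definition of $mave$ from~(\ref{E:bbmave}) and peeling the $s=t^{**}$ term out of the average yields
$$(n-1)\,y_{t^{**}} = \sum_{s=t^{**}-n+1}^{t^{**}-1} y_s,$$
so that $y_{t^{**}}$ is the unweighted mean of the $n-1$ observations preceding exit. Subtracting $(n-1)y_{t^*}$ from both sides rewrites the total return as
$$(n-1)(y_{t^{**}} - y_{t^*}) = \sum_{s=t^{**}-n+1}^{t^{**}-1} (y_s - y_{t^*}),\qquad(\star)$$
and this identity will do the heavy lifting.

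\noindent Observe next that the duration bound $t^{**} \leq t^* + n - 1$ is exactly the statement $t^* \in [t^{**}-n+1,\,t^{**}-1]$: the entry time is one of the indices being summed on the right of $(\star)$. When $t^*$ lies in that interval the term at $s=t^*$ vanishes, and the remaining indices split between pre-trade indices ($s<t^*$) and in-trade indices ($s>t^*$); when $t^*$ lies outside, every summand is indexed by an in-trade time. To show that duration $\leq n$ forces $R \geq 0$ I would combine $(\star)$ with the continuation inequalities $y_s < mave_s$, valid for every $s \in [t^*-1,\,t^{**}-1]$, together with the exit equality at $s=t^{**}$. A shifted summation of those inequalities, which upon multiplying by $n$ become statements $n y_s < \sum_{r=s-n+1}^{s} y_r$ about partial sums of the series, controls the in-trade contributions $y_s - y_{t^*}$ on the right of $(\star)$ in terms of the pre-trade contributions, forcing the sum to be non-negative. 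The converse direction (duration $> n$ forces $R < 0$) then drops out of the same summation: with $t^*$ outside the window the pre-trade contribution is absent, and the same chain of strict inequalities forces the sum strictly negative.

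\noindent The main obstacle is precisely this last step of packaging the chain of continuation inequalities into a clean sign for $(\star)$. I expect the cleanest route to use the telescoping identity $mave_{t+1} - mave_t = (y_{t+1} - y_{t-n+1})/n$ together with the fact that the inequality $y_s \leq mave_s$ becomes equality exactly at $s=t^{**}$, so that the cumulative deficit of $y$ below $mave$ accumulated during the trade ties back to the entry value and flips sign exactly as the window boundary $t^{**}-n+1$ slides past $t^*$. The $k$-independence asserted at the end of the theorem is automatic: the bandwidth multiplier enters only through the signal inequality $y_{t^*-1} \leq mave_{t^*-1} - k\hat{\sigma}_{t^*-1}$ that generates the trade, whereas the identity $(\star)$ and all continuation inequalities never reference $k$.
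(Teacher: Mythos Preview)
Your identity $(\star)$ is correct and is a cleaner starting point than anything the paper writes down. The paper takes an entirely different route: it assembles a chain of eight lemmas based on ``extreme scenario'' bounds for $mave_{t^{**}}$ (Lemmas~3--6), a cap on the maximum achievable return in terms of the entry gap $mave_{t^*}-\log y_{t^*}$ (Lemma~7), and an exit/non-exit dichotomy at the window boundary (Lemma~8). The forward direction is then argued by a \emph{restart}: if the trade survives past $t^{**}=t^*+n-1$, one regards the position at $t^{**}$ as a fresh trade whose entry gap is $mave_{t^{**}}-\log y_{t^{**}}=\epsilon_1-\epsilon_2$; Lemma~7 caps any further gain strictly below that gap, so the cumulative return stays below $-\epsilon_2<0$. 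Your algebraic reduction to $(\star)$ is more direct than this scenario machinery, and the observation that ``duration $\le n$'' is literally the statement ``$t^*$ lies in the exit window'' is exactly the right organizing principle.

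That said, the step you flag as ``the main obstacle'' is a genuine gap, and it is sharper than you indicate for the direction ``duration $\le n\Rightarrow$ return $\ge 0$''. In that case the right side of $(\star)$ contains pre-trade indices $s\in[t_{\mathrm{exit}}-n+1,\,t^*-1]$, and your continuation inequalities $y_s<mave_s$ reach back only to $s=t^*-1$; nothing in your outline controls $y_s-y_{t^*}$ for $s<t^*-1$. The paper has to confront the same issue and does so by importing an extra hypothesis, stated only in footnotes to Lemma~7 and to the only-if argument: that no separate long trade was completed in the preceding $n'$ periods, which is what forces $y_s>y_{t^*}-\epsilon$ on those pre-trade indices. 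Without some such side condition the implication can fail, so your telescoping identity $mave_{t+1}-mave_t=(y_{t+1}-y_{t+1-n})/n$ cannot close the argument on its own---it will always leave a boundary term involving pre-trade values that the continuation inequalities do not touch. For the other direction every index in $(\star)$ is in-trade and the situation is better, but you still need a device that converts the pointwise inequalities $y_s<mave_s$ into a sign for the aggregate; the paper's Lemma~7 is precisely that device, and you would need to prove an analogue of it inside your framework.
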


\begin{proof}
See Appendix E.
\end{proof}
\noindent We should point out that Theorem \ref{THM:theorem1b} assumes that slippage does not occur during entry nor exit. By absence of
slippage, we assume that during the time between the trade entry signal and the trade entry,  price erosion does not occur. Similarly,
during the time between the trade exit signal and trade exit, we also assume that price erosion does not occur. If either of these 
assumptions are not true, then Theorem \ref{THM:theorem1b} will hold only approximately. In fact, if we return to the empirical evidence in 
Section~\ref{SS:pairtrade}, we see that the average losing trade durations in 2007 for k = 1 and k = 2 and n = 30 were actually slightly less than n = 30. 
This is inconsistent with Theorem \ref{THM:theorem1b} but the inconsistency only occurs because the  BBPT simulations assume that the there is a one day lag 
between the entry signal and the entry and a one day lag between the exit signal and the exit. For the k=1 and k=2 cases in 2007 using n = 30, 
slippage did occur and caused the average duration of the losing trades to be less than the rolling window size n = 30.

\noindent Theorem \ref{THM:theorem1b} suggests that a reasonable exit time for a trade in FFPT 
is the rolling window size itself.\footnote{Using technical analysis terminology, an exit rule based on a pre-determined length of time is referred to as 
a time 
stop.} The logic behind this idea is that once the trade duration becomes greater than the window size, by Theorem \ref{THM:theorem1b}, the trade 
cannot possibly have an overall positive return. Therefore, intuitively the rolling window size serves as  a reasonable time to exit the trade. Also, exiting at a 
time t equal to the rolling window size remedies the original ``trade may never exit'' problem associated with the FFPT approach.  Therefore, restricting the 
maximum duration of all FFMDPT trades to the window size $n$ is the second and final modification to the BBPT strategy and results in a variant we call the Fixed 
Forecast Maximum Duration Pairs Trading strategy (FFMDPT). To summarize, the first component of FFMDPT is that the forecast is a constant equal to the original 
moving average at trade entry.  Secondly, assuming that the window  size = n, then, if a trade has not exited by n periods, the trade is exited at the nth period. 
Notice that the parameter $n$ in FFMDPT has a different role from its role in BBPT. In the FFMDPT strategy, the parameter $n$ is used to calculate the candidate 
entries $BBUpper$ and $BBLower$ and the exit point $mave_{t}$. In FFMDPT, the entry signals are the same as those in BBPT but the exit signal differs because the 
maximum duration of any trade is equal to the rolling window size. Figure~\ref{FIG:three_plots} in Appendix F illustrates the simulation results of the BPPT 
strategy and the FFMDPT strategy for the 2004 SAP-Nikkei data  using $n=20$ and $k=2$. Note that in this specific example, the return generated by the FFMDPT 
strategy is about three percent greater than that of the BBPT strategy but this will not always be the case in general. We illustrate specific differences between 
FFMDPT and BBPT with two detailed examples in the section that follows.

\subsection{BBPT versus FFMDPT: Two Examples}\label{SS:compare}
\noindent In what follows, we construct two simulations to illustrate how the different exit rules can effect the relative performance of the BBPT and FFMDPT 
strategies. The first simulation ran through the full year of 2004 using $n=20$ and $k=2$. In this simulation, the FFMDPT strategy outperforms the BBPT strategy. The 
second simulation ran through the full the year of 2005 again using the parameters $n=20$ and $k=2$. In this simulation, the FFMDPT strategy underperforms the BBPT 
strategy. The plots displaying the results of the first simulation and second simulation are shown in Appendix H on page \pageref{FIG:two_plotsa} and
\pageref{FIG:two_plotsb} respectively. In what follows, we analyze specific trades associated with the two simulations in detail.

\noindent Consider the 2004 simulation. The FFMDPT strategy outperforms the BBPT strategy by approximately three percent. Although it may not be 
obvious from the plot, although the three trades in February, April and June generates similar returns in both strategies, the trades in the FFMDPT strategy generate 
slightly larger returns than the corresponding trades in the BBPT strategy. This is due to the fixed forecast creating larger vertical distance 
between the entry point and the exit point. Therefore, conditional on $y_{t}$ touching or crossing the fixed forecast in the FFMDPT strategy, the associated 
return will be larger than the return of the same trade in BBPT. Next, consider the fifth trade in both strategies at the beginning of August. In the FFMDPT 
strategy, the trade has a longer duration because it needs to reach the horizontal forecast before it exits. Consequently, the FFMDPT August trade 
results in a small loss.  On the other hand, the same August trade in the BBPT strategy exits quickly because the moving average forecast is reached 
sooner than the fixed horizontal forecast. Therefore, the BBPT trade exits earlier than the FFMDPT resulting in a larger loss. In this particular simulation, 
because FFMDPT trades are held until the fixed forecast is reached, this resulted in larger winning trades and smaller losing trades compared to the same trades 
in BBPT resulting in relative overperformance of the FFMDPT strategy in 2004.

\noindent Next consider the 2005 simulation. The FFMDPT underperforms the BBPT strategy by approximately 1.2 percent. The first trade in the BBPT strategy 
triggered at the beginning of February generates a positive return but the same trade in FFMDPT generates a negative return because the horizontal forecast is 
never reached so the trade is exited when it reaches the maximum duration. Unfortunately the maximum duration is reached after the $y_{t}$ process has hit the 
moving average and then decreased. The FFMDPT trade exits at a significant loss of almost 1\% while the same trade in the BBPT strategy generates a positive 
return. The rest of the trades in 2005 generate approximately the same return in both strategies so the underperformance of the FFMDPT strategy is 
essential due to the behavior of the FFMDPT trade at the beginning of February.

\subsection{FFMDPT versus BBPT: An Optimized Simulation}
\noindent  Although the 2004 and 2005 simulations highlight the differences between BBPT and FFDMPT, unfortunately the results are dependent on 
whether the asset pair being traded, namely the SAP-Nikkei, exhibits mean reversion.\footnote{The mean reversion assumption can be tested statistically by 
using the Two Step Engle-Granger test for cointegration \cite{DH2011}.}  There exist various methods for checking or testing  its existence historically, 
but whether the mean reversion behavior persists in the future is also uncertain. The point is that any comparison of the effectiveness of pairs trading strategies 
is confounded with the possibly unstable mean reverting behavior of the asset pair being traded. The 2004 and 2005 simulations of the SAP-Nikkei Index  pair 
assume that the traded asset pair, SAP-Nikkei, is cointegrated and this assumption is critical to the performance comparison of the two strategies, BBPT and FFMDPT. 
In fact, the FFMDPT strategy is even more dependent on mean reversion behavior because it requires that the $y_{t}$ process revert all the way back to the forecast 
at trade entry. In the case of the BBPT strategy, a positive return can be generated by a trade even when the $y_{t}$ process does not return to the trade entry forecast.
Nevertheless, as long as we are aware that the mean reversion issue makes the results less definitive,  we can compare the  performance of the respective pairs 
trading strategies assuming that the SAP-Nikkei  Index pair is mean reverting.  

\newpage

\noindent As explained earlier, the functionality of the parameter $n$ in the FFMDPT strategy is different from that of the parameter $n$ in the BBPT strategy. 
Therefore using the same parameter combination of $n$ and $k$ when comparing the two strategies is not necessarily correct. Given the difference in 
functionality, it is more reasonable to view the parameter $n$ as being a different parameter in each strategy, namely $n_{FFMDPT}$ and $n_{BBPT}$ respectively.  
A more robust methodology for analyzing strategy performance is to first find the optimal parameters $n_{BBPT}$ and $n_{FFMDPT}$ for the respective 
pair strategies over some historical period called the in sample period. These optimized values can then  be used out of sample and the out of sample performance 
compared. Using R \cite{R11}, this methodology was implemented for the  SAP-Nikkei index pair for each year from 2003-2011. Three simulations were run for which the 
value of the parameter $k$ was 1,1.5 and 2.0.  For each year, the optimal values of $n_{BBPT}$ and $n_{FFMDPT}$ were found by searching over values from 10 to 50 in 
steps of 1. These optimal values were then used in the following year and the performance in that year was calculated. The results are shown in Tables I, II and III 
in Appendix I where 2003-4 denotes that the in sample period was 2003 and the out of sample period was 2004.\footnote{For the 2010-11 period only the first four 
months of data in 2011 was available.} Table I indicates that there is no clear performance difference for $k=1$. In four of the eight years, 
BBPT outperformed FFMDPT with the only large difference occurring in 2008 when FFMDPT outperformed BBPT by almost ten percent. In Table II when $k=1.5$ was used, 
the BBPT strategy significantly outperforms the FFMDPT strategy. In fact, in six out of the eight years, the BBPT returns are larger and, in 2004, 2005 and 2007, 
more than five percent larger. Finally, for k = 2, the BBPT strategy again outperforms the FFMDPT strategy in six out of the 8 years. In this case, the 
differences generally hover around the 2\% level. 

\noindent In addition to the question of the asset pair possessing mean reverting behavior, another problem with assessing relative performance using an optimization
approach is the possibility of overfitting.  A current pitfall of the optimization approach is that the model may stick too closely to the data over which the 
optimization was performed. 
Consequently, the model ends up learning irrelevant details of the in sample data which leads to poor generalization when the parameters are used on the out of 
sample data. A informal way of thinking about overfitting is that  because such a fine search was used to find the optimal value of the respective 
parameters in the in sample period, it might be the case that a diamond was found in sample but corrodes quickly when used out of sample. There are various 
methodologies in the literature that attempt to deal with the issue of overfitting but these will not be discussed in this study. For an interesting discussion 
of overfitting and methods for lessening its effect, the reader is referred to \cite{MO11}.  One possible way to reduce the amount of overfitting is to optimize 
more frequently. For example, rather than optimizing the parameters over one year and then calculating the performance in the following year, we could use a 
shorter in sample and out of sample period such as three months.\footnote{Note that three months is the shortest sample period that could be used because the 
optimization uses a grid from n = 10 to n = 50. Since we want the n=10 simulation and the n=50 simulation to start at the same time, a minimum of fifty data 
points are required to calculate the first moving average.} Another possible way to decrease the possibility of overfitting is to reduce the parameter grid 
size search. For example, one could decrease the number of candidate values of $n_{FFDMPT}$ and $n_{BBPT}$ by only allowing values from 1 to 50 in steps of say 5.

\noindent In summary, although the simulation results provide some evidence that the BBPT strategy outperforms 
the FFMDPT strategy, there are issues that limit the strength of this evidence. The first is the question of the existence of mean reversion behavior in the 
specific asset pair analyzed. The second is the possibility of overfitting when we are optimizing over the parameters $n_{FFMDPT}$ and $n_{BBPT}$.

\section{Conclusions and Future Research}\label{S:FUTRES}
This article contributed towards reconciling the relationship between Bollinger Bands and time series models. First we showed that,
aside from requiring a slight modification to the prediction bands, the Bollinger Band components can be mapped exactly to the outputs of a classical regression 
model. This mapping provides a statistical foundation for Bollinger Bands and eliminates the algorithmic and ad hoc reputation it has had until now.
Also, through the use of a series of relations linking various time series models, we were able to show that Bollinger Bands can be viewed as a reasonable
approximation to the random walk plus noise (RWPN) state space model. 

\noindent Next we proved an interesting result connecting the return-duration relationship in Bollinger Bands. 
Although the result of theorem was proven with respect to Bollinger Bands, its importance lies in the fact that it holds for all cases where one uses distance from 
a moving average as an entry signal and reversion to the moving average as an exit signal. Then by modifying the underlying assumption of the approximate RWPN model 
and using the return-duration result for moving averages, we developed a variant  of Bollinger Bands called Fixed Forecast Maximum Duration Bands (FFMDPT). In the case 
of the SAP-Nikkei data from 2003-2010, FFMDPT generates returns that generally underperform the BBPT strategy particularly when $k=1.5$. At the same time, there are l
imitations to the strength of the result when doing such a strategy comparison and some possible remedies for the limitations were discussed.

\noindent One future research possibility is to compare the FFMDPT and BBPT strategies but optimize over the parameters $n$ and $k$ jointly. Although it was 
shown that the return-duration proof was independent of $k$, the entry signal in FFMDPT and BBPT is still dependent on $k$. Therefore, optimizing the two parameters 
$n$ and $k$ jointly may lead to different results. Also, it would be useful if additional pairs were investigated so that the performance results were not specific 
to the SAP-Nikkei Index pair. Tests for cointegration of the pairs could be done in order to ensure that only pairs that were cointegrated historically were included 
in the performance comparison.

\noindent Another research area would be to take the standard Bollinger Band pairs trading strategy and use the theorem result to change the standard exit
rule by exiting the trade  whenever the trading duration is equal to the rolling window size.  Although this type of exit rule does mean that one has given up the chance
to gain back some of the loss from the losing trade, could possibly utilize capital more efficiently by restricting live trades 
to only those which have a chance of being profitable. Clearly, this type of strategy implies that the Bollinger Band parameters, $n$ and $k$ may need to be changed also. 

\noindent Finally, another research direction would involve implementing a statistically consistent approach for capturing reversion behavior.  Bollinger Bands 
are still only an approximation to a random walk plus noise model because the algorithm is such that observations with an age greater than the rolling window size 
ago are given a weight of zero during estimation. Rather than using Bollinger Bands to implement the pairs trading strategy, the Kalman filter approach could be used 
directly by utilizing the recursive updating equations. Implementing the Kalman filter approach would  eliminate the  parameter $n$ but would necessitate estimating
the observation variance and the system variance (i.e. $q$). The possibility of overfitting would no longer be an issue because there would no longer be a need for 
optimization over the rolling window size parameter $n$. At the same time, one would still need to do an investigation into what the optimal re-estimation frequency for 
the RWPN variances should be.

\newpage
\section{Acknowledgements}
The author wishes to acknowledge Professor Chanseok Park, Clemson University, Dept of Mathematical Sciences and Louis Kates, GKX Associates 
for many helpful discussions that improved the quality of this article. Thanks also goes to John Bollinger of Bollinger Capital Management for 
providing the SPX-Nikkei data and also the impetus for this study when we  met at the R-Finance conference in May, 2011. Finally, an 
acknowledgment goes to my advisor, Professor Keith Ord, Georgetown University, McDonough School of Business for spurring my initial interest 
in time series models.



\clearpage

\noindent \textbf{Mark Leeds} is a statistical consultant in New York City specializing in financial econometrics and time series analysis. He 
received a B.S in Operations Research from Columbia University in 1988; the M.S. in Operations Research and Statistics from Rensselaer 
Polytechnic University in 1990; and a Ph.D. in statistics from Pennsylvania State University in 2000. His work focuses on using econometric 
methodologies to uncover anomalies in the capital markets. His email address is markleeds2@gmail.com.

\clearpage
\part*{\LARGE\centering Appendices}
\appendix  

\section{~~~~~Illustrating the Bollinger Band Construction}
\begin{figure}[htb] 
\begin{center}
  \includegraphics[scale=0.70]{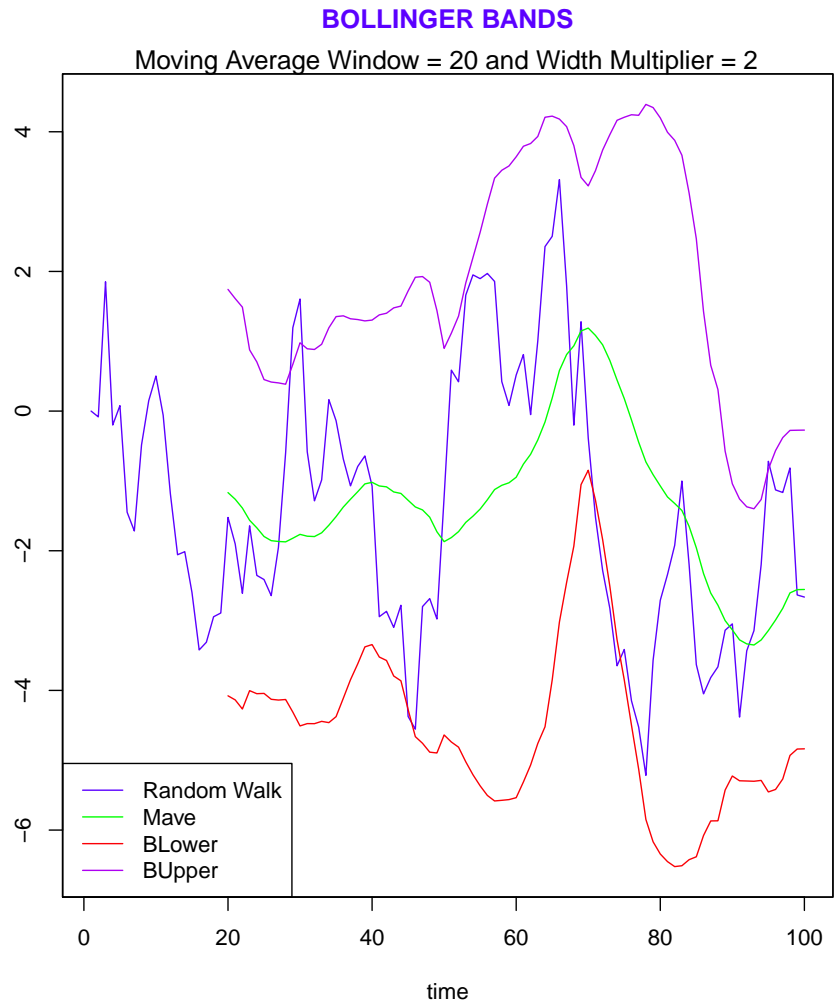}
\end{center}
\caption{ 
A simple example that illustrates the construction of Bollinger Bands. A random walk series
was generated initially. The green center line is the n = 20 day moving average of the random walk series and 
BBupper and BBlower are k = 2 standard deviations above and below the center line. }
\label{FIG:boll_pic1}
\end{figure}

\clearpage
\section{~~~Illustrating the Use of Bollinger Bands in Pairs Trading}
\begin{figure}[htb] 
\begin{center}
  \includegraphics[scale=0.9]{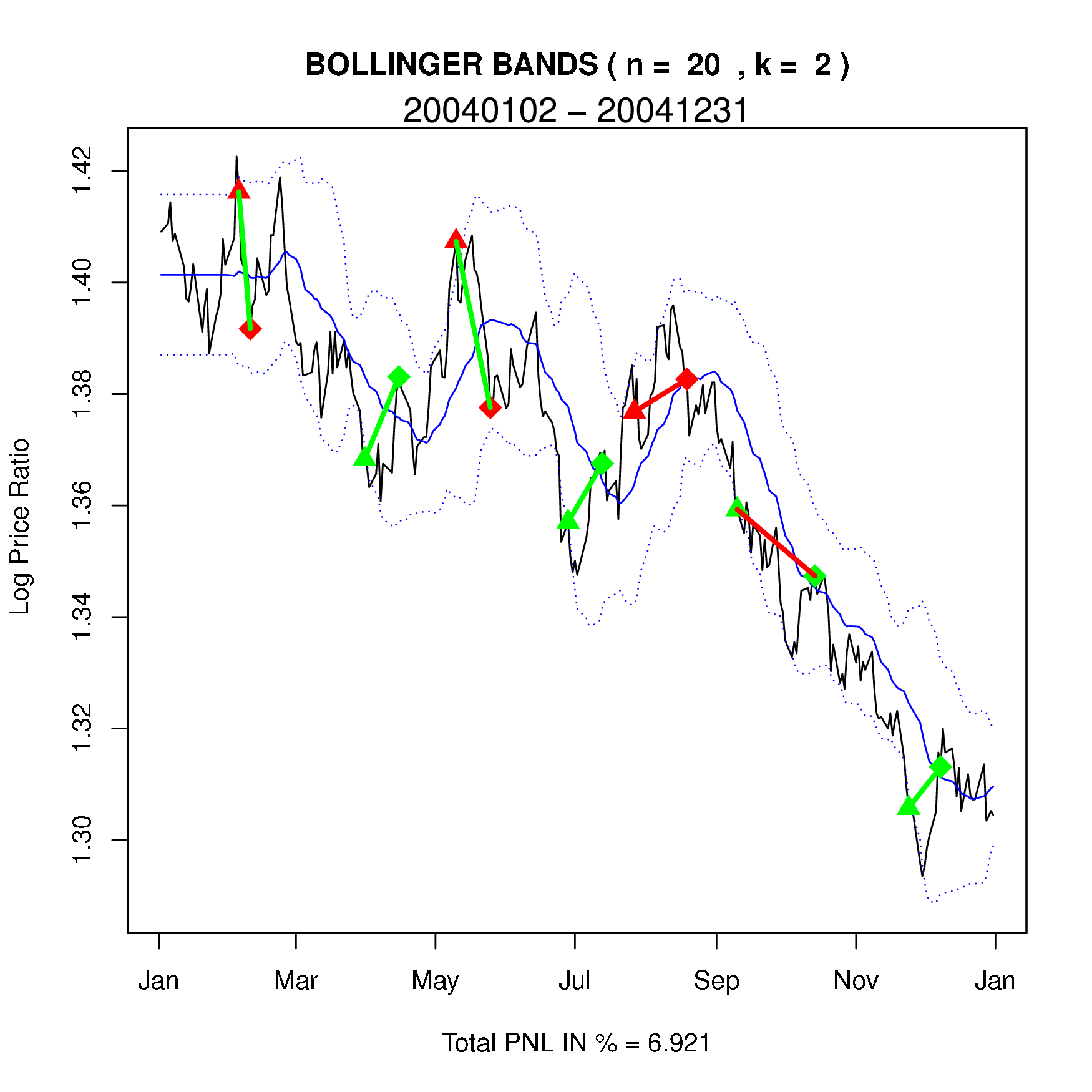}
\end{center}
\caption{ 
The use of Bollinger Bands in a pairs trading strategy. The
center line in the figure is the n = 20 day moving average and BBupper and BBlower are k = 2 standard
deviations above and below the center line. The crossing of BBupper from below (i.e. red arrow) or BBlower from above (i.e. green arrow) triggers a 
short trade (i.e. red arrow) or long trade (i.e. green arrow ). The position is held (i.e. line extends) until the series reverts to the center line. 
resulting in either a winning trade (i.e. green line)  or losing trade (i.e. red line).  Note that Z = SAP Index and X =  Nikkie Index so that the plotted series 
is  $y = ln(P_{z}/P_{x})_{t}$.
}\label{FIG:2004_pair}
\end{figure}

\clearpage
\section{BBPT Winning Trades Have Shorter Durations Than Losing Trades}
\begin{figure}[h!]
\centering\includegraphics{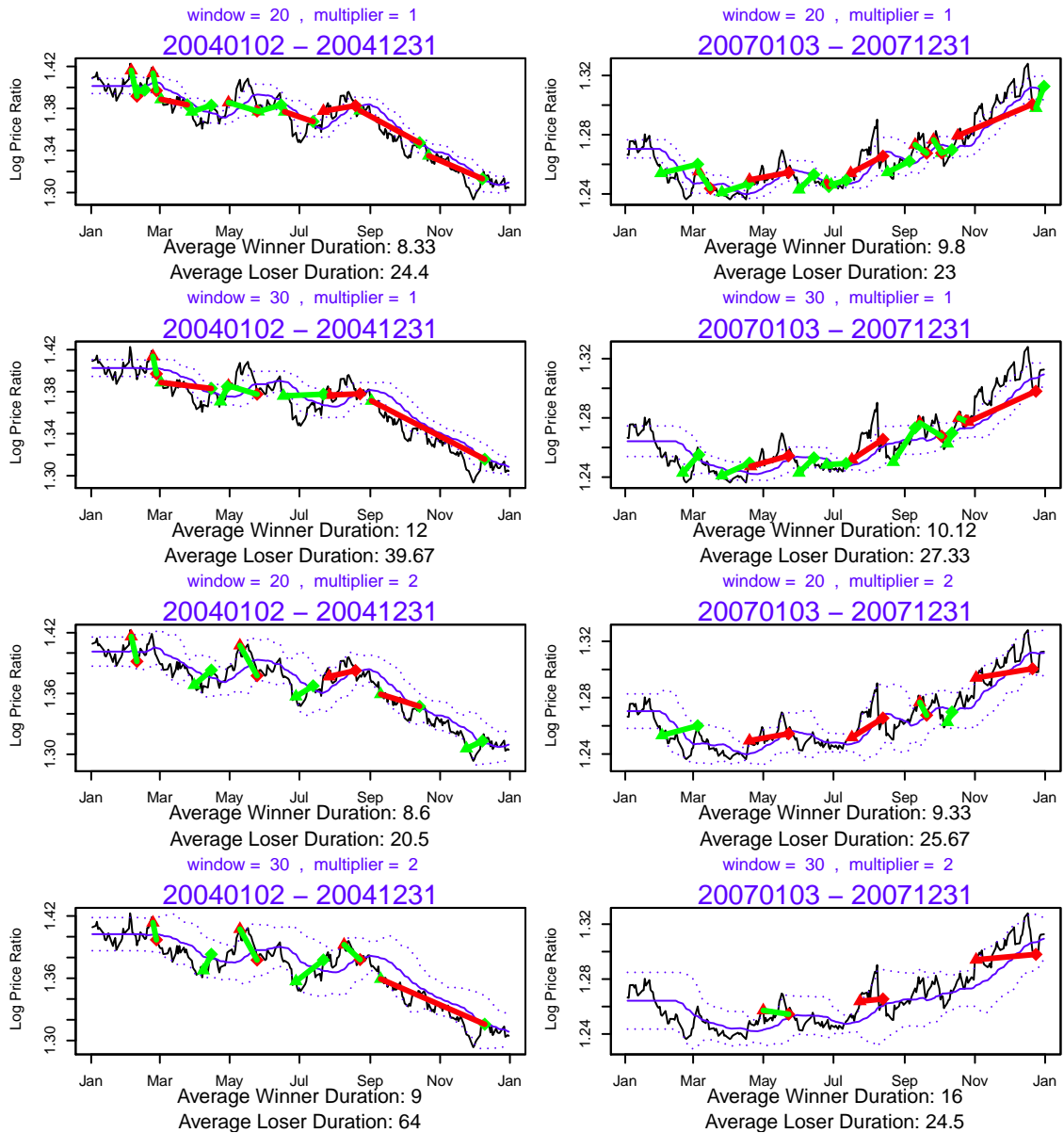} 
\caption{The results of the Bollinger Band pairs trading strategy of the SAP versus the Nikkei simulated over different time periods using various values of the
parameters n and k. The average durations indicate that the nature of the Bollinger Band methodology is such that winning trades have an average 
duration consistently shorter than that of losing trades. 
}\label{FIG:all_pairs}

\end{figure}
\clearpage
\section{An Illustration of the Weighted Age Relation $\lambda = \frac{2}{n+1}$} 
\begin{figure}[tbh] 
\centering\includegraphics[scale=0.9]{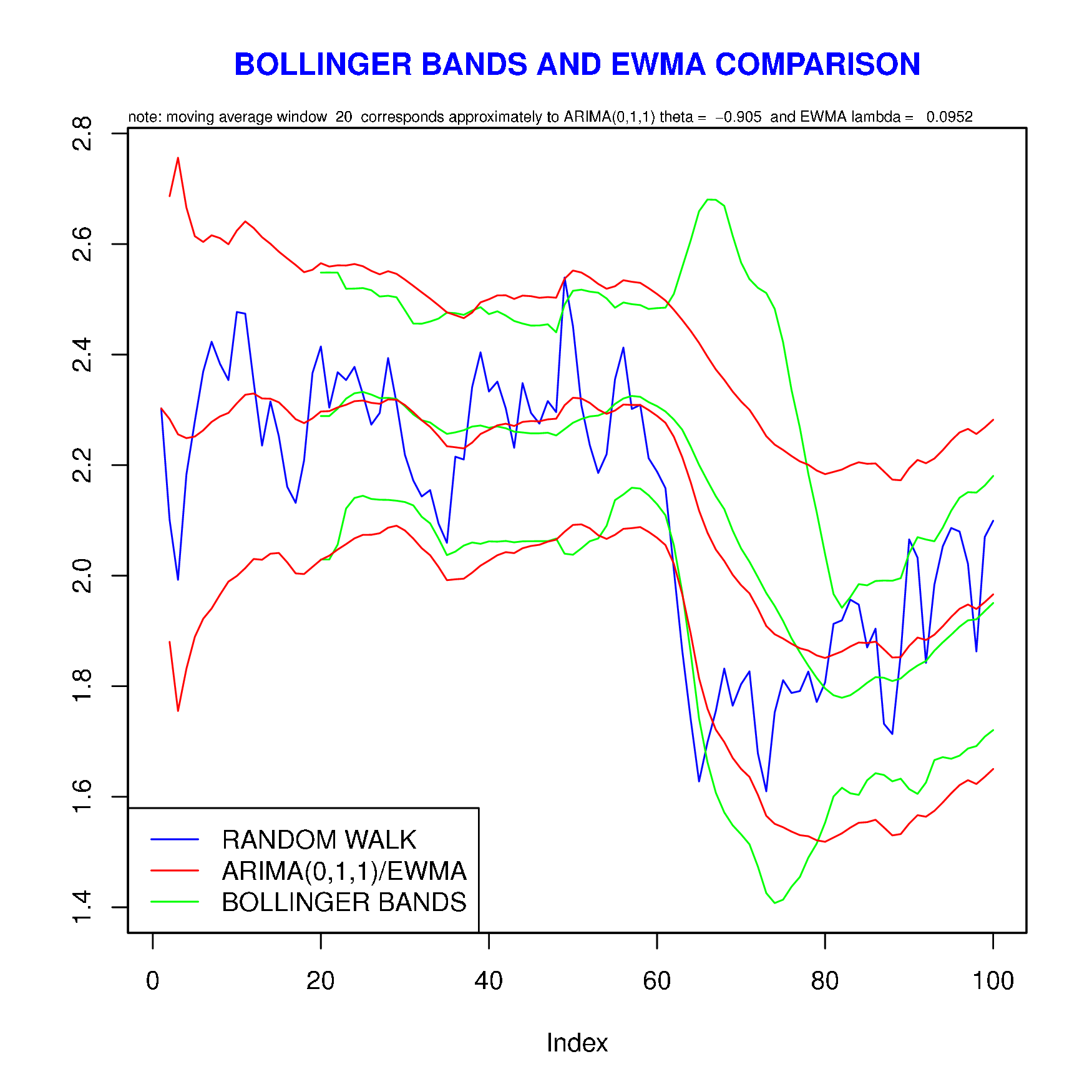}
\caption{The figure shows how similar Bollinger Bands are to the EWMA when the weighted age is matched using the relation $\lambda = \frac{2}{n + 1}$
The approximation is quite reasonable with respect to the center line but not quite as close with respect to $BBUpper$ and $BBLower$}. 
\label{FIG:ewma_boll}
\end{figure}
\clearpage
\section{A Proof That Any Trade In A Bollinger Band Pairs Trading Strategy Has a Non-Negative Return If and Only If The Total Duration Is Less Than 
Or Equal To n Where n Is The Rolling Window Size.}

\noindent Before going into the details of the theorem, we should point out that the theorem is applicable to a larger class of models than just the BBPT strategy. 
Since the theorem result is independent of the band width multiplier, $k$, it is applicable to any trading algorithm where the log
price  of the traded asset 
being some threshold distance away from its moving average triggers the entry signal and the subsequent crossing back of the log price of the traded asset through 
the moving average triggers the exit signal. A well known example of such a strategy is the BBPT strategy but there may be other propietary moving 
average type strategies that meet this criterion also. One obvious example would be where Bollinger Bands are used on a single stock itself rather 
than a pair of stocks. In that case, $y_{t}$ would denote the price itself rather than the price ratio but the result of the theorem would still apply.

\noindent Since Theorem~\ref{THM:theorem1b} requires nine lemmas before it can be proven, details about the notation used and assumptions made are provided below.
The two figures that then follow make the assumptions and notation described below more tangible. Figure \ref{FIG:plot_0} on page \pageref{FIG:plot_0} 
illustrates how the endpoints of the moving average duration are defined. Figure \ref{FIG:plot_0b} on page \pageref{FIG:plot_0b} uses a long trade as an 
example to illustrate other assumptions.

\begin{itemize}
\item[1.] $y_{t}$ denotes the price ratio of the paired asset at time $t$ and the term ``Bollinger Band exit rule'' refers to the rule where one exits
from the trade when the $log(priceratio)$ at time $t$ crosses through or is equal to the moving average of the $log(priceratio)$ at time $t$.

\item[2.] $t^{*}$ denotes the entry time of a trade and $t^{**}$ denotes the time period associated with a trade duration equal to the window size.   
For example, if the window size $n$ was equal to $10$ and a trade was entered into at $t^{*} = 15$, then $t^{**} = t^{*} + n - 1 = 24$.  Therefore, a trade entered
into at the beginning of $t=t^{*}$ and exited from at the end of $t=t^{**}$ would be viewed as having trade duration of $10$ periods. 

\item[3.] We assume a one period time lag between the entry signal time and trade entry time purely for clarity purposes.
We assume that slippage\footnote{Slippage during entry is defined as price erosion due to the delay between when a trade entry is signaled and when the
trade is actually entered. Slippage during exit is defined analogously.} does not occur during the one unit time period between the entry signal and the 
trade entry. In fact, with respect to entry slippage, we go further than this by assuming that not only is there never price erosion during entry but 
also that there is a non-zero infinitesimal price improvement\footnote{The price improvement on entry assumption would not be necessary if we assumed that 
there was no time lag between 
the signal and the entry. By assuming a one period time lag, we separate the defined window from the signal price and provides more clarity when explaining the steps
of the proof}. For example, if a long entry signal is triggered by a price say $log(y_{t^{*}-1})$, then we will assume that the actual entry occurs at a price 
$log(y_{t^{*}-1}) - \delta$ where $\delta > 0$. This additional price improvement assumption is only needed so that edge cases do not need to be considered in the 
steps of the proof. Also, the size of $\delta$ does not affect the derivation of the result as long as it positive so it can be assumed that $\delta$ is 
infinitesimally small. 

\item[4.] One can think of the discrete time block denoted by $t$ as having a length of one period and an n period window as being a 
set of these $n$ blocks stacked adjacently to each other. The convention associated with a given window with endpoints $t^{*}$ and $t^{**}$ is that 
entries can occur at $t^{*},t^{*}+1,t^{*}+2,t^{*}+3,\ldots, t^{**}$ and exits can occur at $t^{*}+1,t^{*}+2,t^{*}+3,t^{*}+4,\ldots,t^{**}+1$. If we say that 
there was an exit signal at time $t^{\prime}$, implicit in this statement is that the first possible exit time is the end of time $t^{\prime}$ 
which is equivalent to the time right before the beginning of period $t=t^{\prime}+1$. Note that this does not imply the possibility of slippage during exit 
because it is assumed that the price process of $log(y_{t})$ is discrete so that the price does not change during the time block associated with the period 
labeled $t^{\prime}$. Simply speaking, we assume that the discrete price process is such that the value of  $log(y_{t^{\prime}})$ at the beginning of the time block 
denoted by $t^{\prime}$ is the same the value of $log(y_{t^\prime})$ at the end of the same time block $t^{\prime}$.  We then assume an instantaneous change in 
$log(y_{t})$ just as the beginning of the new time block $t^{\prime}+1$ is reached. Obviously, this is an over-simplification of how the actual price process 
evolves but for purposes of clarity we make this assumption.  The main point is that assuming an exit always occurs at the end of a time block is equivalent to 
assuming that there is an instantaneous exit whenever an exit signal is observed and is therefore equivalent to assuming zero slippage on exit.

\item[5.] Although the proof relies on the use of $log(price ratio)$ as the scale on the vertical axis, this is not a restrictive assumption because 
the entries and exits calculated can always be transformed back to price ratio space. The relation that results from using the $log(price ratio)$ on the 
vertical axis will be stated as a lemma. Consequently, before proving the main result, we need to prove this lemma and various other lemmas which will be used in the 
proof of Theorem~\ref{THM:theorem1b}.\footnote{All of the lemmas contain arguments under the assumption that the BBPT strategy generated a long trade. This is without 
loss of generality because symmetric arguments hold if the assumption was that a short trade had been generated.} 
\end{itemize}

\begin{figure}[t!] 
\begin{center}
\includegraphics[scale=1.0]{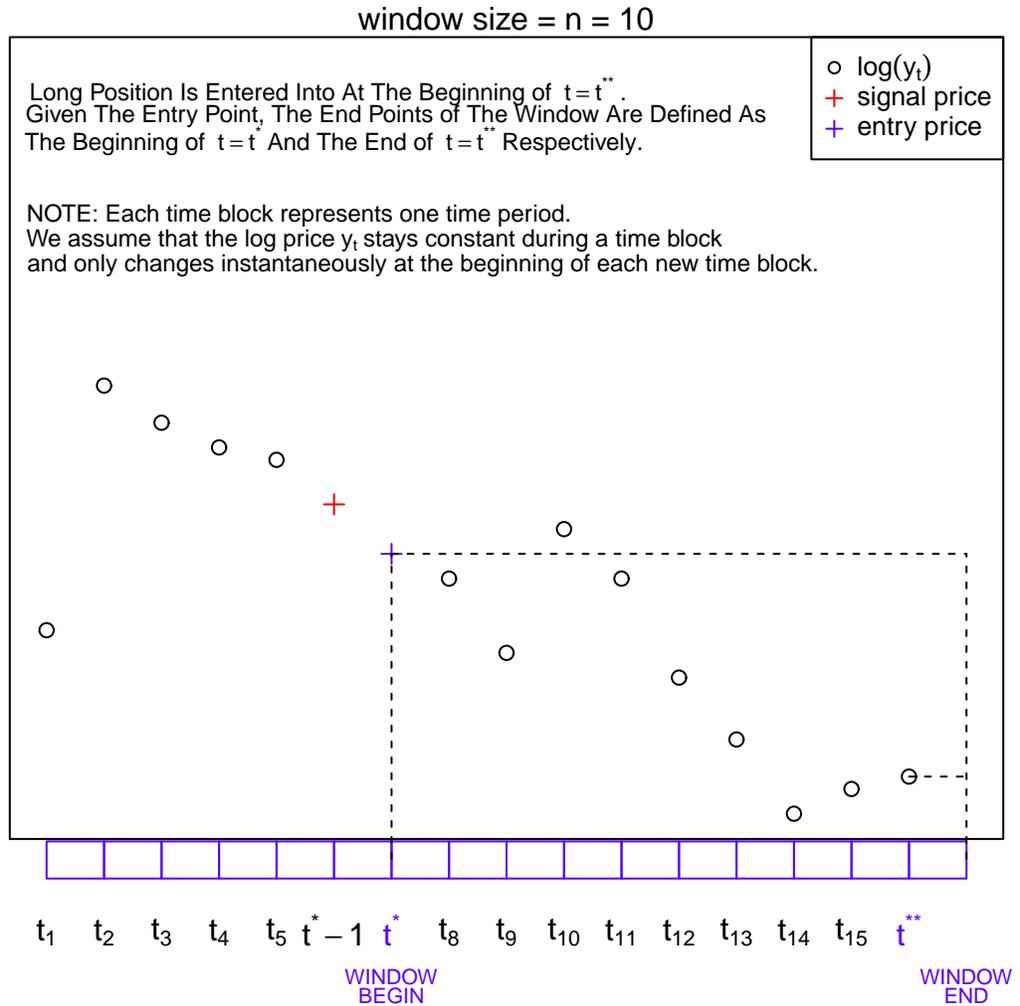}
\end{center}
\caption{ 
Illustrating how the time periods and window duration endpoints are constructed in the proofs of the lemmas and the main theorem. 
A long trade is triggered at $t=t^{*}-1$ and entered into at the beginning of $t=t^{*}$. The first possible exit is at the end of $t=t^{*}$ which
is equivalent to the beginning of $t=t^{*}+1$. The trade duration is equivalent to the rolling window size = 10 when the trade exit occurs at the end of 
$t=t^{**}$ which is equivalent to the beginning of the period $t=t^{**}+1$.}
\label{FIG:plot_0}
\end{figure}
\clearpage
\begin{figure}[h!] 
\begin{center}
\includegraphics[scale=1.0]{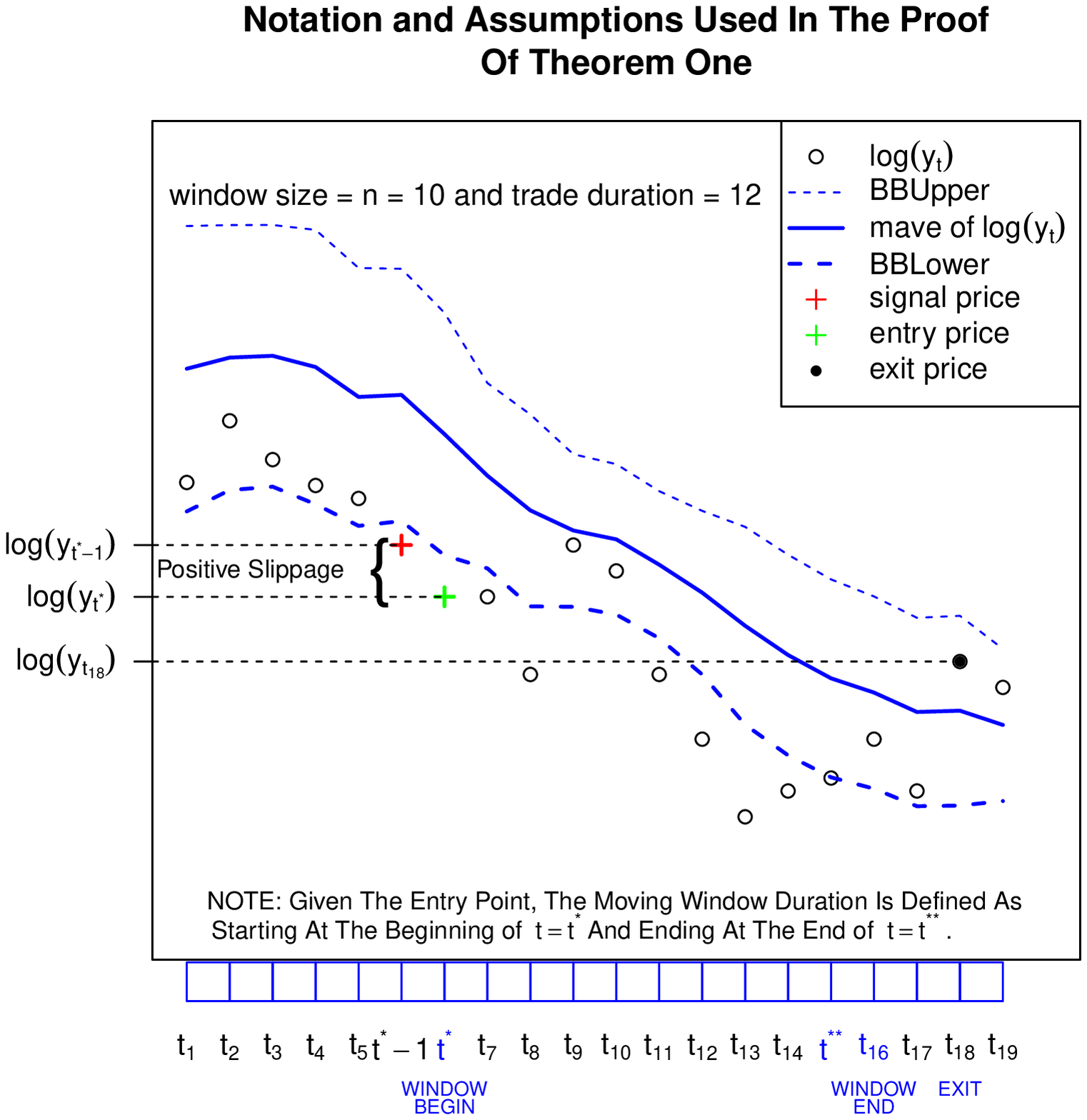}
\end{center}
\caption{ 
Illustrating the various notation and assumptions used in the proofs of the various lemmas and main theorem. A long trade is triggered at $t=t^{*}-1$ and 
entered into at $t=t^{*}$. The total trade duration is longer than the moving average window duration because the actual trade duration = 12 and the moving average 
window duration = 10. The overall log return of the trade is negative.}
\label{FIG:plot_0b}
\end{figure}
\clearpage
\subsection{Preliminary Lemmas}
\begin{lemma}  \label{LEM:lemma1}
Let $y_{t}$ denote the price ratio of a paired asset at time $t$ and consider a window of size n whose endpoints are denoted as $t^{*}$ and $t^{**}$. 
Assume that a long trade has been generated at the beginning of $t=t^{*}-1$ so that the entry takes place at the beginning of $t=t^{*}$. Suppose that 
the Bollinger Band moving exit rule is ignored in that the position is held for a fixed n = rolling window size periods. If the overall log return of the 
paired asset over the period from the beginning of  $t = t^{*}$ to the end of $t = t^{**}$ is $\epsilon$ where $-\infty < \epsilon < \infty$, then the
following relation holds:
\begin{equation*}
log(y_{t^{**}}) = log(y_{t^{*}}) + \epsilon
\end{equation*}
\end{lemma}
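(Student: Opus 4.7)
The statement is essentially a restatement of the definition of log return, so the proof should be short and bookkeeping-focused rather than computational. My plan is to unpack the convention set up in items 3 and 4 of the preamble, apply the definition of log return over the specified interval, and rearrange.

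First I would fix notation: the entry occurs at the beginning of $t = t^{*}$ at the price-ratio value $y_{t^{*}}$ (ignoring the infinitesimal price improvement $\delta$, which is irrelevant here since this lemma ignores the Bollinger exit rule and simply holds the position for $n$ periods), and the forced exit occurs at the end of $t = t^{**}$, which under the discrete-block convention is the same as the beginning of $t = t^{**}+1$. Because the price process is assumed constant within each block, the value of $y$ at the end of $t = t^{**}$ is simply $y_{t^{**}}$.

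Next I would invoke the definition of the overall log return of a paired asset held from the beginning of $t = t^{*}$ through the end of $t = t^{**}$, namely
\begin{equation*}
\epsilon \;=\; \log(y_{t^{**}}) - \log(y_{t^{*}}).
\end{equation*}
Rearranging immediately yields
\begin{equation*}
\log(y_{t^{**}}) \;=\; \log(y_{t^{*}}) + \epsilon,
\end{equation*}
which is the claimed identity. The range $-\infty < \epsilon < \infty$ is automatic since $\log$ takes values on all of $\mathbb{R}$ when $y_{t} > 0$.

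There is no genuine mathematical obstacle here; the only care required is to verify that the timing conventions of item 4 make ``the value of the underlying at the end of $t=t^{**}$'' coincide with $y_{t^{**}}$ rather than with $y_{t^{**}+1}$, so that both terms in the log return are indexed consistently with the subscripts appearing in the statement. Once that bookkeeping point is confirmed, the lemma follows directly from the definition of log return and requires no further argument.
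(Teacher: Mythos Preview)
Your proposal is correct and matches the paper's own proof in spirit: both recognize that the lemma is purely definitional. The only cosmetic difference is that the paper writes the total log return as the telescoping sum $\sum_{t=t^{*}+1}^{t^{**}} \triangle_{t}\log(y_t)=\epsilon$ before collapsing it to $\log(y_{t^{**}})-\log(y_{t^{*}})$, whereas you invoke that identity directly.
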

\begin{proof}
Consider the interval $[t^{*},t^{**}]$. Since the return over this interval is $\epsilon$, by the additiviy of log returns this implies that  
$\sum_{t=t^{*}+1}^{t=t^{**}} \triangle_{t} log(y_t) = \epsilon$. But the terms in the sum represent a telescoping series which reduces to 
$log(y_{t^{**}}) - log(y_{t^{*}})$ so that $log(y_{t^{**}}) = log(y_{t^{*}}) + \epsilon$. Although the relation is obvious, it will turn out to be useful
when proving various other lemmas that follow. 
\end{proof} \noindent A sample plot of log prices is shown in Figure \ref{FIG:lemma1} below and gives the intuition behind Lemma \ref{LEM:lemma1}.

\begin{figure}[h!] 
\begin{center}
\includegraphics[scale=0.85]{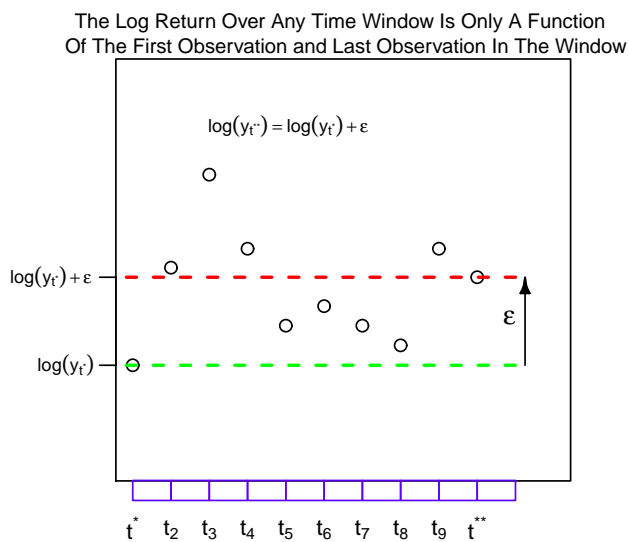}
\end{center}
\caption{ 
Illustrating that only the first price observation and last
price observation are needed to calculate the log return over any time window. }
\label{FIG:lemma1}
\end{figure}
\clearpage
\begin{lemma}  \label{LEM:lemma2}
Let $y_{t}$ denote the price ratio of a paired asset at time $t$ and consider a window of size n whose endpoints are denoted as $t^{*}$ and $t^{**}$. 
Assume that a long trade has been generated at the beginning of $t=t^{*}-1$ so that the entry takes place at the beginning of $t=t^{*}$. Assume that the Bollinger 
Band exit rule is the exit rule. Then, if $y_{t}$ remains constant from the beginning of  $t=t^{*} $ up until the end of $ t=t^{**}$, 
then $mave_{t^{**}} = log(y_{t^{**}}) = log(y_{t^{*}})$ and the long trade will be exited at the end of $t = t^{**}$.
\end{lemma}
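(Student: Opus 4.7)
The plan is to prove the two pieces of the conclusion separately, because they use different machinery. The first piece, the chain of equalities $mave_{t^{**}} = \log(y_{t^{**}}) = \log(y_{t^{*}})$, reduces to a direct computation that exploits the identity $t^{**} = t^{*} + n - 1$ spelled out in the preamble. Applied to the moving-average formula (\ref{E:bbmave}), this identity makes the $n$-point window ending at $t^{**}$ exactly $[t^{**}-n+1,\,t^{**}] = [t^{*},\,t^{**}]$. Every one of the $n$ summands $\log(y_{t})$ therefore lies in the interval on which the constancy hypothesis applies and so equals $\log(y_{t^{*}})$; averaging gives $mave_{t^{**}} = \log(y_{t^{*}})$, and the remaining equality $\log(y_{t^{**}}) = \log(y_{t^{*}})$ is just constancy evaluated at the right endpoint.

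For the second piece, the trade-exit claim, I would invoke the Bollinger Band exit rule described in Section \ref{S:BBPT}: a long trade closes as soon as $\log(y_{t})$ crosses through or equals $mave_{t}$. The first piece has already produced $\log(y_{t^{**}}) = mave_{t^{**}}$, so at $t = t^{**}$ the exit condition is triggered unambiguously, and by the zero-slippage-on-exit convention laid out in item 4 of the assumption list the position is closed at the end of $t = t^{**}$. Because the constant-price hypothesis forces the price line and the moving-average line to coincide at $t^{**}$, the exit cannot be deferred any further, which is the content of the claim. If for some reason the trade had already closed at some earlier time $t < t^{**}$, that would only strengthen the statement ``the trade is exited by the end of $t^{**}$''; this lemma is really the ``no-later-than'' direction.

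I do not expect any genuine obstacle here. The one place to be careful is bookkeeping: the rolling window in (\ref{E:bbmave}) is indexed with its right endpoint, while the trade-duration counting uses $t^{*}$ as the left endpoint, so one must verify explicitly that $t^{**} = t^{*} + n - 1$ makes the window endpoints coincide with the entry/max-duration endpoints and that every pre-entry observation drops out of the sum. Once the indices are lined up, the rest is essentially arithmetic, and the lemma is then ready to serve as a clean boundary case in the proof of Theorem \ref{THM:theorem1b}, where the ``duration equals $n$ gives zero return'' scenario will need to be pinned down precisely.
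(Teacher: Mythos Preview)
Your proposal is correct and follows essentially the same approach as the paper's own proof: both arguments use the identity $t^{**} = t^{*} + n - 1$ to identify the window ending at $t^{**}$ with $[t^{*},\,t^{**}]$, observe that every summand in $mave_{t^{**}}$ then equals $\log(y_{t^{*}})$ by constancy, deduce $mave_{t^{**}} = \log(y_{t^{*}}) = \log(y_{t^{**}})$, and conclude that the exit rule fires. Your version is slightly more explicit about the index bookkeeping and the zero-slippage convention, which is helpful, but there is no substantive difference in method.
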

\begin{proof}
Notice that at the end of period $t = t^{**}$, aside from $log(y_{t^{*}})$ itself, $mave_{t^{**}}$ will not contain any of the points contained in the window 
when $t = t^{**}$ is the right endpoint of the window. Therefore only points that are equal to $log(y_{t^{*}})$ will be contained in the calculation of 
$mave_{t^{**}}$. Obviously, the average of the n $log(y_{t})$ values in the window at 
$t^{**} = log(y_{t^{*}})$. Therefore $mave_{t^{**}} = log(y_{t^{*}})$. But $log(y_{t})$ did not change over the time between $t^{*}$ and $t^{**}$ so 
$log(y_{t^{**}}) = log(y_{t^{*}})$ which means $mave_{t^{**}} = log(y_{t^{**}})$ so that a trade exit is triggered. Therefore the trade is exited with a log return 
equal to zero since $log(y_{t})$ was constant over the trade duration. An illustration of this argument is provided in Figure \ref{FIG:lemma2} on 
page \pageref{FIG:lemma2}.
\end{proof}

\begin{figure}[t!] 
\includegraphics[scale=1.0]{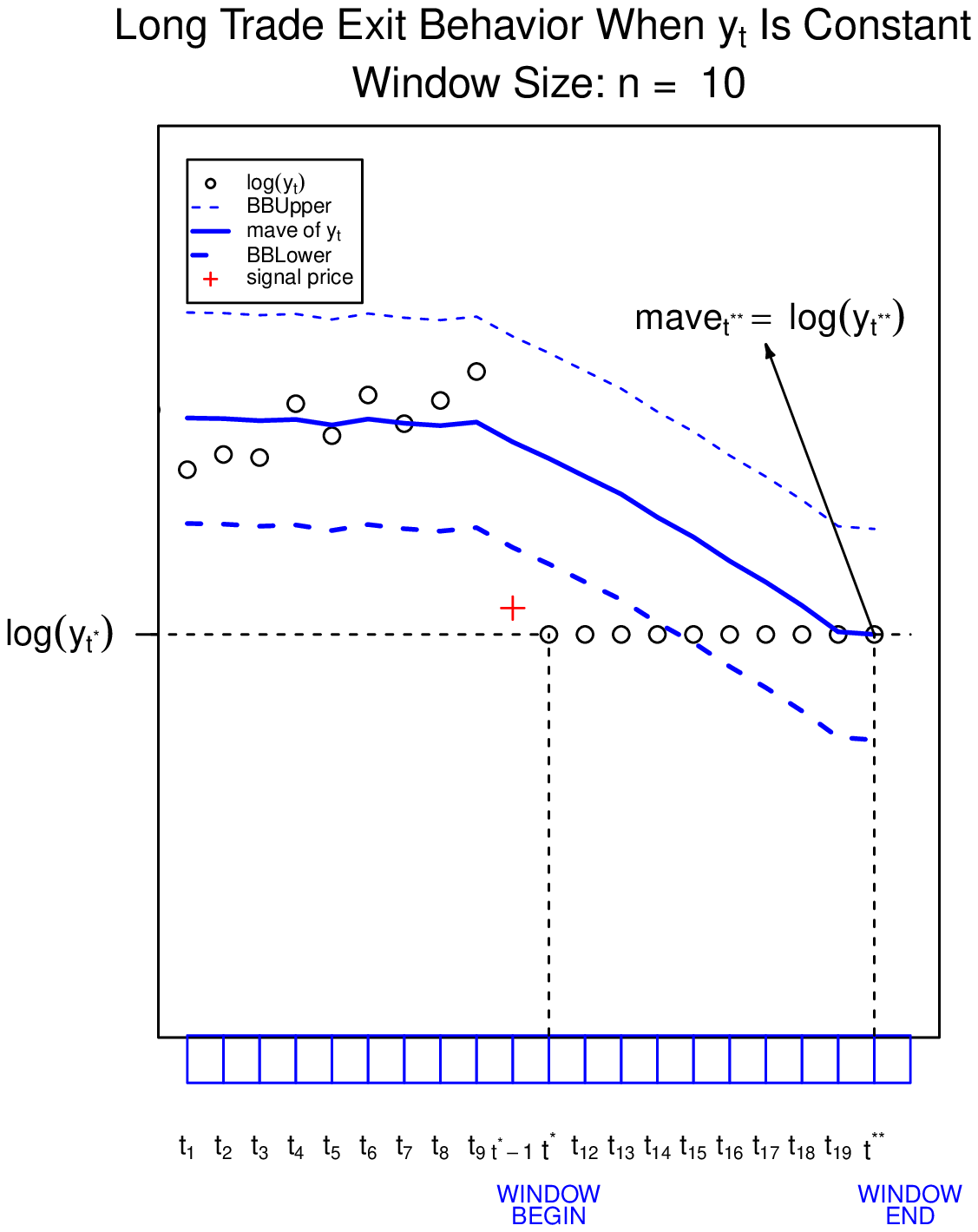}
\caption{Illustrating that a trade will exit
at the end of $t=t^{**}$ when the price
remains constant over n = rolling window size periods.}
\label{FIG:lemma2}
\end{figure}

\begin{lemma}  \label{LEM:lemma3}
Let $y_{t}$ denote the price ratio of a paired asset at time $t$ and consider a window of size n whose endpoints are denoted as $t^{*}$ and $t^{**}$. 
Assume that a long trade has been generated at the beginning of $t=t^{*}-1$ so that the entry takes place at the beginning of $t=t^{*}$. 
Also, suppose that the Bollinger Band exit rule is ignored in that the position is held for a fixed n = rolling window size periods. 
If the overall log return of the paired asset over the period from the beginning of $t = t^{*}$ to the end of $t = t^{**}$ is $-1.0 \times \epsilon$ 
where $\epsilon > 0$, then the following relation holds:
\begin{equation*}
log(y_{t^{*}}) - \epsilon < mave_{t^{**}} < log(y_{t^{*}})
\end{equation*}
\noindent where $mave_{t^{**}} = \frac{\sum_{t=t^{*}}^{t=t^{**}} log(y_{t})}{n}$
\end{lemma}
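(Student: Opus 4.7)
The plan is to apply Lemma~1 and the definition of $mave_{t^{**}}$ to reduce both inequalities to a single two-sided bound on a sum of log-price deviations. By Lemma~1 we have $\log(y_{t^{**}}) = \log(y_{t^{*}}) - \epsilon$, and by definition $n \cdot mave_{t^{**}} = \sum_{t=t^{*}}^{t^{**}} \log(y_{t})$. Introduce
\[
S \,=\, \sum_{t=t^{*}+1}^{t^{**}} \bigl(\log(y_{t}) - \log(y_{t^{*}})\bigr).
\]
A direct calculation shows that $mave_{t^{**}} < \log(y_{t^{*}})$ is equivalent to $S < 0$, and (using $\log(y_{t^{**}}) - \log(y_{t^{*}}) = -\epsilon$) that $mave_{t^{**}} > \log(y_{t^{**}})$ is equivalent to $S > -n\epsilon$. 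So the whole lemma reduces to the two-sided strict bound $-n\epsilon < S < 0$.

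Both halves of this bound would fall out immediately once one knows that the path is controlled by its endpoints, that is, once $\log(y_{t^{**}}) \le \log(y_{t}) \le \log(y_{t^{*}})$ for all $t \in (t^{*}, t^{**})$. Granting that, every summand of $S$ is non-positive and the summand at $t = t^{**}$ equals $-\epsilon < 0$, so $S < 0$. Dually, rewriting $S + n\epsilon = \sum_{t=t^{*}}^{t^{**}-1}\bigl(\log(y_{t}) - \log(y_{t^{**}})\bigr)$, every summand is non-negative and the summand at $t = t^{*}$ equals $\epsilon > 0$, so $S > -n\epsilon$.

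The main obstacle, then, is justifying the path bound $\log(y_{t^{**}}) \le \log(y_{t}) \le \log(y_{t^{*}})$ on $(t^{*}, t^{**})$. My intended route is to lean on the BBPT trade geometry spelled out in the setup: the long trade is entered at $t^{*}$ only because $\log(y_{t^{*}-1})$ touched the lower band, and the infinitesimal price-improvement convention makes $\log(y_{t^{*}})$ a strict local low; and the implicit hypothesis that no exit signal fires through $t^{**}$ (i.e.\ $\log(y_{t})$ never re-crosses the sliding $mave_{t}$ upward) can then be used, inductively on $t$, to rule out both excursions above $\log(y_{t^{*}})$ and dips below the eventual $\log(y_{t^{**}})$. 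The delicate point is that $mave_{t}$ at intermediate times mixes pre-trade and in-trade observations, so this induction requires some careful bookkeeping. If the bookkeeping proves fragile, a clean fallback is to represent $mave_{t^{**}} - \log(y_{t^{*}})$ directly as the weighted increment sum $\frac{1}{n}\sum_{s=t^{*}+1}^{t^{**}} \bigl(n-(s-t^{*})\bigr)\,\triangle\log(y_{s})$, subject to the known total $\sum_{s}\triangle\log(y_{s}) = -\epsilon$, and then sign-control it using whatever monotonicity the earlier lemmas in the chain supply.
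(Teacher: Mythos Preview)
Your algebraic reduction to $-n\epsilon < S < 0$ is clean and correct, but it is not the route the paper takes. The paper argues by considering two ``extreme scenarios'' for the path: in scenario one $\log(y_t)$ drops to $\log(y_{t^*})-\epsilon$ immediately at $t^*+1$ and stays flat, and in scenario two $\log(y_t)$ stays flat at $\log(y_{t^*})$ and drops only at $t^{**}$. The paper computes $mave_{t^{**}}$ explicitly for each scenario, reads off the strict inequalities, and then simply asserts that ``any other realization where the return over the window is equal to $-1.0\times\epsilon$ is a realization that also maintains the same upper and lower bounds.'' No further justification is offered; the paper treats this as self-evident via a discrete-integration heuristic.

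You have correctly located the pressure point: both your argument and the paper's extreme-scenario argument tacitly require the interior path bound $\log(y_{t^{**}})\le \log(y_t)\le \log(y_{t^*})$ on $(t^*,t^{**})$. Without it, a path that spikes above $\log(y_{t^*})$ or crashes below $\log(y_{t^{**}})$ in the interior can violate the claimed bounds on $mave_{t^{**}}$ even though its endpoints are correct. However, your proposed justification via ``no exit signal fires through $t^{**}$'' cannot work here, because the lemma explicitly stipulates that the Bollinger Band exit rule is \emph{ignored}: that hypothesis is simply not available to you in this lemma (the no-exit information only enters later, when the lemma is invoked inside the proof of the main theorem). Your weighted-increment fallback runs into the same wall: sign-controlling $\tfrac{1}{n}\sum_s\bigl(n-(s-t^*)\bigr)\,\triangle\log(y_s)$ subject only to $\sum_s\triangle\log(y_s)=-\epsilon$ is impossible without some additional monotonicity or boundedness on the increments. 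In short, the paper proves the two bounds only for its two concrete scenarios and then asserts the general case; you have rightly flagged the missing step, but the lemma as written does not supply the hypotheses needed to close it by either of your proposed routes.
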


\begin{proof}
\noindent 
We  can obtain the upper and lower  bounds for $mave_{t^{**}}$ by considering two extreme scenarios in which the overall log return of the trade is 
$-1.0 \times \epsilon$.\footnote{Notice that this proof is essentially assuming a discrete process for $log(y_{t})$ which is consistent 
with the original assumption that $y_{t}$ process only changes at the end of each period $t$.} This two extreme scenario proof methodology is justified 
because any other realization where the return over the window is equal to $-1.0 \times \epsilon$ is a realization that also maintains the same upper and
lower bounds. 

\noindent First consider scenario one where $log(y_{t})$ moves to the level $(log(y_{t^{*}}) - \epsilon)$ at the beginning of $t=t^{*}+1$ and then remains 
constant after that until the end of $t=t^{**}$ is reached.\footnote{Note that the log return over the window in scenario one is $-1.0 \times \epsilon$.} 
By definition, $mave_{t^{**}} = \frac{log(y_{t^{*}})}{n} +  \frac{\sum_{t=(t^{*} + 1)}^{t=t^{**}}(log(y_{t^{*}} - \epsilon))}{n}$.
But since $log(y_{t}) = (log(y_{t^{*}}-\epsilon))$ at $t=t^{*}+1$ and remains at that level after $t=t^{*}+1$, the previous expression for $mave_{t^{**}}$ reduces 
to $\left(\frac{1}{n} \times log(y_{t^{*}}) + \frac{n-1}{n} \times (log(y_{t^{*}})- \epsilon)\right)$. 
But since $log(y_{t^{*}}) > (log(y_{t^{*}})- \epsilon)$, this implies that $mave_{t^{**}}  = \left(\frac{1}{n} \times log(y_{t^{*}}) + \frac{n-1}{n} \times 
(log(y_{t^{*}})-\epsilon)\right) > log(y_{t^{*}}) - \epsilon$.\footnote{One can think of this opened lower bound as the discrete counterpart of the integral 
$\frac{\int_{t^{*}}^{t^{**}} (log(y_{t}) - \epsilon)\,dt}{t^{**}-t^{*}}$ where $(log(y_{t})-\epsilon)$ is constant over the period $t=t^{*}$ to $t = t^{**}$.} 
Therefore the opened lower bound for $mave_{t^{**}}$ is $log(y_{t^{*}})-\epsilon$. 

\noindent Next consider scenario two where $log(y_{t})$ is constant over the window and then moves to $(log(y_{t^{*}}) - \epsilon)$ just as the
beginning of $t=t^{**}$ is reached. We can again use the discrete analog of an integration argument to show that $mave_{t^{**}} 
< \frac{(n \times log(y_{t^{*}}))}{n} = log(y_{t^{*}})$.\footnote{One can think of this opened upper bound as the discrete counterpart of the integral 
$\frac{\int_{t^{*}}^{t^{**}} (log(y_{t})\,dt}{t^{**}-t^{*}}$ where $log(y_{t})$ is constant over the period $t=t^{*}$ to $t = t^{**}$.} 
Therefore the opened upper bound for $mave_{t^{**}}$ is $log(y_{t^{*}})$. We illustrate the previous scenario arguments in 
Figure \ref{FIG:lemma3} above.

\end{proof}
\begin{figure}[t!] 
\includegraphics[scale=0.80]{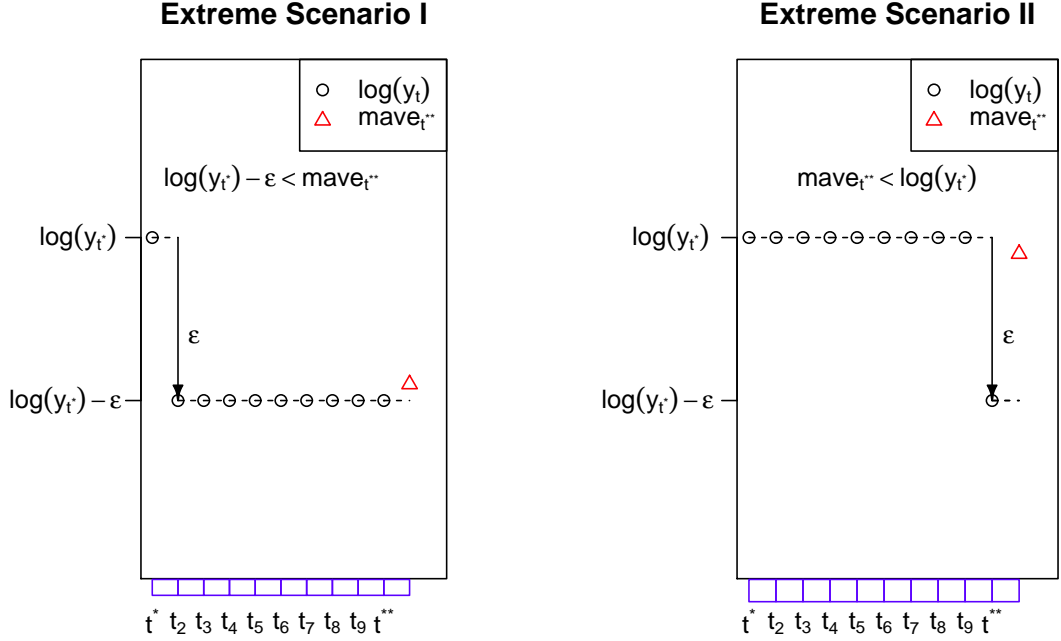}
\caption{Illustrating the two extreme scenarios used
in the proof of Lemma~\ref{LEM:lemma3}:
(a) the LHS inequality and (b) the RHS inequality.}
\label{FIG:lemma3}
\end{figure}

\noindent It will turn out to be convenient to re-write the relation in Lemma~\ref{LEM:lemma3} so that the right hand side is 
an equality. Therefore, we re-state Lemma~\ref{LEM:lemma3} in the following equivalent way:
\begin{equation} \label{EQN:minusbounds}
log(y_{t^{*}}) - \epsilon_{1} < mave_{t^{**}} = log(y_{t^{*}}) - \epsilon_{2}
\end{equation}
\noindent where $mave_{t^{**}} = \frac{\sum_{t=t^{*}}^{t=t^{**}} log(y_{t})}{n}$, $\epsilon_{1} > 0$, $\epsilon_{2} > 0$ and $\epsilon_{2} <  \epsilon_{1}$.

\noindent  In the steps of the proof of Lemma~\ref{LEM:lemma3}, since the rolling window size = n periods was used as the 
holding period, it was unnecessary to consider the values of $log(y_{t})$ for $t < t^{*}$ because, by the time $t=t^{**}$, these values
are not contained in the rolling window with right endpoint $t = t^{**}$. Fortunately, if we make one extra assumption about the values of $log(y_{t})$ 
for a specific period $[t,t^{*})$, then the lower bound in Lemma \ref{LEM:lemma3} can be strengthened to hold for any $t = t^{\prime}$ where 
$t^{*} < t^{\prime} < t^{**}$, rather than just $t^{**}$, resulting in Lemma \ref{LEM:lemma4}

\begin{lemma}  \label{LEM:lemma4}
\noindent Let $y_{t}$ denote the price ratio of a paired asset at time $t$ and consider a window of size n whose endpoints are denoted as 
$t^{*}$ and $t^{**}$. Assume that a long trade has been generated at the beginning of $t=t^{*}-1$ so that the entry takes place at the beginning of $t=t^{*}$.
Let $t=t^{\prime}$ be any time point between $t=t^{*}$ and $t=t^{**}$ such that $t^{*} < t^{\prime} < t^{**}$. Again suppose that the Bollinger Band moving 
average exit rule is disregarded in that the position is held for a fixed $(t^{\prime} - t^{*}) = n^{\prime}$ periods.  
If the overall log return of the paired asset over a period from the beginning of $t = t^{*}$ to the end of $t = t^{\prime}$ is $-1.0 \times \epsilon$
where $\epsilon > 0$ and  $log(y_{t}) > (log(y_{t^{*}} - \epsilon) ~\forall t$ such that $(t^{*} -n^{\prime}-1) \le t < t^{*}$, then the following 
relation holds:

\begin{equation} \label{EQN:minusboundsgen}
log(y_{t^{*}}) - \epsilon < mave_{t^{\prime}} 
\end{equation}
\noindent where $mave_{t^{\prime}} = \frac{\sum_{t=(t^{*}-n^{\prime}-1)}^{t=t^{\prime}} log(y_{t})}{n}$ 
\end{lemma}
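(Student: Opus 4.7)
The plan is to extend the extreme-scenario argument of Lemma~\ref{LEM:lemma3} to a window whose right endpoint $t^{\prime}$ lies strictly before $t^{**}$. The new complication is that the moving window at $t^{\prime}$ now straddles the trade entry, so it contains both post-entry observations (at times $t^{*}, t^{*}+1, \ldots, t^{\prime}$) and some pre-entry observations (at times before $t^{*}$). The additional hypothesis, that each pre-entry value in the window satisfies $log(y_t) > log(y_{t^{*}}) - \epsilon$, is precisely what is needed to control the old portion of the sum while a scenario-one type bound controls the new portion.

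First I would split the numerator of $mave_{t^{\prime}}$ as $\mathcal{S}_{\text{new}} + \mathcal{S}_{\text{old}}$, where $\mathcal{S}_{\text{new}} = \sum_{t=t^{*}}^{t^{\prime}} log(y_t)$ and $\mathcal{S}_{\text{old}}$ collects the remaining pre-entry indices that appear in the window. For $\mathcal{S}_{\text{new}}$ I would replay the ``scenario one'' construction from Lemma~\ref{LEM:lemma3}, but restricted to the sub-window from $t^{*}$ to $t^{\prime}$ of length $n^{\prime}+1$: the path minimizing the partial sum subject to a total log-return of $-\epsilon$ drops instantaneously at $t^{*}+1$ to the level $log(y_{t^{*}}) - \epsilon$ and stays there, giving
\begin{equation*}
\mathcal{S}_{\text{new}} \;\geq\; log(y_{t^{*}}) + n^{\prime}\bigl(log(y_{t^{*}}) - \epsilon\bigr) \;>\; (n^{\prime}+1)\bigl(log(y_{t^{*}}) - \epsilon\bigr),
\end{equation*}
the final strict inequality following because $log(y_{t^{*}}) > log(y_{t^{*}}) - \epsilon$. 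For $\mathcal{S}_{\text{old}}$ I would invoke the lemma's extra hypothesis directly: each relevant pre-entry term is strictly greater than $log(y_{t^{*}})-\epsilon$, so the sum exceeds (number of pre-entry terms)$\times(log(y_{t^{*}}) - \epsilon)$. Adding the two bounds and dividing by the window size yields $mave_{t^{\prime}} > log(y_{t^{*}}) - \epsilon$, which is the desired inequality.

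The main obstacle, exactly as in the previous lemma, is justifying the reduction to the single ``scenario one'' path as actually achieving the infimum of $\mathcal{S}_{\text{new}}$ over all discrete price paths with total log-return $-\epsilon$ on $[t^{*},t^{\prime}]$; in principle intermediate dips below $log(y_{t^{*}}) - \epsilon$ could deflate the sum further. I would handle this in the same spirit as Lemma~\ref{LEM:lemma3}, treating the argument as a discrete-integration bound and inheriting the implicit regularity already assumed there for the discrete price process, rather than cataloguing every possible intermediate fluctuation. With that convention in place, the old/new decomposition gives the stated strict inequality~(\ref{EQN:minusboundsgen}) directly, and the hypothesis on the pre-entry range $[t^{*}-n^{\prime}-1,\,t^{*})$ is exactly strong enough to cover all pre-entry indices that can appear inside the window with right endpoint $t^{\prime}$.
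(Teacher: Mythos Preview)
Your proposal is correct and follows essentially the same approach as the paper: reuse the scenario-one (discrete-integration) lower-bound argument from Lemma~\ref{LEM:lemma3} on the post-entry portion of the window, and invoke the extra hypothesis on the pre-entry portion to keep those terms above $log(y_{t^{*}})-\epsilon$. The paper's proof is terser---it simply says ``same integration argument as Lemma~\ref{LEM:lemma3} with upper limit $t^{\prime}$, plus the extra condition to cover the pre-entry indices''---while you spell out the $\mathcal{S}_{\text{new}}+\mathcal{S}_{\text{old}}$ decomposition explicitly, but the content is identical.
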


\begin{proof}
\noindent 
We can use the same integration argument used for the lower bound result of Lemma $\ref{LEM:lemma3}$ except, in this case, the  integral used to derive the 
lower bound will now contain the upper limit $t^{\prime}$ rather than $t^{**}$. Note though that, since we are no longer assuming that the return is
calculated over the window with $t=t^{**}$ as the right end point, $mave_{t^{\prime}}$ will still contain values of 
$log(y_{t}) ~\forall t ~(t^{*} - n^{\prime}-1) \le t < t^{*}$. Therefore, the extra condition on $log(y_{t})$  in $[(t^{*} - n^{\prime} -1),t^{*})$ is required in 
order to ensure that the same integration argument will still hold for the lower bound. This is because the integration starts from $t=t^{*}$. Therefore,
for the relation to be true when part of the interval is to the left of the window and therefore not included in the integral, the extra condition is required 
for the $log(y_{t})$ values in that part of the interval.
\end{proof}

\begin{lemma}  \label{LEM:lemma5}
Let $y_{t}$ denote the price ratio of a paired asset at time $t$ and consider a window of size n whose endpoints are denoted as $t^{*}$ and $t^{**}$. 
Assume that a long trade has been generated at the beginning of $t=t^{*}-1$ so that the entry takes place at the beginning of $t=t^{*}$. 
Again suppose that the Bollinger Band moving average exit rule is disregarded in that the position is held for a fixed n = rolling window size periods. 
If the overall log return of the paired asset over a period from the beginning of $t = t^{*}$ to the end of $t = t^{**}$ is $ +1.0 \times \epsilon$ where 
$\epsilon > 0$, then the following relation holds:
\begin{equation*}
log(y_{t^{*}}) < mave_{t^{**}} < log(y_{t^{*}}) + \epsilon
\end{equation*}
\noindent where $mave_{t^{**}} = \frac{\sum_{t=t^{*}}^{t=t^{**}} log(y_{t})}{n}$
\end{lemma}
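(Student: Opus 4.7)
The plan is to mirror the proof of Lemma~\ref{LEM:lemma3} almost verbatim, since Lemma~\ref{LEM:lemma5} is the reflection through $\log(y_{t^{*}})$ of that earlier statement: the direction of the total return is flipped from $-\epsilon$ to $+\epsilon$, and consequently the roles of the upper and lower extreme scenarios are swapped. Since $y_{t}$ is treated as a discrete price process that is constant within each time block, I can again restrict attention to two specific paths of $\log(y_{t})$ over $[t^{*},t^{**}]$ that both achieve total log return $+\epsilon$ but that jump at opposite ends of the window.

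First I would handle the upper bound. Consider the scenario in which $\log(y_{t})$ jumps from $\log(y_{t^{*}})$ up to $\log(y_{t^{*}})+\epsilon$ at the beginning of $t = t^{*}+1$ and then remains at that level through the end of $t = t^{**}$. By Lemma~\ref{LEM:lemma1} the total log return over the window is indeed $+\epsilon$, and the moving average is
\begin{equation*}
mave_{t^{**}} = \frac{1}{n}\log(y_{t^{*}}) + \frac{n-1}{n}\bigl(\log(y_{t^{*}})+\epsilon\bigr) = \log(y_{t^{*}}) + \frac{n-1}{n}\epsilon,
\end{equation*}
which is strictly less than $\log(y_{t^{*}})+\epsilon$. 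This is the discrete counterpart of the integral argument used in Lemma~\ref{LEM:lemma3} and furnishes the opened upper bound.

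Next I would handle the lower bound symmetrically. Consider the scenario in which $\log(y_{t})$ stays constant at $\log(y_{t^{*}})$ throughout the window and jumps up to $\log(y_{t^{*}})+\epsilon$ only at the beginning of $t = t^{**}$. Again the total log return over $[t^{*},t^{**}]$ equals $+\epsilon$ by Lemma~\ref{LEM:lemma1}, and the moving average is
\begin{equation*}
mave_{t^{**}} = \frac{n-1}{n}\log(y_{t^{*}}) + \frac{1}{n}\bigl(\log(y_{t^{*}})+\epsilon\bigr) = \log(y_{t^{*}}) + \frac{\epsilon}{n},
\end{equation*}
which is strictly greater than $\log(y_{t^{*}})$. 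Combining the two scenarios gives the two-sided bound $\log(y_{t^{*}}) < mave_{t^{**}} < \log(y_{t^{*}})+\epsilon$.

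The only nontrivial point, and the one that should be made explicit to match the rigor of Lemma~\ref{LEM:lemma3}, is the extremality claim: any other discrete path of $\log(y_{t})$ over the window that achieves the same total return $+\epsilon$ produces a moving average lying strictly between these two computed values. I would justify this by noting that both extreme paths concentrate the entire upward move at a single boundary of the window; moving any portion of the increase toward the interior of $[t^{*},t^{**}]$ transfers mass from one side of $\log(y_{t^{*}})$ to the other in the arithmetic mean, and preserves the strict inequalities. This is exactly the discrete analog of the integration argument invoked in Lemma~\ref{LEM:lemma3}, so I would appeal to the same reasoning rather than re-deriving it. The main obstacle, modest as it is, lies in spelling out this extremality argument cleanly enough that the symmetric reuse of Lemma~\ref{LEM:lemma3}'s technique is transparent.
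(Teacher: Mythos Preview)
Your proposal is correct and follows exactly the approach the paper itself uses: the paper's proof of Lemma~\ref{LEM:lemma5} simply states that it ``uses similar integration arguments to those used in Lemma~\ref{LEM:lemma3}'' and omits the details. You have spelled out precisely those symmetric two-scenario arguments (jump at the start versus jump at the end of the window), so your write-up is in fact more explicit than the paper's own treatment.
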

\begin{proof}
The proof uses similar integration arguments to those used in Lemma~\ref{LEM:lemma3} so the details will not be included here. 
We illustrate the previous scenario arguments in Figure \ref{FIG:lemma5} above.

\end{proof}
\begin{figure}[t!] 
\includegraphics[scale=0.80]{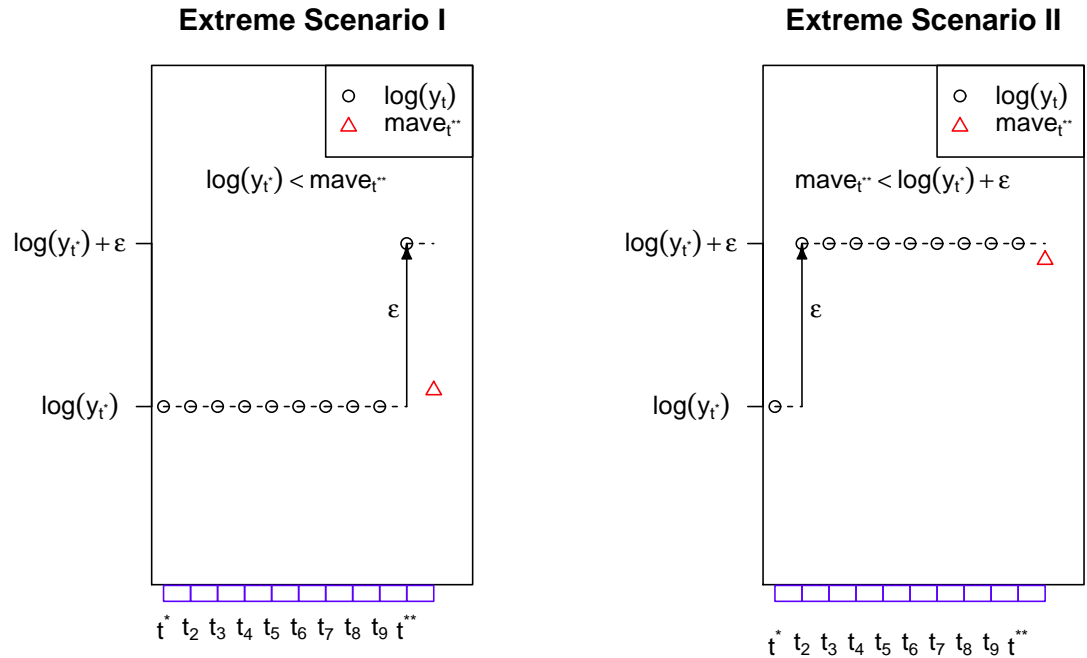}
\caption{Illustrating the two extreme scenarios used
in the proof of Lemma~\ref{LEM:lemma5}:
(a) the LHS inequality and (b) the RHS inequality.}
\label{FIG:lemma5}
\end{figure}

\noindent Just as was the case with Lemma~\ref{LEM:lemma3}, since the rolling window size = n periods was used as the 
holding period in the proof of Lemma~\ref{LEM:lemma5}, it was unnecessary to consider the values of $log(y_{t})$ for $t < t^{*}$ in the proof because, by the 
time $t=t^{**}$ is reached, these values are not contained in the rolling window with right endpoint $t = t^{**}$. 
Again, if we make one extra assumption about these values, then the upper bound in Lemma \ref{LEM:lemma5} can be strengthened to 
hold for any $t = t ^{\prime}$ where $t^{*} < t^{\prime} < t^{**}$ rather than just $t^{**}$, resulting in Lemma \ref{LEM:lemma6}.
\newpage
\begin{lemma}  \label{LEM:lemma6}
Let $y_{t}$ denote the price ratio of a paired asset at time $t$ and consider a window of size n whose endpoints are denoted as 
$t^{*}$ and $t^{**}$.  Assume that a long trade has been generated at the beginning of $t=t^{*}-1$ so that the entry takes place at the beginning of $t=t^{*}$.
Let $t=t^{\prime}$ be any time point between $t=t^{*}$ and $t=t^{**}$ such that $t^{*} < t^{\prime} < t^{**}$. Again suppose that the Bollinger Band moving 
average exit rule is disregarded in that the position is held for a fixed $(t^{\prime} - t^{*}) = n^{\prime}$ periods.  
If the overall log return of the paired asset over the period from the beginning of $t = t^{*}$ to the end of $t = t^{\prime}$ is $+1.0 \times \epsilon$
where $\epsilon > 0$ and  $log(y_{t}) < mave_{t^{*}} ~\forall t$ such that $(t^{*} -n^{\prime}-1) \le t < t^{*}$, then the following 
relation holds:
\begin{equation} \label{EQN:plusboundsgen}
mave_{t^{\prime}} < log(y_{t^{*}}) + \epsilon
\end{equation}
\noindent where $mave_{t^{\prime}} = \frac{\sum_{t=(t^{*}-n^{\prime}-1)}^{t=t^{\prime}} log(y_{t})}{n}$
\end{lemma}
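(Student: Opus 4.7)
The plan is to mirror the relationship between Lemmas~\ref{LEM:lemma3} and \ref{LEM:lemma4}: Lemma~\ref{LEM:lemma5} already establishes the upper bound at the right endpoint $t=t^{**}$ of the window, and the goal here is to push that bound inward to any intermediate $t^{\prime}$ with $t^{*} < t^{\prime} < t^{**}$. The new difficulty, exactly as in Lemma~\ref{LEM:lemma4}, is that the rolling window ending at $t^{\prime}$ still contains $n - n^{\prime} - 1$ pre-entry values $\log(y_t)$ with $t < t^{*}$, so the post-entry integration argument of Lemma~\ref{LEM:lemma5} on its own no longer suffices and the new hypothesis on pre-entry values must be used to control those terms.

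First I would split the moving average into two blocks,
\begin{equation*}
mave_{t^{\prime}} \;=\; \frac{1}{n}\sum_{t<t^{*}}\log(y_t) \;+\; \frac{1}{n}\sum_{t=t^{*}}^{t^{\prime}}\log(y_t),
\end{equation*}
where the first sum runs over pre-entry indices of the window (set $n^{\prime}=t^{\prime}-t^{*}$, so there are $n-n^{\prime}-1$ such indices) and the second sum runs over the $n^{\prime}+1$ post-entry indices. For the post-entry block I would re-run the discrete-integration / extreme-scenario argument used in Lemma~\ref{LEM:lemma5}, but restricted to the sub-interval $[t^{*},t^{\prime}]$ on which the overall log return is $+\epsilon$: among all admissible paths, the maximizer of $\sum_{t=t^{*}}^{t^{\prime}}\log(y_t)$ is the path that jumps instantaneously up to $\log(y_{t^{*}})+\epsilon$ at the start of $t^{*}+1$ and stays flat, giving a strict upper bound of $\log(y_{t^{*}})+\epsilon$ on this block. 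For the pre-entry block I would plug in the new hypothesis $\log(y_t)<mave_{t^{*}}$ directly, bounding that block strictly above by $mave_{t^{*}}$.

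The concluding step would then combine these two bounds, writing $mave_{t^{\prime}}$ as a convex combination whose two extreme values are $mave_{t^{*}}$ and $\log(y_{t^{*}})+\epsilon$, and conclude the desired strict inequality. The hard part will be exactly this combination: since a long trade was triggered at $t^{*}-1$, the entry price $\log(y_{t^{*}})$ sits near $BBlower_{t^{*}}$, so $mave_{t^{*}}$ is already noticeably above $\log(y_{t^{*}})$ and could in principle exceed $\log(y_{t^{*}})+\epsilon$ for small $\epsilon$. To get past this I expect to need a slightly finer use of the pre-entry hypothesis, for example by noticing that $mave_{t^{*}}$ is itself an average of the same pre-entry values plus $\log(y_{t^{*}})$, which allows $mave_{t^{*}}$ to be re-expressed in terms of quantities already bounded by $\log(y_{t^{*}})+\epsilon$.

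An alternative, possibly cleaner approach would bypass the two-block split entirely and instead work telescopically. Starting from the bound $mave_{t^{**}} < \log(y_{t^{*}})+\epsilon$ guaranteed by Lemma~\ref{LEM:lemma5}, I would use the rolling-window update identity
\begin{equation*}
mave_{t+1} - mave_{t} \;=\; \frac{\log(y_{t+1}) - \log(y_{t-n+1})}{n}
\end{equation*}
to propagate the bound backward from $t^{**}$ to $t^{\prime}$, one step at a time. In each backward step the term that \emph{enters} the window (on the left, at time $t-n+1$) is a pre-entry value, and the hypothesis $\log(y_t)<mave_{t^{*}}$ ensures that this entering term is small enough that the inequality is preserved. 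This route localizes the delicate estimate to a single backward increment and may avoid the awkward combination step above, at the cost of a slightly longer induction.
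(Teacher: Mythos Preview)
Your first approach --- split $mave_{t'}$ into pre-entry and post-entry blocks, run the Lemma~\ref{LEM:lemma5} extreme-scenario argument on the post-entry block, and invoke the new hypothesis on the pre-entry block --- is exactly what the paper does, though the paper's proof is only two sentences and never writes out the split or the combination explicitly; it simply asserts that the extra condition on the pre-entry values is what ``ensures that the same integration argument will still hold for the upper bound.''

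You are in fact more careful than the paper in isolating the combination concern: bounding the pre-entry terms by $mave_{t^{*}}$ rather than by $\log(y_{t^{*}})+\epsilon$ could in principle spoil the convex combination when $\epsilon$ is small, and the paper does not address this. It helps to notice, however, that the only place Lemma~\ref{LEM:lemma6} is invoked is inside Lemma~\ref{LEM:lemma7}, and there the specific choice is $\epsilon_{1}=mave_{t^{*}}-\log(y_{t^{*}})$, so that $\log(y_{t^{*}})+\epsilon_{1}=mave_{t^{*}}$ and the pre-entry hypothesis becomes precisely $\log(y_{t})<\log(y_{t^{*}})+\epsilon_{1}$ --- exactly the bound the integration argument needs term by term. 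At that value of $\epsilon$ your worry evaporates, and your suspicion that the hypothesis was tailored to make the argument go through is correct. Your telescoping alternative does not appear in the paper.
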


\begin{proof}
\noindent 
We can use the same integration argument used for the upper bound result of Lemma \ref{LEM:lemma5} except, in this case, the integral used to derive the upper bound
will contain the upper limit $t^{\prime}$ rather than $t^{**}$. Note though just as was the case in Lemma \ref{LEM:lemma4}, since we are no longer assuming that 
the return is calculated over the window with $t=t^{**}$ as the right end point, $mave_{t^{\prime}}$ will still contain values of $log(y_{t}) 
~\forall t ~(t^{*} - n^{\prime}-1) \le t < t^{*}$. Therefore, the extra condition on $log(y_{t})$  in $[(t^{*} - n^{\prime} -1),t^{*})$ is required in order to 
ensure that the same integration argument will still hold for the upper bound. Since the integration starts from $t=t^{*}$, for the relation to be true when part 
of the interval is to the left of $t=t^{*}$ and therefore not included in the integral, the extra condition is required for the $log(y_{t})$ values in that part 
of the interval.

\end{proof}


\noindent The next lemma is stated below.
\begin{lemma} \label{LEM:lemma7}
Let $y_{t}$ denote the price ratio of a paired asset at time $t$ and consider a window of size n whose endpoints are denoted as $t^{*}$ and $t^{**}$. 
Assume that a long trade has been generated at the beginning of $t=t^{*}-1$ so that the entry takes place at the beginning of $t=t^{*}$.  
Let $mave_{t^{*}}$ denote the moving average of the paired asset at time $t^{*}$. Then the maximum possible overall log return that can be generated by the 
trade using the Bollinger Band rule is less than $(mave_{t^{*}} - log(y_{t^{*}})) = \epsilon_{1} > 0$ which is the initial difference between $mave_{t}$ at entry
and $log(y_{t})$ at entry. 
\end{lemma}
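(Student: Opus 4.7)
The plan is to argue by contradiction, exploiting the fact that at the exit time $t^{**}$ the Bollinger Band exit rule forces $\log(y_{t^{**}})=mave_{t^{**}}$ (in the idealized, no-overshoot sense built into the assumptions), so the trade's log return can be rewritten entirely in terms of the moving average. The claim $R<\epsilon_{1}$ then collapses to the statement that $mave_{t^{**}}<mave_{t^{*}}$, which is exactly the kind of inequality that Lemmas~\ref{LEM:lemma5} and \ref{LEM:lemma6} are designed to deliver.

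\textbf{Step-by-step.} First I would write the realized log return as
\begin{equation*}
R \;=\; \log(y_{t^{**}})-\log(y_{t^{*}}) \;=\; mave_{t^{**}}-\log(y_{t^{*}}),
\end{equation*}
using the exit condition $\log(y_{t^{**}})=mave_{t^{**}}$. Rearranging, $R<\epsilon_{1}$ is equivalent to $mave_{t^{**}}<mave_{t^{*}}$, so this is what must be shown. Next, I would suppose for contradiction that $R\geq\epsilon_{1}>0$, so in particular the overall log return over $[t^{*},t^{**}]$ is strictly positive. I would then split into cases according to the trade duration. If the duration exceeds $n$, Theorem~\ref{THM:theorem1b} already yields $R\leq 0<\epsilon_{1}$, immediately contradicting the assumption. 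If the duration equals $n$, I apply Lemma~\ref{LEM:lemma5} directly with $\epsilon=R$ to conclude $mave_{t^{**}}<\log(y_{t^{*}})+R=\log(y_{t^{**}})$, which contradicts $\log(y_{t^{**}})=mave_{t^{**}}$. If the duration is strictly less than $n$, I apply Lemma~\ref{LEM:lemma6} with $t^{\prime}=t^{**}$ and $\epsilon=R$, reaching the same contradiction $mave_{t^{**}}<\log(y_{t^{**}})$.

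\textbf{Main obstacle.} The delicate point is the case of short duration: invoking Lemma~\ref{LEM:lemma6} requires the auxiliary hypothesis that $\log(y_{t})<mave_{t^{*}}$ for the pre-entry points $t\in[t^{*}-n^{\prime}-1,\,t^{*})$ that still lie in the rolling window at $t^{**}$. I would justify this from the Bollinger Band entry signal, noting that the entry at $t^{*}$ was triggered by $\log(y_{t^{*}-1})\leq BBlower_{t^{*}-1}<mave_{t^{*}-1}$; combined with the assumption (which mirrors the hypotheses already used implicitly in the preceding lemmas) that the log price has been trading below the moving average throughout the preceding portion of the window, this gives the condition needed. Should additional regularity on the pre-entry path be required, I would record it as an explicit assumption consistent with those made in Lemma~\ref{LEM:lemma6}. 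Once the contradiction is obtained in both sub-cases, the conclusion $R<\epsilon_{1}$ follows, and by the symmetric argument the identical bound holds for short trades entered on a $BBupper$ crossing.
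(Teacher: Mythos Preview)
Your argument contains a circularity: in the case ``duration exceeds $n$'' you invoke Theorem~\ref{THM:theorem1b}, but in the paper Lemma~\ref{LEM:lemma7} is one of the ingredients used to \emph{prove} Theorem~\ref{THM:theorem1b} (it is applied in the ``if'' direction to bound the return of the shifted trade starting at $t^{**}$). So you cannot appeal to the theorem here. Beyond the circularity, your remaining cases rest on the identity $\log(y_{t^{**}})=mave_{t^{**}}$ at exit. Under the paper's discrete conventions the exit rule is ``crosses through or is equal to,'' so in general you only have $\log(y_{\mathrm{exit}})\ge mave_{\mathrm{exit}}$; the inequality $mave_{\mathrm{exit}}<\log(y_{\mathrm{exit}})$ that Lemmas~\ref{LEM:lemma5}/\ref{LEM:lemma6} deliver is then not a contradiction, and your ``equivalent'' reformulation $R<\epsilon_{1}\Leftrightarrow mave_{\mathrm{exit}}<mave_{t^{*}}$ breaks down.

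The paper avoids both issues by not anchoring at the Bollinger exit time at all. It instead fixes an arbitrary $t'\in(t^{*},t^{**}]$ at which the (hypothetical, exit-rule-ignored) cumulative return equals exactly $\epsilon_{1}$, applies Lemma~\ref{LEM:lemma6} to obtain $mave_{t'}<\log(y_{t^{*}})+\epsilon_{1}=mave_{t^{*}}$, and then argues that since $mave$ has strictly decreased while $\log(y_{t})$ has risen to $mave_{t^{*}}$, the path $\log(y_{t})$ must have crossed $mave_{t}$ at some earlier $t''<t'$, at which point the increase in $\log(y_{t})$ is strictly less than $\epsilon_{1}$. This gives the bound directly, without needing Theorem~\ref{THM:theorem1b} and without assuming equality at the crossing. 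Your handling of the Lemma~\ref{LEM:lemma6} side condition (pre-entry points below $mave_{t^{*}}$) matches the paper's, including the caveat that no separate long trade was completed in the preceding $n'+1$ periods.
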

\begin{proof} 
We will assume that, with the Bollinger Band exit rule ignored, the aforementioned trade generates an overall log return of $+1.0* \epsilon_{1}$ from the beginning 
of $t=t^{*}$ to end of $t=t^{\prime}$ where $\epsilon_{1} > 0$. Also, we will assume that $t^{\prime} - t^{*} = n^{\prime}$ periods. Then we will show that if one had 
used the Bollinger Band exit rule to exit from the same trade, the overall log return generated would be less than $\epsilon_{1}$. This will complete the proof 
because the overall log return assumed, $\epsilon_{1}$, is equal to the difference between the moving average at entry and the log price at entry.

\noindent So assume that the long trade position that was initiated at $t=t^{*}$ was held for $n^{\prime}$ periods to the end of $t=t^{\prime}$ without 
regard to the Bollinger Band exit rule and that it generated a return of $+1.0 \times \epsilon_{1} = mave_{t^{*}} - log(y_{t^{*}})$. 
Notice that the conditions of Lemma \ref{LEM:lemma6} are met because, since a long trade was generated at $t=t^{*}-1$, it should be the case that 
$log(y_{t}) < mave_{t^{*}} ~\forall t$ such that $(t^{*} -n^{\prime}-1) \le t < t^{*}$\footnote{In order to be absolutely certain that $log(y_{t^{*}})$ is less 
than $mave_{t^{*}} ~\forall t$ such that $(t^{*} -n^{\prime}-1) \le t < t^{*}$, we need to assume that a separate long trade was not completed during these 
$(n^{\prime} + 1)$ time periods. If a separate long trade was completed during this time period and this trade was generated by a sudden large and sharp downward 
spike in $log(priceratio)$ and exited due to another sudden large and sharp upward spike in $log(priceratio)$, then it possible that the condition will not hold. 
Although the probability of this event is quite small, for this reason we need to make the assumption that a separate long trade was not completed during the 
previous $(n^{\prime}+1)$ periods.}. Therefore, by Lemma \ref{LEM:lemma6}, we know that $mave_{t^{\prime}} < log(y_{t^{*}}) + \epsilon_{1}$ for any $t^{\prime} > t^{*}$ and 
$t^{\prime} <= t^{**}$. But by definition, $log(y_{t^{\prime}}) = log(y_{t^{*}}) + \epsilon_{1}$ which means that $mave_{t^{\prime}} < log(y_{t^{\prime}})$. But this means 
that $mave_{t^{\prime}}$ must have decreased from its original value of $mave_{t^{*}}$ because otherwise it would be equal to $log(y_{t^{\prime}})$ since $log(y_{t^{*}}) 
+ \epsilon_{1} = mave_{t^{*}}$. But if $mave_{t^{\prime}}$ decreased from its original value of $mave_{t^{*}}$, then this implies that $log(y_{t})$ had to have crossed 
it at some earlier period $t^{\prime\prime} < t^{\prime}$ and, since $log(y_{t})$ increased from the beginning of $t=t^{*}$ to the end of $t=t^{\prime}$, the amount that 
$log(y_{t})$ had to increase in order to cross through $mave_{t}$ had to be less than $(mave_{t^{*}} - log(y_{t^{*}})) = \epsilon_{1}$.  Since $t^{\prime}$ was 
arbitrary, this result is true for any $t$ where $t^{*} < t <= t^{**}$, so we have shown that when using the Bollinger Band exit rule, the overall log return 
generated by any trade is less than $(mave_{t^{*}} -log(y_{t^{*}})) = \epsilon_{1}$. An illustration of this argument is provided in Figure \ref{FIG:lemma7} on 
page \pageref{FIG:lemma7}.
\end{proof}

\noindent Finally we need to state and prove Lemma~\ref{LEM:lemma8}.
\begin{lemma} \label{LEM:lemma8}
Let $y_{t}$ denote the price ratio of a paired asset at time $t$ and consider a window of size n whose endpoints are denoted as $t^{*}$ and $t^{**}$. 
Assume that a long trade has been generated at the beginning of $t=t^{*}-1$ so that the entry takes place at the beginning of $t=t^{*}$. Assume that
the Bollinger Band exit rule is being used. Then, any long trade with an overall non-negative log return that has reached the end of $t^{**}$  will be exited 
at the end of $t=t^{**}$. Conversely, any trade with an overall negative log return that has reached the end of $t=t^{**}$ will not be exited at the end of $t=t^{**}$.
\end{lemma}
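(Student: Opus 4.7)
The plan is to separate the claim into three cases based on the sign of the overall log return $\epsilon$ of the trade over $[t^*,t^{**}]$, and in each case to compare $log(y_{t^{**}})$ with $mave_{t^{**}}$ directly, using the preceding lemmas. Throughout, I will rely on Lemma \ref{LEM:lemma1} to convert the assumption on the cumulative log return into the single identity $log(y_{t^{**}}) = log(y_{t^*}) + \epsilon$, and on the fact that the rolling window at $t=t^{**}$ consists exactly of the observations $log(y_{t^*}), log(y_{t^*+1}), \ldots, log(y_{t^{**}})$, so that $mave_{t^{**}} = \tfrac{1}{n}\sum_{t=t^*}^{t^{**}} log(y_t)$. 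The Bollinger Band exit rule fires at the end of $t^{**}$ precisely when $log(y_{t^{**}}) \geq mave_{t^{**}}$, so the question reduces to comparing these two numbers.

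First, I would dispose of the zero-return subcase: if $\epsilon = 0$, Lemma \ref{LEM:lemma2} gives $mave_{t^{**}} = log(y_{t^*}) = log(y_{t^{**}})$, so the exit rule is triggered at the end of $t^{**}$. Second, for the strictly positive case $\epsilon > 0$, Lemma \ref{LEM:lemma5} yields the upper bound $mave_{t^{**}} < log(y_{t^*}) + \epsilon$, and combining with Lemma \ref{LEM:lemma1} this says $mave_{t^{**}} < log(y_{t^{**}})$, so again the exit condition $log(y_{t^{**}}) \geq mave_{t^{**}}$ is satisfied and the trade exits at the end of $t^{**}$.

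For the converse direction, suppose the overall log return is $-\epsilon$ with $\epsilon > 0$. Then Lemma \ref{LEM:lemma1} gives $log(y_{t^{**}}) = log(y_{t^*}) - \epsilon$, while Lemma \ref{LEM:lemma3} (in its restated form (\ref{EQN:minusbounds})) gives the strict lower bound $mave_{t^{**}} > log(y_{t^*}) - \epsilon = log(y_{t^{**}})$. Hence $log(y_{t^{**}}) < mave_{t^{**}}$, so the exit condition is not met and, by hypothesis that the trade has reached the end of $t^{**}$ without already being closed, the trade continues past $t^{**}$.

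The only delicate point, and the place where I would be most careful, is the implicit conditioning: the lemma assumes the trade is still open at $t^{**}$, so I do not have to argue about intermediate exits — I only have to decide what happens at the single time $t^{**}$. Because of this, no induction or path argument is required and the proof is essentially a direct application of Lemmas \ref{LEM:lemma1}, \ref{LEM:lemma2}, \ref{LEM:lemma3}, and \ref{LEM:lemma5}; the main obstacle is simply making sure the inequalities in those lemmas are strict in the right direction so that the two conclusions (exit in the non-negative case, no exit in the negative case) follow without boundary ambiguities. The short trade version is symmetric and would be noted in one sentence.
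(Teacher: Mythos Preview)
Your argument is correct and is in fact tighter than the paper's. The paper proves Lemma~\ref{LEM:lemma8} by anchoring on Lemma~\ref{LEM:lemma2} and then re-running an ``extreme scenario'' perturbation (constant path with a jump at $t^{**}$), whereas you simply invoke the bounds that were already packaged in Lemmas~\ref{LEM:lemma3} and~\ref{LEM:lemma5}. The content is the same --- the upper bound of Lemma~\ref{LEM:lemma5} gives $mave_{t^{**}} < \log(y_{t^{*}})+\epsilon = \log(y_{t^{**}})$ in the positive-return case, and the lower bound of Lemma~\ref{LEM:lemma3} gives $mave_{t^{**}} > \log(y_{t^{*}})-\epsilon = \log(y_{t^{**}})$ in the negative-return case --- but your presentation avoids re-deriving what has already been proved.

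One small caveat: your appeal to Lemma~\ref{LEM:lemma2} for the $\epsilon=0$ case overstates what that lemma provides. Lemma~\ref{LEM:lemma2} assumes the path is \emph{constant} over the window, not merely that the endpoints coincide; a non-constant path with zero net return can have $mave_{t^{**}} \neq \log(y_{t^{*}})$. Here the conditioning you correctly flagged is what actually does the work: if such a path had risen enough to touch $mave_t$ before $t^{**}$, the trade would already have exited, contrary to hypothesis; the remaining admissible $\epsilon=0$ paths have $mave_{t^{**}} \le \log(y_{t^{**}})$. The paper's proof has exactly the same lacuna, so this is not a defect relative to the paper --- just a place where one extra sentence would make your version fully rigorous.
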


\begin{proof} 
Recall that Lemma~\ref{LEM:lemma2} says that if $log(y_{t})$ is constant over the full window from the beginning of 
$t=t^{*}$ to the end of $t=t^{*}$, then there will be an exit at the end of $t=t^{*}$ because $log(y_{t^{**}})$ and $mave_{t^{**}}$ will be equal. 
We again will use two extreme scenarios along with Lemma \ref{LEM:lemma2} in order to prove Lemma \ref{LEM:lemma8}. Figure \ref{FIG:lemma8} on page 
\pageref{FIG:lemma8} provides graphical representations of the two extreme scenarios

\noindent First consider scenario one where $log(y_{t})$ is constant from the beginning of $t=t^{*}$ to the end of $t=t^{**}-1$ and then increases
an infinitesimally small amount equal to $+1.0 \times \epsilon$ at the beginning of $t=t^{**}$. This implies that
the log return over the full window is $+1.0 \times \epsilon$. Note that for any given unit period increase in $log(y_{t})$, by definition 
the moving average $mave_{t}$ always increases by a smaller amount. This fact along with Lemma \ref{LEM:lemma2} implies that $log(y_{t})$ will cross $mave_{t}$ 
from below at $t=t^{**}$ and the trade will exit at the end of $t=t^{**}$.  

\noindent Next consider scenario two where $log(y_{t})$ is constant from the beginning of $t=t^{*}$ to the end of $t=t^{**}-1$ and then decreases 
an infinitesimally small amount equal to $-1.0 \times \epsilon$ at the beginning of $t=t^{**}$. This implies that
the log return over the full window is $-1.0 \times \epsilon$. Note that for any given decrease in $log(y_{t})$, 
the moving average $mave_{t}$ always increases by a smaller amount in absolute value. This fact along with Lemma \ref{LEM:lemma2} implies that $log(y_{t})$ will 
not cross $mave_{t}$ from below at $t=t^{**}$ and therefore the trade will not exit at the end of $t=t^{**}$. 
\end{proof}

\begin{figure}[t!] 
\includegraphics[scale=1.0]{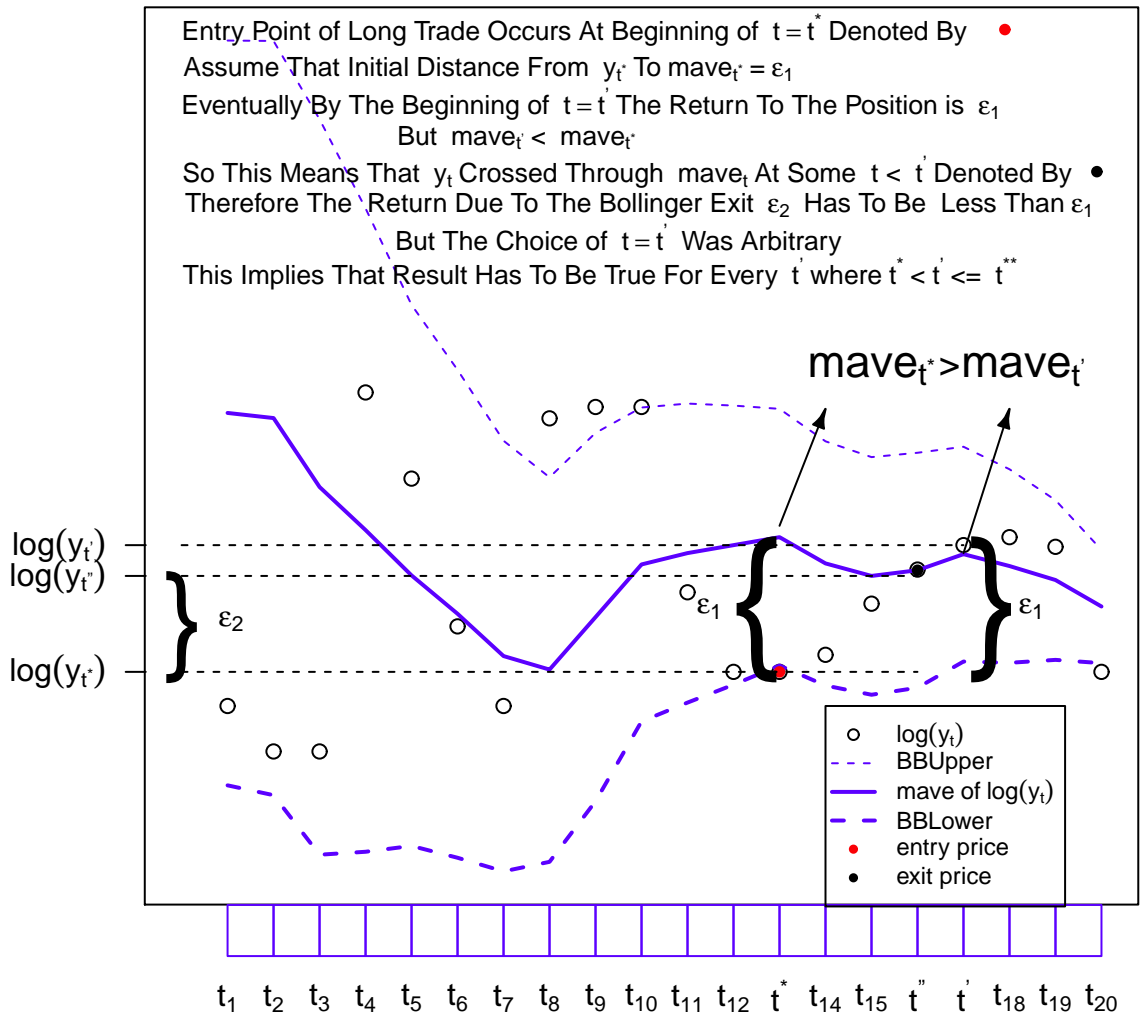}
\caption{Illustrating That The Maximum Log Return of A Bollinger Band Trade Is Always Less Than The Initial Difference Between The Moving Average At Entry And 
The Log(PriceRatio) At Entry.}
\label{FIG:lemma7}
\end{figure}
\clearpage

\begin{figure}[t!] 
\includegraphics[scale=0.8]{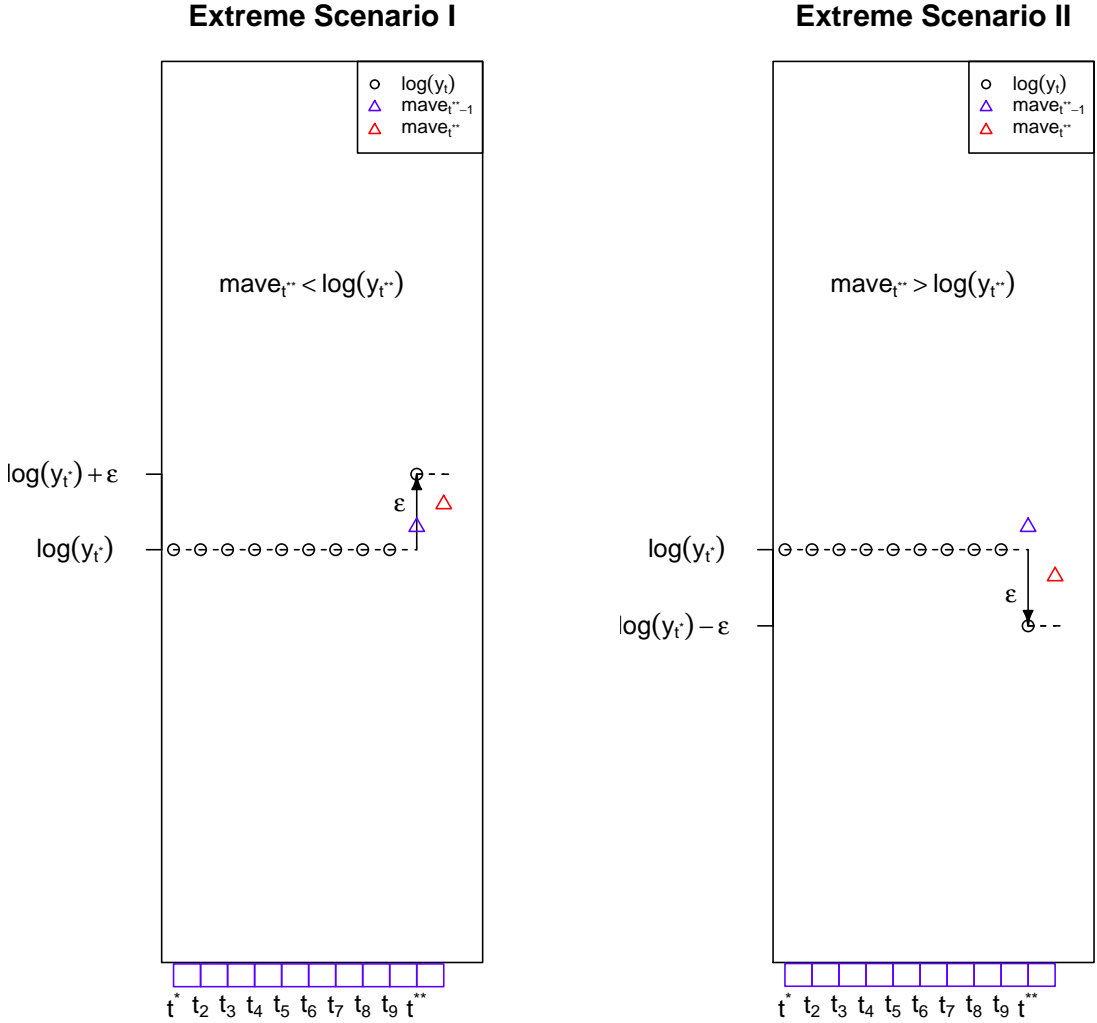}
\caption{Scenario I: An Infinitesimal Positive Return Right Before $t=t^{*}$ Guarantees An Exit At The End Of $t=t^{**}$. 
Scenario II: An Infinitesimal Negative Return Right Before $t=t^{*}$ Guarantees A Non-Exit At The End Of $t=t^{**}$.}
\label{FIG:lemma8}
\end{figure}
\newpage

\noindent Given the various lemmas,  we can prove Theorem~\ref{THM:theorem1b}. We repeat the theorem
statement here.
\newtheorem*{theorem1}{Theorem~\ref{THM:theorem1b}}  
\begin{theorem1}  
Assume that the rolling window size in the BBPT strategy  = n,  the band width multiplier = k and that a long trade is generated at  $t = t^{*}-1$. 
so that the entry takes place at the beginning of $t=t^{*}$.  Then the overall log return of this trade using the Bollinger Band exit rule is non-negative if and 
only if the duration of the trade is less than or equal to n; i.e. the trade is exited at a time $t$ less than or equal to $t^{**} = t^{*} + n - 1$. This result is 
independent of the bandwidth multiplier parameter k.
\end{theorem1}

\begin{proof} 
First we prove the if part of Theorem \ref{THM:theorem1b} which means that we need to show that if the pair trade has a non-negative overall log return, 
then the total trade duration has to be less than or equal to $n$ where $n$ is the rolling window size. First, assume that the generated trade is exited at the end
of some time $t = t^{\prime}$. Clearly, if the trade has a non-negative overall log return, then this implies that $log(y_{t^{\prime}}) - log(y_{t^{*}}) >= 0$ 
where $t^{\prime}$ is the exit time of the trade. So let us prove the if part of Theorem \ref{THM:theorem1b} by contradiction: We will assume that 
$log(y_{t^{\prime}}) - log(y_{t^{*}}) >= 0$ (i.e. a non negative overall total log return from entry to exit ) and that $t^{\prime} > t^{**}$ so that the duration 
of the trade is greater than $n$. Then we will show that these assumptions lead to a contradiction.


\noindent In order to visualize the argument that follows , a long trade example is  provided in Figure \ref{FIG:plot_ThA} on page 
\pageref{FIG:plot_ThA}. First of all, by assumption, the trade duration is greater than $n$ which means that, at the beginning of $ t=t^{**}$, $mave_{t^{**}}$ 
must have been greater than $log(y_{t^{**}})$ because, if it was not, then based on the Bollinger Band exit rule, the trade would have exited at the end 
of $t=t^{**}$.  Therefore $mave_{t^{**}} > log(y_{t^{**}})$ at the beginning of $t=t^{**}$. Also, by Lemma~\ref{LEM:lemma8}, we know that the total return from the 
beginning of $t=t^{*}$ to the end of $t=t^{**}$ has to be negative because otherwise the trade would have exited at the end of $t=t^{**}$. 
Therefore, we know that $log(y_{t^{**}}) < log(y_{t^{*}})$. So let us assume that the  overall log return from the beginning of $t=t^{*}$ up to the end of 
$t=t^{**}$ is $-1.0 \times \epsilon_{1}$ where $\epsilon_{1} > 0$. Note that, given the latter assumption, equation  (\ref{EQN:minusbounds}) in 
Lemma~\ref{LEM:lemma3} implies that $log(y_{t^{*}}) - \epsilon_{1} < mave_{t^{**}} = log(y_{t^{*}}) - \epsilon_{2}$ where $\epsilon_{2} < \epsilon_{1}$ 
and $\epsilon_{1} > 0$ and $\epsilon_{2} > 0$.

\noindent Now, since we know that the generated trade has not exited by the end of $t=t^{**}$, we can suppose that we are now sitting 
at the beginning of $t = t^{**}$ and can define a new time called the shifted time, $t_{shift}$,  as $t_{shift} = t - ( t^{**} - 1)$ so that the beginning of 
$t_{shift}=1$ corresponds to the beginning of $t=t^{**}$. Now, since the trade has not exited at the end of $t=t^{**}$ and given equation (\ref{EQN:minusbounds}) in
Lemma~\ref{LEM:lemma3}, we can modify our perspective by 
imagining that we are sitting at the beginning of $t_{shift} = 1$ and have just entered a new Bollinger Band trade with the entry point equal to 
the value of $log(y_{t^{**}})$, namely $log(y_{t^{*}}) - \epsilon_{1}$, and the exit point equal to $mave_{t^{**}}$, namely  $log(y_{t^{*}}) - \epsilon_{2}$. 
But, by Lemma~\ref{LEM:lemma7}, the BBPT strategy is such that no trade in BBPT can ever generate more return than the original distance between its entry point, 
$log(y_{t^{**}}) - \epsilon_{1}$, and its initial exit point, $mave_{t^{**}} = log(y_{t^{*}}) - \epsilon_{2}$.
Now, at the beginning of $t_{shift} = 1$, this difference equals $(log(y_{t^{**}}) - \epsilon_{2}) - 
(log(y_{t^{**}} - \epsilon_{1}) = \epsilon_{1} - \epsilon_{2}$. Therefore, an opened upper bound for the log return of the trade going forward from the beginning 
of $t_{shift} = 1$ is $\epsilon_{1} - \epsilon_{2}$. But recall that by assumption, $log(y_{t})$ has decreased by $\epsilon_{1}$ from the beginning of 
$t = t^{*}$ up to the end of time $t = t^{**}$ so the log return of $y_{t}$  during that period is $-1.0 \times \epsilon_{1}$. Therefore, by the additivity of 
log returns, this implies that the maximum possible overall log return of the trade is $ < \epsilon_{1} - \epsilon_{2} - \epsilon_{1} = -1.0*\epsilon_{2}$. But, 
from Lemma~\ref{LEM:lemma3}, $\epsilon_{2} > 0$ so that $-1.0*\epsilon_{2} < 0$. But this means that the trade has to have a negative overall log return which 
is a contradiction because we assumed at the outset of the proof that the trade had a non-negative overall log return. Therefore we have proven the if 
part of Theorem \ref{THM:theorem1b} by contradiction.

\begin{figure}[t!] 
\includegraphics[scale=0.85]{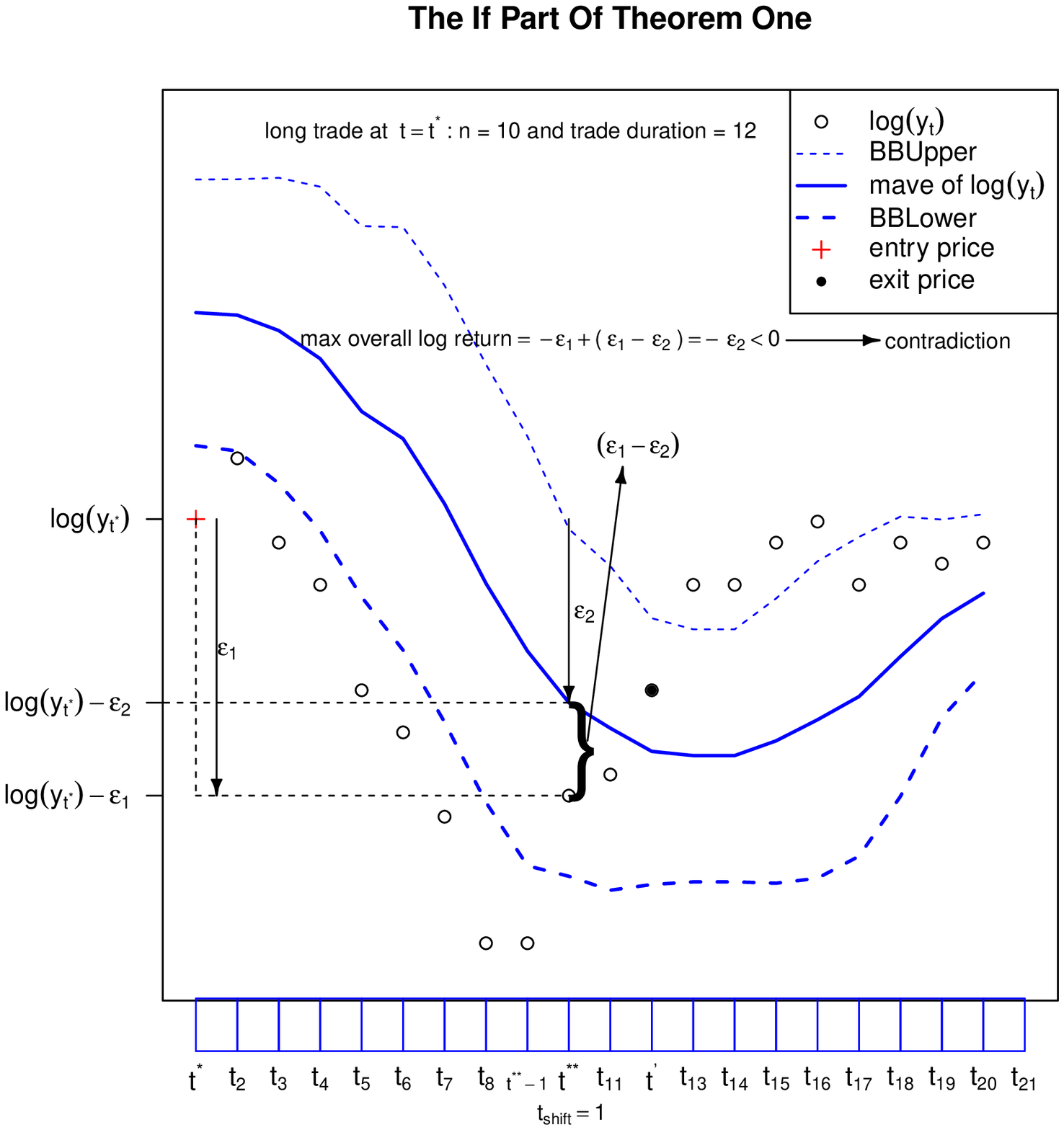}
\caption{Illustrating that a trade
with a non-negative log return cannot have
a duration greater than the rolling window size n.}
\label{FIG:plot_ThA}
\end{figure}

\noindent We still need to prove the only if part of Theorem \ref{THM:theorem1b} which means showing that if the duration of the trade is less than or equal 
to the rolling window size, $n$, then the trade has an overall log return that is non-negative. Again, we will prove the only if part of the theorem by 
contradiction. We will assume that the pair trade duration is less than or equal to $n$ and that the overall log return from the beginning of the entry period 
$t^{*}$ to the end of the exit period $t=t^{\prime}$ is $-1.0 \times \epsilon$ where $\epsilon > 0$ so that the overall log return is negative. Then we will show 
that these assumptions lead to a contradiction. Just as was done with the if part of the theorem, in order that one can visualize the argument that follows, 
a long trade example is provided in Figure \ref{FIG:plot_ThB} on page \pageref{FIG:plot_ThB}. 

\noindent First of all, by assumption, the trade duration is less than or equal to $n$ which means that, given the Bollinger Band exit rule, there exists
some $t=t^{\prime} <= t^{**}$ such that at the end of $t = t^{\prime}$, $mave_{t^{\prime}} <= log(y_{t^{\prime}})$. Without loss of generality and so that Figure 
\ref{FIG:plot_ThB} is consistent with the proof, we will assume that $t^{\prime} - t^{*} = n^{\prime} = 4$ so that the exit occurs at the end of 
$t= t^{\prime} = t^{*} + 4$. 

\noindent We need to show that the assumptions above lead to a contradiction. First notice that since the long trade was entered into at $t=t^{*}$, 
this means that the condition $log(y_{t}) > (log(y_{t^{*}} - \epsilon) ~\forall t$ such that $(t^{*} -n^{\prime}-1) 
\le t < t^{*}$ should hold\footnote{Just as was with the case in Lemma \ref{LEM:lemma7}, in order to be certain that the condition holds for all 
$n^{\prime+1}$ periods,  we need to assume  that a separate long trade was not completed in the time period $[(t^{*}-n^{\prime}-1),t^{*}]$.}. This condition 
along with the fact that the overall log return over the interval is $-1.0 \times \epsilon$, allows us to appeal to Lemma \ref{LEM:lemma4} which says that 
$log(y_{t^{*}}) - \epsilon < mave_{t^{\prime}} < log(y_{t^{*}})$. But, by definition, since the total log return over the interval from  
$t=t^{*}$ to $t=t^{\prime}$ is $-1.0 \times \epsilon$, clearly  $log(y_{t^{\prime}}) = log(y_{t^{*}}) - \epsilon$. Therefore it must be the case that 
$log(y_{t^{\prime}}) < mave_{t^{\prime}}$ which means that $log(y_{t})$ could have not crossed through $mave_{t}$ from below at $t=t^{\prime}$. But if 
$log(y_{t^{\prime}})$ did not cross through $mave_{t}$ from below at $t=t^{\prime}$, then this means that there could not have been an exit at $t=t^{\prime}$. 
Therefore we have arrived at a contradiction which completes the proof.

\noindent Both the if and the only if part of Theorem \ref{THM:theorem1b} have been proven so Theorem \ref{THM:theorem1b} has been proven. Any pair 
trade in the BBPT strategy has a non-negative total return if and only if the  duration of the pair trade is less than or equal to $n$ where n is the 
rolling window size.\end{proof}. 


\begin{figure}[t!] 
\includegraphics[scale=0.90]{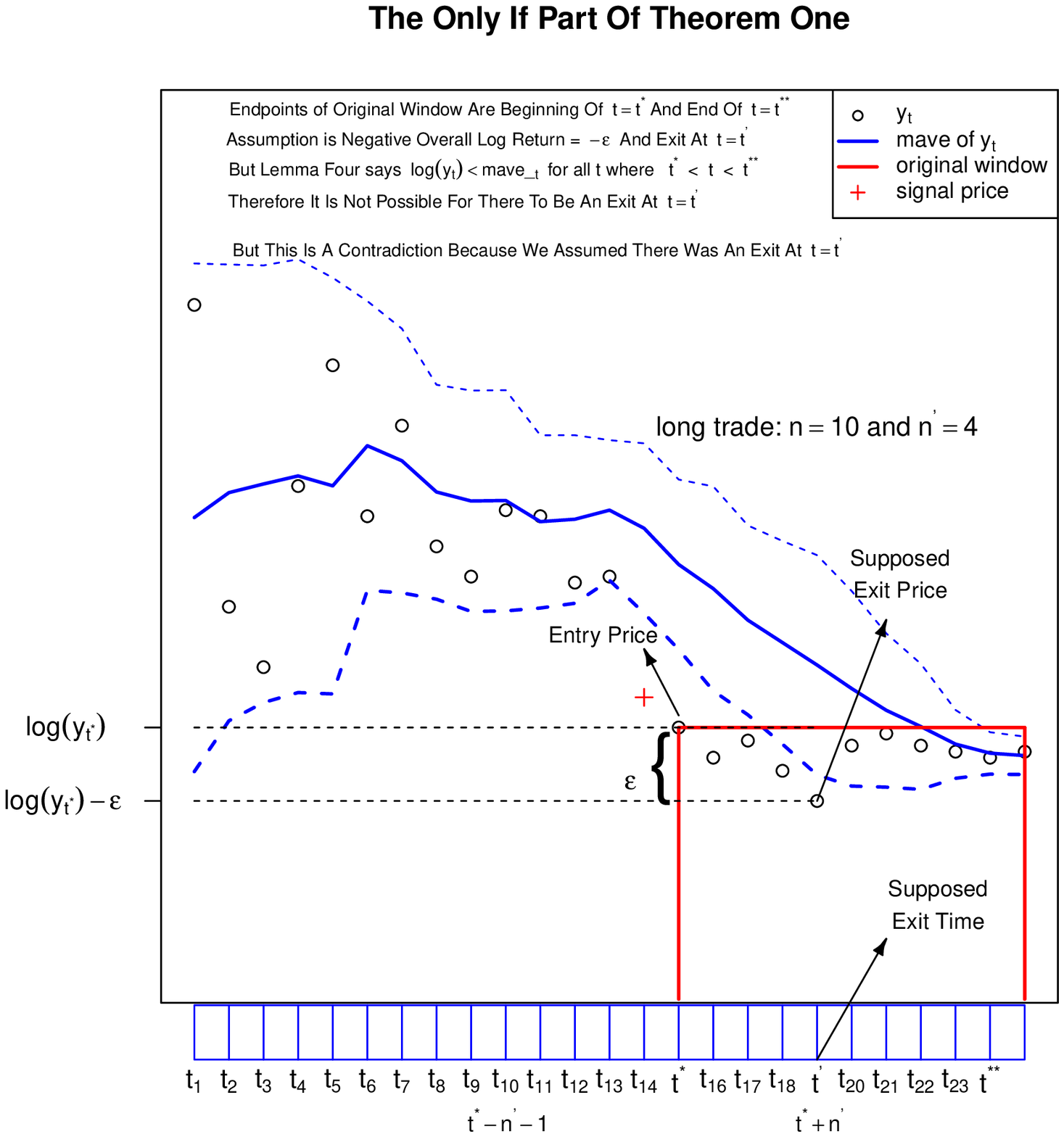}
\caption{Illustrating that a trade
whose duration is less than or equal to the rolling
window size has to have a non-negative overall log 
return.}
\label{FIG:plot_ThB}
\end{figure}

\clearpage
\section{The BBPT Strategy and the Corresponding FFMDPT Strategy} 
\begin{figure}[h!] 
\begin{center}
 \includegraphics[scale=0.80]{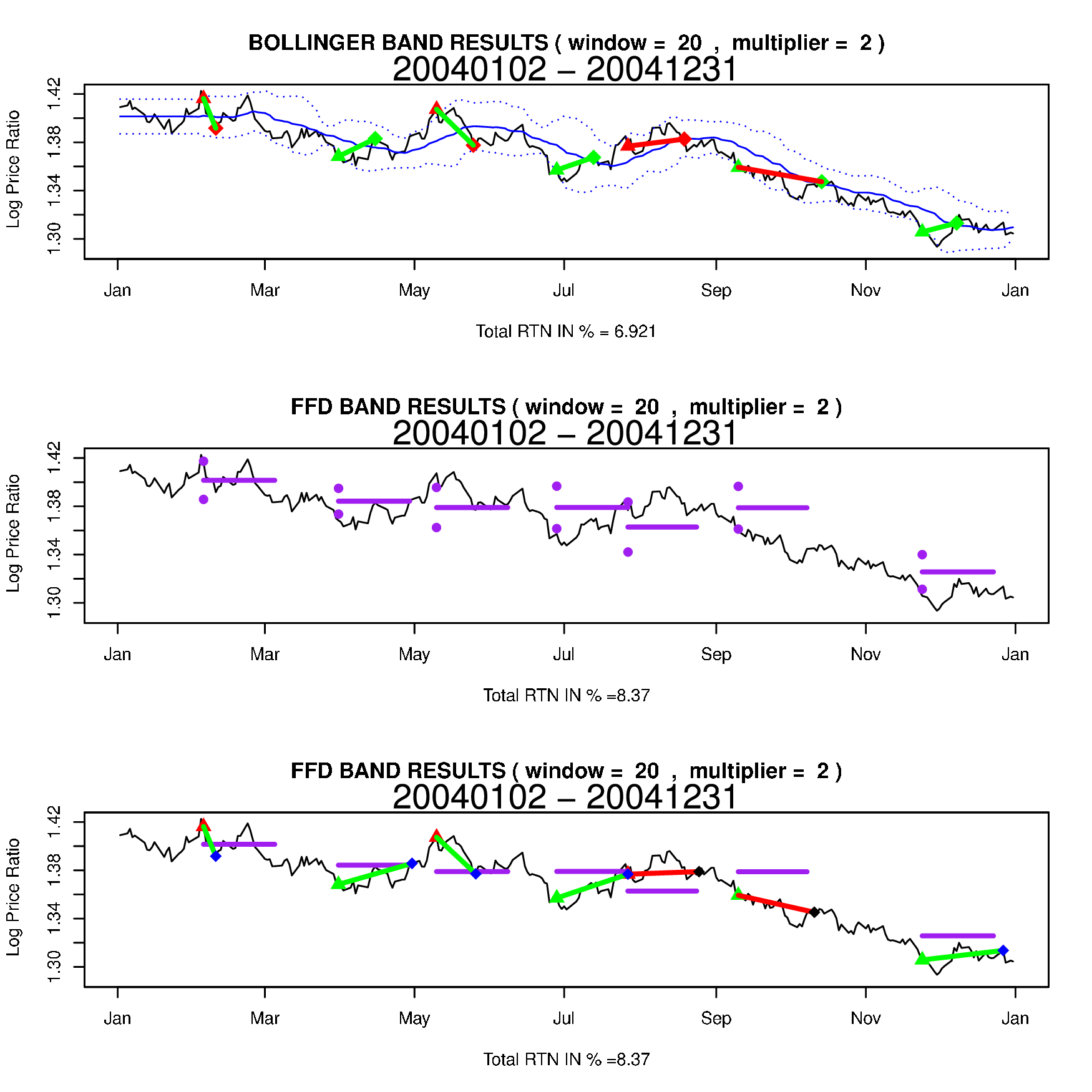}
\end{center}
\caption{ 
The top plot represents the BBPT strategy over 2004 using n = 20 and k = 2 during 2004. The middle and bottom plot illustrate the
the FFDBPT strategy over the same time period. The middle plot excludes the trade line segments for clarity. The purple dots represent $BBUpper$ and 
$BBLower$ at the time of 
entry and the horizontal purple line is the forecast at entry which is constant for n = 20 periods.
The actual trades triggered by the FFMDPT simulation are shown in the bottom FFMDPT plot with a blue triangle at the end
of a line segment indicating that the purple center line was crossed and the black triangle indicating that the maximum 
duration occurred. In the bottom plot, the purple dots at the time of entry are excluded for clarity.  
}\label{FIG:three_plots}
\end{figure}
\clearpage
\section{BBPT versus FFMDPT: Two Examples} 
\large\centering \textbf{Example One}
\begin{figure}[h!] 
\begin{center}
 \includegraphics[scale=0.70]{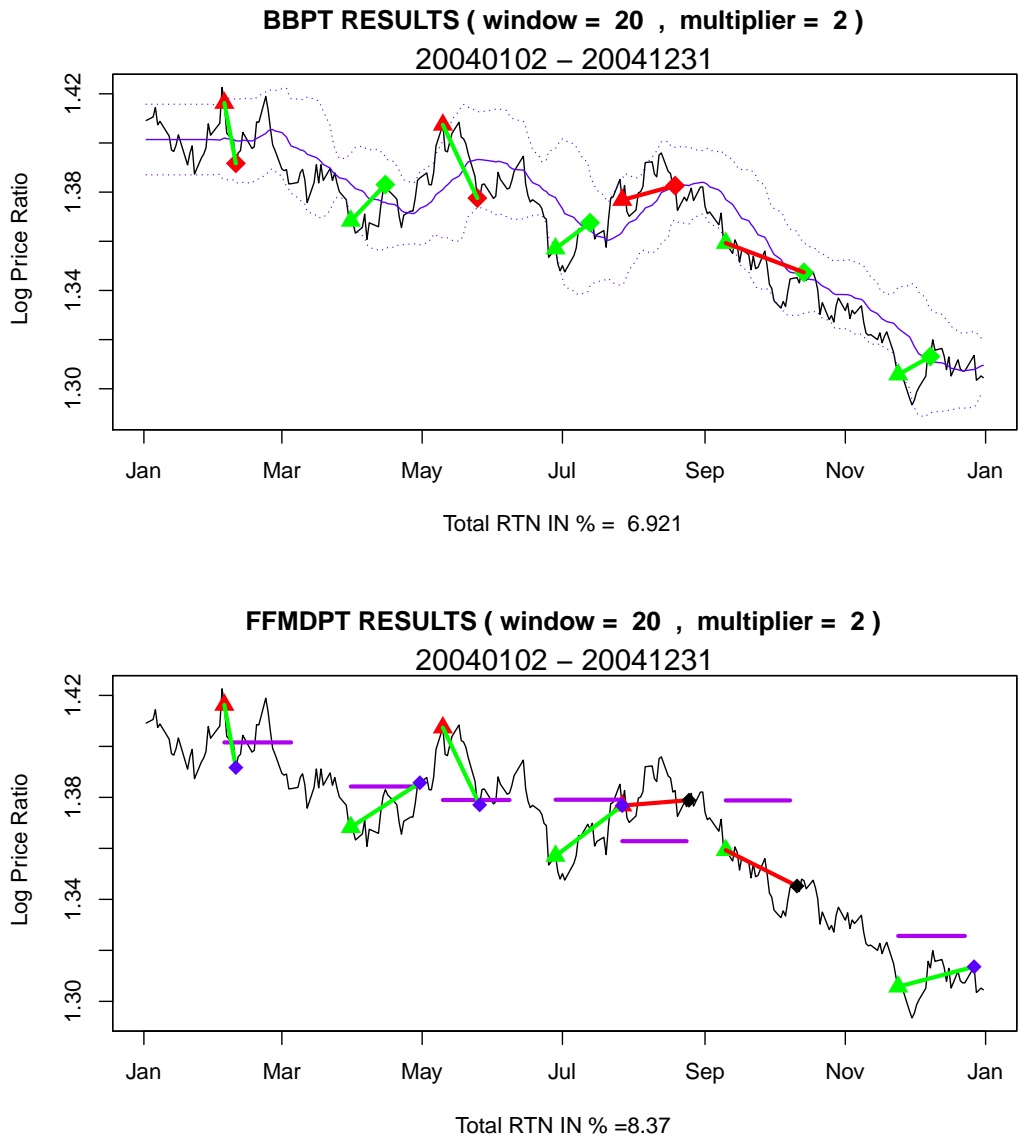}
\end{center}
\caption{ 
A comparison of Bollinger Bands and Fixed Forecast Maximum Duration Bands during 2004 using $n = 20$ and $k = 2$.  The second and third trades in April and June generate 
slightly higher returns in the FFMDPT strategy. Also, the August trade in the BBPT strategy generates a much larger negative return 
compared to the corresponding trade in the FFMDPT strategy.
}\label{FIG:two_plotsa}
\end{figure}
\clearpage
\large\centering \textbf{Example Two}
\begin{figure}[h!] 
\begin{center}
 \includegraphics[scale=0.70]{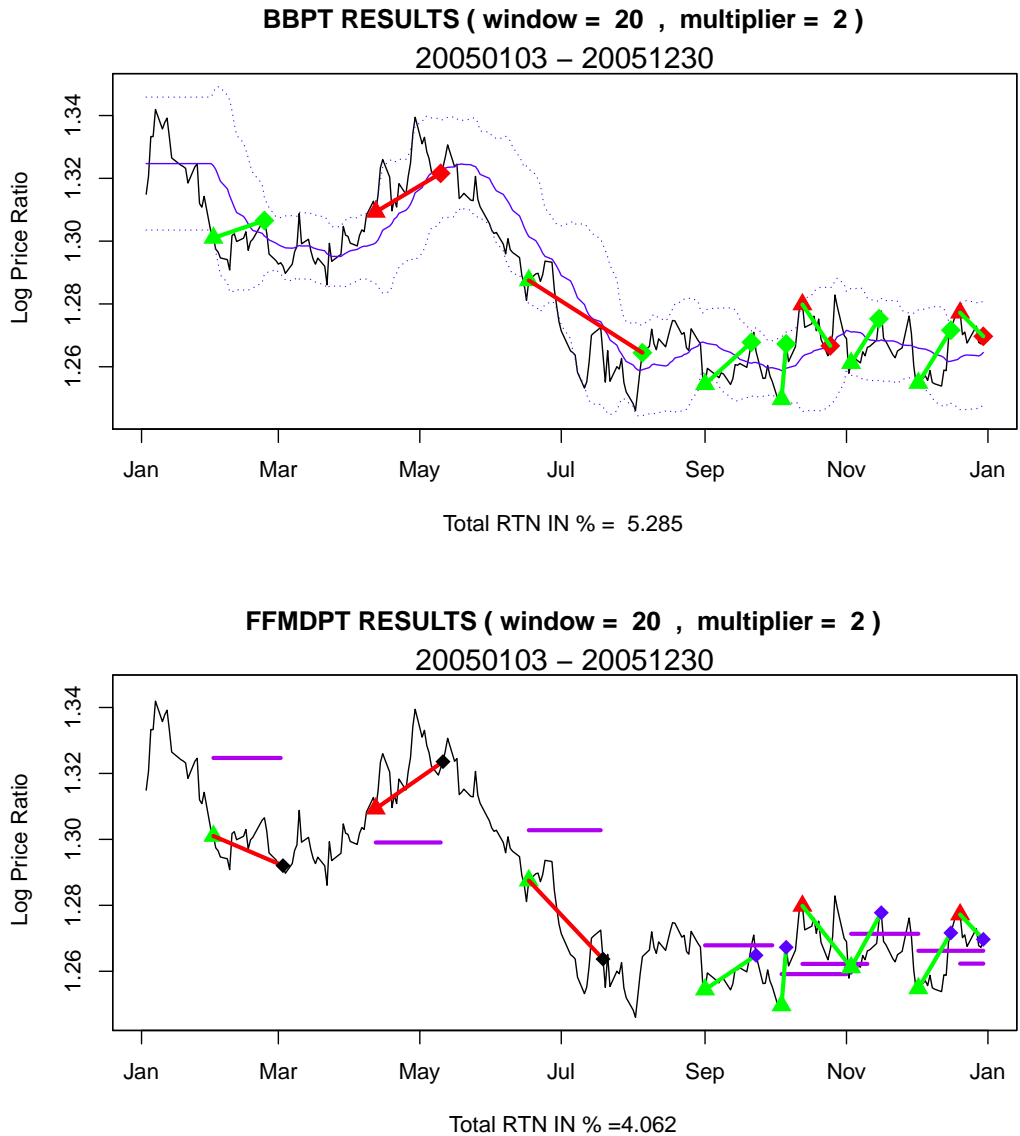}
\end{center}
\caption{ 
A comparison of Bollinger Bands and Fixed Forecast Maximum Duration Bands during 2005 using $n = 20$ and $k = 2$.  
The first trade in early February generates a positive return in the BBPT strategy but a negative return in the FFMDPT strategy. This is because the fixed forecast 
in the FFMDPT strategy is never crossed and extra losses are generated before the exit. 
}\label{FIG:two_plotsb}
\end{figure}
\clearpage
\section{BBPT Versus FFMDPT Optimized Return Comparison} 
\begin{table}[htb]
\caption{Return Comparison of Bollinger Bands Pairs Trading
Simulation and Fixed Forecast Maximum Duration Pairs Trading
Simulation where $k = 1$ with $n$ optimized. \label{TBL:SIGONE}}
\medskip
\centering\begin{tabular}{lcrlcrr}
\hline \hline \hline
&  \multicolumn{2}{c}{BBPT STRATEGY} && \multicolumn{3}{c}{FFMDPT STRATEGY}  \\    
\cline{2-3} \cline{5-7}
Year& $n_\mathrm{BBPT}$  &  $\mathrm{RTN}_\mathrm{BBPT}$ 
   && $n_\mathrm{FFMDPT}$ & $\mathrm{RTN}_\mathrm{FFMDPT}$ & DIFF   \\
\hline 
2003-4  &  13 &  1.491      && 11 &  $-1.930$   &    3.4210   \\ 
2004-5  &  12 &  9.738      && 45 &  10.390     &  $-0.6529$  \\ 
2005-6  &  45 &  4.026      && 50 &  5.698      &  $-1.6720$  \\ 
2006-7  &  14 &  3.056      && 13 &  $-2.024$   &    5.0800    \\ 
2007-8  &  10 &  $-13.33$   && 20 &  $-3.464$   &  $-9.8660$  \\ 
2008-9  &  40 &  3.294      && 24 &  1.088      &    2.2060    \\  
2009-10 &  31 &  $-1.406$   && 28 &  $-3.625$   &    2.2190    \\ 
2010-11 &  18 &  $-0.7325$  && 10 &  1.810      &  $-2.5425$   \\
\hline\hline\hline
\end{tabular}
\end{table}



\begin{table}[htb]
\caption{Return Comparison of Bollinger Bands Pairs Trading
Simulation and Fixed Forecast Maximum Duration Pairs Trading
Simulation where $k = 1.5$ with $n$ optimized. \label{TBL:SIGONEPFIVE}}
\medskip
\centering\begin{tabular}{lcrlcrr}
\hline \hline \hline
&  \multicolumn{2}{c}{BBPT STRATEGY} && \multicolumn{3}{c}{FFMDPT STRATEGY}  \\    
\cline{2-3} \cline{5-7}
Year& $n_\mathrm{BBPT}$  &  $\mathrm{RTN}_\mathrm{BBPT}$ 
   && $n_\mathrm{FFMDPT}$ & $\mathrm{RTN}_\mathrm{FFMDPT}$ & DIFF   \\
\hline 
2003-4  &  13 &  3.104    && 12 &  $-3.898$   &   7.002    \\ 
2004-5  &  11 &  10.290   && 43 &    3.732    &   6.558    \\ 
2005-6  &  14 &  10.410   && 19 &    9.335    &   1.075     \\ 
2006-7  &  15 &   5.586   && 13 &    0.4854   &   5.1006     \\ 
2007-8  &  12 &  $-4.00$  && 10 &  $-2.331$   & $-1.6690$   \\ 
2008-9  &  15 &  $-3.154$ && 20 &    2.735    & $-5.8890$     \\  
2009-10 &  49 &  $-5.018$ && 50 &  $-9.310$   &   4.2920     \\ 
2010-11 &  16 &  0.0075   && 16 &  $-0.4846$  &   0.4921     \\
\hline\hline\hline
\end{tabular}
\end{table}



\begin{table}[htb]
\caption{Return Comparison of Bollinger Bands Pairs Trading
Simulation and Fixed Forecast Maximum Duration Pairs Trading
Simulation where $k = 2$ with $n$ optimized. \label{TBL:SIGONEPTWO}}
\medskip
\centering\begin{tabular}{lcrlcrr}
\hline \hline \hline
&  \multicolumn{2}{c}{BBPT STRATEGY} && \multicolumn{3}{c}{FFMDPT STRATEGY}  \\    
\cline{2-3} \cline{5-7}
Year& $n_\mathrm{BBPT}$  &  $\mathrm{RTN}_\mathrm{BBPT}$ 
   && $n_\mathrm{FFMDPT}$ & $\mathrm{RTN}_\mathrm{FFMDPT}$ & DIFF   \\
\hline 
2003-4  &  32 &  2.162    && 11 &  2.989    &   $-0.8270$  \\ 
2004-5  &  11 &  4.115    && 38 &  2.097    &   2.0180     \\ 
2005-6  &  14 &  9.534    && 16 &  9.301    &   0.2330     \\ 
2006-7  &  14 &  4.728    && 15 &  0.770    &   3.9576     \\ 
2007-8  &  14 &  4.301    && 22 &  1.446    &   2.8550     \\ 
2008-9  &  15 &  0.548   && 11 &  $-2.851$ &    3.3988     \\  
2009-10 &  10 &  8.907    && 14 &  7.069    &   1.8380     \\ 
2010-11 &  14 &  0.802    && 12 &  1.456    &   $-0.6542$  \\
\hline\hline\hline
\end{tabular}
\end{table}



\begin{thebibliography}{9}
\bibitem{JB01}
  John Bollinger,
 \emph{Bollinger On Bollinger Bonds},
   McGraw-Hill, 2002.

\bibitem{JB02}
  URL http://www.bollingerbands.com.

\bibitem{INV2007}
  http://www.investopedia.com/articles/trading/07/bollinger.asp

\bibitem{BK10}
 Butler M, Kazakov D,
 \emph{Particle Swarm Optimization of Bollinger Bands},
 Proceedings of the 7th international conference on Swarm
 intelligence, 2010

\bibitem{NZ2007}
 Ni M, Zhang C,
 \emph{An Efficient Implementation of the Backtesting of Trading Strategies},
 Springer-Verlag, Berlin, pp. 126-131 (2005).

\bibitem{OL10}
   Oleksiv,
   \emph{Statistical Analysis to the Optimization
   of the Technical Analysis Trading Tools: Trading Band
   Strategies}, 
   Ph.D. Thesis, 2010

\bibitem{TSC92}
Chande, Tushar S.,
\emph{Adapting Moving Averages to Market Volatility},
Technical Analysis of Stocks and Commodities, pp. 26-35 (March 1992)

\bibitem{DLT98}
Tilley, D.L., 
\emph{Moving Averages with Resistance and Support},
Technical Analysis of Stocks and Commodities, pp. 62-87 (Sep 1998)

\bibitem{FM73}
 Fama, E.F. and MacBeth, J.D. 
 \emph{Risk, Return and Equilibrium: Empirical Tests}
 Journal of Political Economy, pp, 607-636 (1973).

\bibitem{ZW03}
  Eric Zivot and Jiahui Wang,
  \emph{Modeling Financial Time Series with Splus},
   Springer-Verlag, New York, 2003.

\bibitem{AJ70}
  Jazwinski, A.,
 \emph{Stochastic Processes and Filtering Theory},
   Academic Press (1970).

\bibitem{AH92}
  Andrew C. Harvey,
 \emph{Forecasting structural time series models and the Kalman filter},
   Cambridge University Press, 1992.

\bibitem{JF2005}
  Julian Faraway,
 \emph{Linear Models with R},
   Chapman \& Hall/CRC, 2005.

\bibitem{R11}
  R Development Core Team (2011),
  \emph{R: A Language and Environment for Statistical Computing},
   R Foundation for Statistical Computing, Vienna Austria.
   ISBN 3-900051-0, URL http:://www.R-project.org/.

\bibitem{DH2011}
  Daniel Herlemont,
 \emph{Pairs Trading, Convergence Trading, Cointegration},
 URL http://www.yats.com/doc/cointegration-en.pdf.

\bibitem{RGB59}
  Robert Goodell Brown,
 \emph{Statistical forecasting for inventory control},
  McGraw-Hill, 1959.  

\bibitem{RGB63}
  Robert Goodell Brown,
 \emph{Smoothing, forecasting and prediction of discrete time series}
  Prentice Hall, EngleWood Cliffs, New Jersey, 1959.  

\bibitem{HKOS08}
  Hyndman, Koehler, Ord and Snyder,
  \emph{Forecasting with Exponential Smoothing The State Space Approach},
   Springer-Verlag, Berlin HeidelBerg, 2008.

\bibitem{BJR94}
  George Box, Gwilym Jenkins and Gregory Reinsel,
 \emph{Time Series Analysis, Forecasting and Control},
   Prentice Hall, Englewood Cliffs, N.J, 1994.

\bibitem{JM60}
  John F. Muth
 \emph{Optimal Properties of Exponentially Weighted Forecasts},
   Journal of the American Statistical Association,
   \textbf{55}, No. 290, (1960) 299--306.

\bibitem{CC00}
  Chris Chatfield,
 \emph{Time Series Forecasting},
   Chapman \& Hall/CRC, 2000.

\bibitem{ML2000}
  Mark Leeds,
 \emph{Error Structures for Dynamic Linear Models: Single Source versus Multiple Source}
   Ph.D. thesis, Pennsylvania State University, 2000.

\bibitem{KF60}
  Kalman, R.
 \emph{A New Approach to Linear Filtering and Prediction Problems}
   Transactions ASME Journal of Basic Engineering
   \textbf{82}, 35-45, (1960).

\bibitem{HS76}
  Harrison, P. and Stevens, C. ,
 \emph{Bayesian Forecasting (with discussion)},
   JRSSB,
   \textbf{38}, 205-247, (1976).

\bibitem{MO11}
  Andrew Moore,
  \emph{Cross Validation For Detecting and Preventing Overfitting}.
  http://www.autonlab.org/tutorials/overfit10.pdf

\end{thebibliography}
\end{document}